\newcommand{\oprocendsymbol}{\hbox{$\bullet$}}
\newcommand{\oprocend}{\relax\ifmmode\else\unskip\hfill\fi\oprocendsymbol}
\newtheorem{theorem}{Theorem}[section]
\newtheorem{proposition}[theorem]{Proposition}
\newtheorem{lemma}[theorem]{Lemma}
\newtheorem{corollary}[theorem]{Corollary}
\newtheorem{remark}[theorem]{Remark}
\newcommand{\real}{\ensuremath{\mathbb{R}}}
\newcommand{\complex}{\ensuremath{\mathbb{C}}}
\newcommand{\tA}{\overline{A}}
\newcommand{\Hc}{\mathcal{H}}
\newcommand{\Wc}{\mathcal{W}}
\newcommand{\Xc}{\mathcal{X}}
\newcommand{\tXc}{\overline{\mathcal{X}}}
\newcommand{\Yc}{\mathcal{Y}}
\newcommand{\Zc}{\mathcal{Z}}
\newcommand{\Wci}{\mathcal{W}^{\infty}}
\newcommand{\Wcinv}{\mathcal{W}^{-1}}
\newcommand{\BB}{\vert B\vert \vert B\vert ^{\top}}
\newcommand{\T}{^{\top}}
\newcommand{\lo}{\tilde{\lambda}}
\newcommand{\wo}{\tilde{\omega}}
\newcommand{\fb}{\overline{f}}
\newcommand{\hb}{\overline{h}}
\newcommand{\until}[1]{\{1,\dots,#1\}}
\newcommand{\ontil}[1]{\{0,\dots,#1\}}
\newcommand\tr{\operatorname{tr}}
\newcommand\sub{\operatorname{sub}}
\newcommand\rank{\operatorname{rank}}
\renewcommand\det{\operatorname{det}}
\newcommand{\longthmtitle}[1]{\mbox{}{\textit{(#1).}}}
\title{\LARGE \bf Encoding Impact of Network Modification on
  Controllability via Edge Centrality Matrix\thanks{A preliminary version of this work
    appeared as~\cite{PVC-JC:20-acc} at the American Control
    Conference. This work was supported by ARO Award
    W911NF-18-1-0213.}}
\author{Prasad Vilas Chanekar$^{*}$ \quad Jorge
  Cort{\'e}s% <-this % stops a space
  \thanks{$^{*}$Corresponding author.}  % stops a space
  \thanks{The authors are with the Department of Aerospace and
    Mechanical Engineering, University of California, San Diego, La
    Jolla, CA 92093, USA, {\tt \{pchanekar, cortes\}@ucsd.edu}}%
}
\begin{document}
\maketitle

\begin{abstract}
  This paper develops tools to quantify the importance of agent
  interactions and its impact on global performance metrics for
  networks modeled as linear time-invariant systems.  We consider
  Gramian-based performance metrics and propose a novel notion of edge
  centrality that encodes the first-order variation in the metric with
  respect to the modification of the corresponding edge weight,
  including for those edges not present in the network.  The proposed
  edge centrality matrix (ECM) is additive over the set of inputs,
  i.e., it captures the specific contribution to each edge's
  centrality of the presence of any given actuator.  We provide a full
  characterization of the ECM structure for the class of directed
  stem-bud networks, showing that non-zero entries are only possible
  at specific sub/super-diagonals determined by the network size and
  the length of its bud.  We also provide bounds on the value of the
  trace, trace inverse, and log-det of the Gramian before and after
  single-edge modifications, and on the edge-modification weight to
  ensure the modified network retains stability.  Simulations show the
  utility of the proposed edge centrality notion and validate our
  results.
\end{abstract}

\section{Introduction}\label{introduction-1}

Network control systems find application in a wide range of domains
and activities, including social dynamics, energy systems, intelligent
transportation, and robotics.  In such scenarios, the network must
respond efficiently to the inputs of its authorized users while at the
same time remaining resilient against external interference or
malicious attacks.  As inputs are applied at nodes and propagated
through the interconnections, understanding the role of each node and
each edge on driving the network behavior is key. The complexity of
this problem is higher in the edge case than in the nodal case since
the number of edges scales quadratically with the number of nodes.  To
break down this complexity for agent interactions, this paper focuses
on providing energy-based edge centrality notions that, given a set of
inputs, allow us to quantify the relative impact of individual edges
on the network controllability properties.
% tilt the scales: modification of network interconnections to enhance
% controllability with respect to a set of inputs and diminish
% controllability with respect to a different set of inputs

\subsection*{Literature review}

Centrality notions aim to provide a way to quantify the relative
importance of nodes and edges in a complex network with respect to a
given performance metric, see c.f.~\cite{MOJ:10,MN:18,MB-CK:15}.  The
predominant focus on the role of nodes and the computational easiness
of node-based centrality measures makes these particularly popular in
the characterization of network properties.  Based on the topological
properties of the network, some commonly used nodal centrality
measures include degree~\cite{LCF:79,MN:18}, closeness~\cite{MN:18},
betweeness~\cite{LCF:77}, eigenvector~\cite{PB:87}, Katz~\cite{LK:53},
PageRank (Google)~\cite{SB-LP:12}, percolation~\cite{MP-MP-LH:13},
cross-clique~\cite{MGE-SPB:98}, Freeman~\cite{LCF:79},
topological~\cite{GR-ZZ:13}, Markov~\cite{SW-PS:03}, hub and
authority~\cite{JMK:99}, routing~\cite{SD-YE-RP:10},
subgraph~\cite{EE-JAR-05}, and total communicability~\cite{MB-CK:13}
centralities. In the case of edge centrality, notions include
betweeness centrality~\cite{UB:01-jms}, edge HITS centrality, and edge
total communicability centrality~\cite{FA-MB:15}.  These centrality
measures are based on topological considerations and connectivity
properties of the network, and in general overlook the role of the
dynamics of individual nodes and edges in driving network behavior. As
an example, one might argue that a densely connected node with a very
slow timescale for its dynamics might play a lesser role than a less
densely connected node with a faster dynamics.  

Dynamics-based centrality measures encompass fewer notions, mostly
limited to node
centrality~\cite{KF-NEL:16,GL-CA:19,EN-FP-JC:19-jcn,MS-SB-BB-NM:17},
and are based on performance
metrics~\cite{GY-JR-YL-CL-BL:12,FP-SZ-FB:14,THS-FLC-JL:16} based on
the spectral properties of the controllability Gramian~\cite{CTC:98}.
These energy-based metrics include the trace of the
Gramian~\cite{THS-FLC-JL:16,GL-CA:19}, the trace of its
inverse~\cite{THS-FLC-JL:16}, its
determinant~\cite{PCM-HIW:72,VT-MAR-GJP-AJ:16}, and its minimum
eigenvalue~\cite{FP-SZ-FB:14}. While controllability only captures the
ability to steer the network between any pair of states, such metrics
quantify the optimal energy required to do so, which allows for a more
nuanced accounting of the interplay between topology and dynamics in
determining centrality. In the cases of edges,~\cite{YG-MS-CS-NM:21}
proposes an edge centrality measure with respect to the $\Hc_2-$norm
for networks with continuous-time consensus dynamics having time
delays and structured uncertainties. Works~\cite{SZ-FP:17,SZ-FP:19}
characterize networks with diagonal controllability Gramian and also
propose pathways to design them for prescribed controllability
properties. Our previous work~\cite{PVC-EN-JC:21-tnse} proposes a
notion of Gramian-based edge centrality for directed topologies with
non-negative weights which encodes their role in energy transmission
throughout the network, irrespective of the input location.  Finally,
the work~\cite{GL-CA:20-ifac} studies conditions under which edge
modifications to networks with positive edge weights do not compromise
its stability and derives an upper bound on the allowable perturbation
weight. This work also studies the analytical characterization of
performance metrics such as coherence ($\Hc_2-$norm) and robustness
($\Hc_{\infty}-$norm) after edge perturbation.

\subsection*{Statement of contributions}

We consider networks described by linear discrete-time systems, where
the agent-to-agent connectivity is encoded by the system matrix. In
our treatment, the network adjacency matrix is not required to be
symmetric and the edge weights can have arbitrary sign.  Our first
contribution is the explicit computation of the first-order variation
of a generalized version of the Gramian matrix with respect to the
elements of the adjacency matrix. This result allows us to express the
gradients of various Gramian-based performance metrics in a unified,
computationally efficient, way, which in turn is the basis for the
introduction of the edge centrality matrix (ECM).  The notion of ECM
is tightly coupled with physically realizable energy-based system
properties and provides a measure of the relative importance of each
edge, including those not present in the network, on the performance
metrics. ECM is additive in the input space and therefore allows to
precisely identify the impact of individual network inputs in
determining the importance of each edge.  Our second contribution is
the characterization of the structure of ECM for the family of
directed stem-bud networks. Each of these networks is a combination of
a line network and ring network, and possesses a diagonal
controllability Gramian. We show that non-zero entries of the ECM are
only possible at specific sub/super- diagonals determined by the
network size and the length of its bud.  Such entries correspond to
edges not originally present in the stem-bud network whose addition
will have the greatest impact on performance.  We also establish that
edge-weight modifications in the stem do not affect the stability of
the resulting network.  In our third contribution, we consider
networks modified at a single edge with a given weight and provide
bounds on the value of the trace, trace inverse, and log-det of the
Gramian before and after modification. These bounds allow us to
estimate the global optima of these metrics under single-edge
modification, something we use in our numerical examples to verify the
efficacy of ECM in capturing the most relevant network edges.  We also
determine a sufficient condition on the amount of change in the edge
weight that ensure the network remains stable after modification. This
condition is valid for arbitrary stable networks and weights.
Finally, we illustrate our results in simulation on a family of 6-node
stem-bud networks and $1000$ random Erd\H{o}s-R\'{e}nyi networks.

% %
% \subsection*{Organization}
% The paper is organized as follows. Section~\ref{research-1} describes
% the research motivation for studying the edge centrality measures. In
% Section~\ref{math-1}, we derive edge centrality measures for different
% performance metrics. In Section~\ref{stem-bud-1}, we study stem-bud
% networks and discuss its properties followed by a characterization of
% network properties in Section~\ref{section4}.  Next, in
% Section~\ref{examples-1}, we validate the edge centrality matrices and
% demonstrate its utility through numerical examples. Finally, we gather
% our conclusions and ideas for future work
% in~Section~\ref{sec:conclusions}.

\subsection*{Notation}
We let $\real$ and $\complex$ denote the set of real and complex
numbers, respectively. For $x\in \real$ (resp. $x\in \complex$),
$\vert x\vert$ denotes its absolute value (resp. magnitude). When
applied to a vector or matrix, the operation $| \cdot |$ is taken
elementwise.  For $j \in \until{n}$, $e_j \in \real^n$ is the $j^{th}$
canonical unit vector.  By $(\cdot)^{\top}$ we denote the transpose of
a vector or matrix, and by $\Vert \cdot \Vert$ its Frobenius norm.
Given a square matrix~$A$, we denote its trace, determinant, and
spectral radius by $\tr(A)$, $\det (A)$, and $\rho(A)$ resp., and its
$(i,j)^{th}$ element by $a_{ij}$.  We use $A\succeq(\succ)0$ to denote
that $A$ is a positive semi-definite (definite) matrix, and
$A_1\succeq A_2$ to denote that $A_1-A_2\succeq 0$.  For a symmetric
matrix~$A$, $\lambda_i(A)$ denotes its $i^{th}$ largest eigenvalue and
$\lambda_{\min}\left(A\right)$ its smallest one.  We use $I$ to denote
the identity matrix of appropriate dimensions.
%
% For matrices $X,Y\in \real^{n\times n}$, their Frobenius inner
% product is given by $\langle X, Y \rangle_F =
% \tr\left(X^{\top}Y\right)$. This induces the Frobenius norm
% $\Vert X \Vert_F = \sqrt{\langle X, X \rangle_F }$, which we use to
% express $\langle X, Y \rangle_F =\Vert X\Vert_F\Vert Y\Vert_F\cos
% \phi$, where
% $\phi=\angle\left(\text{vec}\left(X\right),\text{vec}\left(Y\right)\right)$
% is the angle between the vectors $\text{vec}\left(X\right)$ and
% $\text{vec}\left(Y\right)$.  $\Vert \left(\cdot\right)\Vert_0$ is
% the $0-$norm of the matrix/vector $\left(\cdot\right)$ i.e., the
% number of non-zero elements in $\left(\cdot\right)$. $\vert y \vert$
% is the absolute value of $y\in \real$.  {\color{red} Let
% $\mathcal{X}$ be an edge set then we denote $\left(i,j\right)$ as
% its element such that it is an edge directed from node $i$ to node
% $j$ i.e., $i\longrightarrow j$. If we have $A$ as the weighted
% adjacency matrix of a network graph then it is undirected if
% $A=A^{\top}$. }
%

\section{Problem Statement}\label{research-1}
Consider a network of $n$ nodes represented by the triplet
$\mathcal{G}_A = \left(\mathcal{V},\mathcal{E}_A,w_A\right)$, where
$\mathcal{V}=\lbrace 1,2,\ldots,n\rbrace$ is the node set,
$\mathcal{E}_A=\set{\left(i,j\right) \mid i\in \mathcal{V}, j\in
  \mathcal{V}}$ is the edge set, and $w_A:\mathcal{E}_A\mapsto \real$
is a weight function. The pair $(i,j)$ denotes an edge directed from
node $i$ to node $j$, i.e., $i\longrightarrow j$. The weighted
adjacency matrix $A=\left(a_{ji}\right)\in\real^{n\times n}$ is
defined by $a_{ji} = w_A (i,j)\neq0$ if $\left(i,j\right)\in
\mathcal{E}_A$, else $a_{ji}=0$. The network follows the discrete
linear time-invariant dynamics,
\begin{align}\label{peq-2.0.1}
  % \notag
  x\left(t+1\right)&=Ax\left(t\right)+Bu\left(t\right), \quad   t \in \lbrace0 ,\ldots,T-1\rbrace,
  % x\left(t+1\right)&=Ax\left(t\right)+B_du_d\left(t\right)+B_au_a\left(t\right),
  % \\
  % t & \in \lbrace0 ,\ldots,T-1\rbrace,
\end{align}
where $T>0$ is a finite time horizon, $x\in \real^n$ and $u\in
\real^m$ are the state and the input vectors respectively.  $B
=\begin{pmatrix} b_1 & b_2 & \cdots & b_i & \cdots & b_m\end{pmatrix}
\in \real^{n\times m}$ denotes the input matrix.
% Here $b_i\in \{0, 1\}^n$ are binary vectors with $0$'s everywhere
% except at one entry, signifying the presence of an input at that
% node.  Note that here $m$ is also the number of actuators in the
% network.  We assume that the actuator locations are known
We assume $B$ is known and the pair $\left(A,B\right)$ is controllable
for $T=n$. The control input $u$ might correspond to a known input
specified by the designer, an unknown disturbance, or a malicious
input. Note that $A$ is stable if $\rho(A)<1$.

The controllability of \eqref{peq-2.0.1} refers to the ability to
steer the state from an initial condition $x\left(0\right)=x_0$ to any
arbitrary final condition $x\left(T\right)=x_T$ in $T$ steps by
appropriately selecting the control input sequence
$\{u\left(0\right),u\left(1\right),\ldots u\left(T-1\right)\}$.  The
controllability of~\eqref{peq-2.0.1} can be assessed in a number of
ways~\cite{CTC:98}. Here, we employ the controllability Gramian,
\begin{align}\label{peq-2.0.2}
  \Wc = \sum_{t=0}^{T-1}A^tBB^\top{A^t}^{\top}.
\end{align}
% Here $B$ can be $B_d$ or $B_a$.
The system $\left(A,B\right)$ is controllable in $T$ steps if the Gramian 
$\Wc$ is positive definite.
%For a Schur stable matrix $A$ (i.e., the spectral radius of $A$ is
%less than $1$), we have that $\lim\limits_{T\rightarrow \infty}\Wc(T)
%= \Wc^{\infty}$ exponentially fast, where $\Wc^{\infty}$ is the
%solution of the discrete-time Lyapunov equation~\cite{CTC:98},
%\begin{align}\label{peq-2.0.2a}
%  A\Wci A^{\top}-\Wci+BB^{\top} = 0.
%\end{align}

Controllability is a qualitative property that, per se, does not
capture the input's energy effort required to actually steer the
system state. To address this point, one can employ controllability
metrics,
cf.~\cite{FP-SZ-FB:14,THS-FLC-JL:16,GY-GT-BB-JJS-YYL-ALB:15,SZ-FP:17},
based on the spectral properties of the Gramian such as $\tr(\Wc)$,
$-\tr(\Wc^{-1})$, $\det(\Wc)$, $\log\det(\Wc)$, and
$\lambda_{\min}(\Wc)$ to measure the system performance.  The
interpretation of all these metrics stems from the fact that the
minimum energy required to take a dynamical system from the $0-$state
to a desired final state $x_f$ in time $T$ is
$x_f^{\top}{\mathcal{W}}^{-1}x_f$.  For instance, $\tr(\Wcinv)$ has
the interpretation of the average control energy when $x_f$ is a
zero-mean, unit-variance random target state.  If $v_i$ is the
eigenvector corresponding to the $i^{th}$ eigenvalue $\lambda_i$ of
$\Wc$ then, $v_i^{\top} {\mathcal{W}}^{-1}v_i=\frac{1}{\lambda_i}$ is
the \textit{eigen-energy}, i.e., the minimum energy required to move
the system in the direction $v_i$.  For the smallest eigenvalue,
$\frac{1}{\lambda_{\min}}$ represents the energy required to steer the
system in the most difficult direction.
% Thus when value of $\lambda_{\min}$ is large then it is easier to
% steer the system in the state space.  Now if $x_i=e_i$, then
% $x_i^{\top}{\mathcal{W}}^{-1}x_i = [{\mathcal{W}}^{-1}]_{ii}$ is the
% \textit{nodal-energy} of node~$i$, corresponding to the energy
% required to drive the state of node $i$ from $0$ to $1$ while
% leaving the final states of the other nodes at $0$.
If one considers $y=x$ as the output, then $\tr(\Wc)$ is the square of
the $\mathcal{H}_2$-norm of the system and is related to the average
controllability in all directions of the state space.  This also
corresponds to the energy in the output response to an unit impulse
input and it is also the expected root mean square value of the output
response to a white noise excitation input~\cite{KZ-JD-KG:95}.
% Thus if the nodal energy ($\Wcinv_{ii}$) is high then the nodal
% state is more robust against input disturbances and we get small
% output response ($\Wc_{ii}$). Note that $\tr(\Wc)$ and $\tr(\Wcinv)$
% are related through $\tr({\Wcinv})\geq
% \frac{n^2}{\tr(\Wc)}$. However maximizing $\tr(\Wc)$ which is
% computationally easier than minimizing $\tr(\Wcinv)$ does not always
% lead to a controllable system~\cite{FP-SZ-FB:14}.
The metrics $\det(\Wc)$ and $\log\det(\Wc)$ are related to the volume
of the ellipsoid containing the set of states that can be reached by
one unit or less energy~\cite{THS-FLC-JL:16}.

Our goal is to study the effect of changes in the network structure on
its controllability properties while maintaining  the input
structure intact.  By changes in network structure, we mean modifications of
the weights of existing edges or addition of new edges of suitable
weight. This has applications in practical problems such as mitigating
the effect of malicious attacks at input nodes or network edges, or
suppressing output response at particular nodes caused by
malicious inputs.  Our analysis is motivated by two complementary
situations:
\begin{enumerate}
\item scenarios where we are interested in making the network more
  easily controllable with respect to the control input nodes,
\item scenarios where we seek to make a network more difficult to
  control with respect to malicious input nodes.
\end{enumerate}
It is possible that both the scenarios occur concurrently, where the
set of input nodes is a combination of known control input nodes and
malicious ones.  Mathematically, these scenarios can be formalized as
optimization problems where the objective function corresponds to one
of the Gramian-based performance metrics described above and the
decision variables correspond to network edge selection along with the
corresponding weight allocation, with suitable budget constraints.
Such optimization problems are nonlinear and non-convex, and hence
computationally challenging.  As the size of the network increases,
exhaustive search becomes impractical. Our aim here is to develop
formal tools to characterize the importance of each network edge and
its impact on the system performance metrics. Such tools could
potentially be paired with known optimization procedures (e.g., by
reducing the search space to the most significant edges) to address
the scenarios described above.

\section{Edge Centrality as a Measure of First-Order Impact on Network
  Performance}\label{math-1}
Here we study the effect of edge weight perturbation on the
performance metrics described in Section~\ref{research-1}. We start by
deriving an expression for the gradient of these metrics with respect
to edge weights. We then build on this result to introduce a novel
notion of edge centrality that aims to capture the importance of
individual edges in determining network behavior.

% In this section we study different properties of networked systems
% related to perturbation of edge weights, stability of perturbed
% network and bounds on controllability metrics. First, we introduce
% the novel notion of centrality for network edges based on the
% first-order changes that the modification of edge weights induces on
% the controllability metrics introduced in Section~\ref{research-1}.
% Next we derive bounds on the edge weight perturbation for single
% edge addition which ensures stability of the resultant modified
% network. Finally we derive novel upper bounds for different
% controllability metrics. These three `seemingly disparate' entities
% will be later (in examples) used together to the estimate value of
% the performance metric for a stable network whose single edge weight
% is modified such that the network stability after modification is
% maintained.

\subsection{Network First-Order Perturbation
  Analysis}\label{first-order}
To analyze the effect of perturbing edge weights on the performance
metrics, we start by studying the first-order variation, i.e., the
gradient of a generalized version of the Gramian matrix with respect
to the elements of the adjacency matrix.

\begin{theorem}\longthmtitle{Gradient of scalar function of
    generalized Gramian with respect to edge weights}\label{thm1}
  Consider $A\in \real^{n\times n}$, a symmetric matrix $P\in
  \real^{n\times n}$ and $H = \begin{pmatrix} h_1 \ldots h_k \ldots
    h_m
  \end{pmatrix} \in \real^{n\times m}$. For $\phi\left(A\right)=
  \sum_{t=0}^{T-1}A^tHH^{\top}{A^t}^{\top}$ then
  \begin{align}\label{peq-thm1-1}
    \frac{\partial }{\partial
      a_{ji}}\tr\left(P\phi\left(A\right)\right) &=
    2\sum_{k=1}^{m}\sum_{t=1}^{T-1}
    \tr \left(\overline{C}_k^{\left(t\right)}
      \overline{O}_k^{\left(t\right)}e_ie_j^{\top}\right)
    \notag
    \\
    &= 2\sum_{t=1}^{T-1}
    \tr \left(\overline{C}_H^{\left(t\right)}
      \overline{O}_H^{\left(t\right)}e_ie_j^{\top}\right),
  \end{align}
  where
  \begin{subequations}\label{peq-thm1-2}
    \begin{align}
       \overline{C}_k^{\left(t\right)} &= \begin{pmatrix}
        {A^{t-1}}^{\top}PA^th_k & \cdots & A^{\top}PA^th_k & PA^th_k
      \end{pmatrix},
      \\
      \overline{O}_k^{\left(t\right)}&=\begin{pmatrix}
        h_k & Ah_k & \cdots & A^{t-2}h_k & A^{t-1}h_k
      \end{pmatrix}^{\top},\\
       \overline{C}_H^{\left(t\right)} &= \begin{pmatrix}
        {A^{t-1}}^{\top}PA^tH & \cdots & A^{\top}PA^tH & PA^tH
      \end{pmatrix},
      \\
      \overline{O}_H^{\left(t\right)}&=\begin{pmatrix} H & AH & \cdots
        & A^{t-2}H & A^{t-1}H
      \end{pmatrix}^{\top}.
  \end{align}
\end{subequations}
\end{theorem}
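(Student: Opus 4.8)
The plan is to compute the derivative by first differentiating the power $A^t$ with respect to a single entry $a_{ji}$, then assembling the pieces. First I would record the elementary fact that $\frac{\partial A}{\partial a_{ji}} = e_j e_i^\top$, so by the product rule applied to $A^t = A\cdot A\cdots A$ ($t$ factors),
\begin{align*}
  \frac{\partial A^t}{\partial a_{ji}}
  = \sum_{r=0}^{t-1} A^{r}\, e_j e_i^\top\, A^{t-1-r}.
\end{align*}
Plugging this into $\phi(A) = \sum_{t=0}^{T-1} A^t H H^\top {A^t}^\top$ and using the product rule once more (the $t=0$ term contributes nothing since $A^0=I$ is constant), I get
\begin{align*}
  \frac{\partial \phi(A)}{\partial a_{ji}}
  = \sum_{t=1}^{T-1}\sum_{r=0}^{t-1}
  \Bigl( A^{r} e_j e_i^\top A^{t-1-r} H H^\top {A^t}^\top
  + A^{t} H H^\top {A^{t-1-r}}^\top e_i e_j^\top {A^{r}}^\top \Bigr).
\end{align*}

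Next I would multiply by $P$ on the left, take the trace, and exploit linearity of the trace together with its cyclic invariance. The second group of terms is the transpose of the first up to the symmetry $P=P^\top$, so after applying $\tr(M)=\tr(M^\top)$ the two groups combine into a single sum with a factor of $2$. Then, reindexing the inner sum (replacing $r$ by $t-1-r$ where convenient) and cyclically rotating $e_i e_j^\top$ to the end, each term takes the form $2\,\tr\bigl( {A^{t-1-r}}^\top P A^t H H^\top {A^r}^\top\, e_i e_j^\top \bigr)$. Summing over $r=0,\dots,t-1$, the left factors ${A^{t-1-r}}^\top P A^t H$ stacked over $r$ are exactly the block columns of $\overline{O}$-type... more precisely, writing $H=(h_1\ \cdots\ h_m)$ and separating the outer product $HH^\top=\sum_k h_k h_k^\top$, the sum over $r$ becomes a product of the block-row matrix $\overline{C}_k^{(t)}$ with the block-column matrix $\overline{O}_k^{(t)}$ as defined in \eqref{peq-thm1-2}, yielding the first line of \eqref{peq-thm1-1}. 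The second line then follows by noting $\sum_{k=1}^m \overline{C}_k^{(t)} \overline{O}_k^{(t)} = \overline{C}_H^{(t)} \overline{O}_H^{(t)}$, which is just the statement that summing the rank-one pieces $h_k h_k^\top$ reassembles $HH^\top$, equivalently that a block inner product distributes over the columns of $H$.

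The main obstacle I anticipate is purely bookkeeping: getting the index alignment exactly right so that the block structures $\overline{C}_k^{(t)}$ (whose $s$-th block is ${A^{s-1}}^\top P A^t h_k$, running $s=t$ down to $s=1$, i.e. powers $t-1$ down to $0$) and $\overline{O}_k^{(t)}$ (whose blocks are $h_k, Ah_k,\dots,A^{t-1}h_k$) pair up correctly under the single summation index $r$, and verifying that the ordering/transposition conventions in \eqref{peq-thm1-2} match what the cyclic-trace manipulation produces. A secondary care point is confirming the factor of $2$ is not double-counting — this requires checking that the two transposed families of terms are genuinely distinct summands (not self-transpose) except possibly on a measure-zero diagonal, which is harmless under the trace. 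No convergence or regularity issues arise since $T$ is finite and everything is polynomial in the entries of $A$.
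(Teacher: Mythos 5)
Your proposal is correct and follows essentially the same route as the paper: the paper's proof simply cites its earlier result for $\frac{\partial\phi}{\partial a_{ji}}$ (which is exactly your product-rule expansion of $\frac{\partial A^t}{\partial a_{ji}}=\sum_{r=0}^{t-1}A^re_je_i^\top A^{t-1-r}$, packaged as $C_j^{(t)}O_i^{(t)}$), and then performs the same trace/cyclic-permutation and symmetry-of-$P$ manipulations to get the factor of $2$ and the block products $\overline{C}_k^{(t)}\overline{O}_k^{(t)}$. Your worry about double-counting is moot, since $\tr(PY)+\tr(PY^\top)=2\tr(PY)$ holds identically for symmetric $P$ regardless of whether the two families of terms coincide.
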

\begin{proof}
  The derivative of $\phi$ with respect to an edge weight can be
  expressed as 
  \begin{align*}%\label{peq-thm1-3}
    \frac{\partial \phi}{\partial a_{ji}} &=
    \sum_{t=1}^{T-1}\left[C_j^{\left(t\right)}O_i^{\left(t\right)}HH^{\top}{A^t}^{\top}
      +
      A^tHH^{\top}\left(C_j^{\left(t\right)}O_i^{\left(t\right)}\right)^{\top}\right],
  \end{align*}
  where we have used \cite[Theorem 3.1]{PVC-EN-JC:21-tnse} and the notation
  \begin{align}
  	\label{peq-thm1-3a}
    C_j^{\left(t\right)} & = \begin{pmatrix} A^{t-1}e_j & A^{t-2}e_j &
      \cdots & Ae_j & e_j
    \end{pmatrix},
    \\
    O_i^{\left(t\right)}&=\begin{pmatrix} e_i & A^{\top}e_i & \cdots &
      {A^{t-2}}^{\top}e_i & {A^{t-1}}^{\top}e_i
    \end{pmatrix}^{\top}.
  \end{align}
  Now using $HH^{\top}=\sum_{k=1}^{m}h_kh_k^{\top}$ and
  $\frac{\partial}{\partial a_{ji}}\tr\left(P\phi\left(A\right)\right)
  = \tr\left(P\frac{\partial\phi\left(A\right)}{\partial
      a_{ji}}\right)$ gives,
  \begin{align*}%\label{peq-thm1-5}
    \notag\frac{\partial }{\partial
      a_{ji}}\tr\left(P\phi\left(A\right)\right) &=
    \tr\Bigg\{\sum_{k=1}^{m}\sum_{t=1}^{T-1}
    \left[PC_j^{\left(t\right)}
      O_i^{\left(t\right)}h_k{h_k}^{\top}{A^t}^{\top}\right.
    \\
    &\hspace*{0.5cm}\left.+PA^th_k{h_k}^{\top}
      \left(C_j^{\left(t\right)}O_i^{\left(t\right)}\right)^{\top}\right]\Bigg\}.
  \end{align*}
  Using cyclic permutations of matrices and $\tr
  \left(XY^{\top}\right) = \tr\left(X^{\top}Y\right)$,
  \begin{align}\label{peq-thm1-6}
    \frac{\partial}{\partial
      a_{ji}}\tr\left(P\phi\left(A\right)\right) &=
    2\sum_{k=1}^{m}\sum_{t=1}^{T-1}
    \tr\left(O_i^{\left(t\right)}
      h_kh_k^{\top}{A^t}^{\top}PC_j^{\left(t\right)}\right).
  \end{align}
  Consider
  \begin{align}\label{peq-thm1-7}
 O_i^{\left(t\right)}h_k&=\begin{pmatrix} h_k^{\top}e_i &
      h_k^{\top}A^{\top}e_i & \cdots & h_k^{\top} {A^{t-1}}^{\top}e_i
    \end{pmatrix}^{\top}
    \\
    \notag &=\begin{pmatrix} e_i^{\top}h_k & e_i^{\top}Ah_k & \cdots &
      e_i^{\top} Ah_k
    \end{pmatrix}^{\top}
    =\overline{O}_k^{\left(t\right)}e_i.
  \end{align}
  Similarly using $P=P^{\top}$,
  \begin{align}\label{peq-thm1-8}
    h_k^{\top}{A^t}^{\top}PC_j^{\left(t\right)} &= \begin{pmatrix}
      h_k^{\top}{A^t}^{\top}PA^{t-1}e_j & \cdots &
      h_k^{\top}{A^t}^{\top}Pe_j
    \end{pmatrix}
    \\
    &= \begin{pmatrix} e_j^{\top}{A^{t-1}}^{\top}PA^th_k & \cdots &
      e_j^{\top}PA^th_k
    \end{pmatrix}
    \notag =e_j^{\top}\overline{C}_k^{\left(t\right)}.
  \end{align}
  Using \eqref{peq-thm1-7} and \eqref{peq-thm1-8} in
  \eqref{peq-thm1-6} yields
  \begin{align*}
    % \label{peq-thm1-9}
    \frac{\partial}{\partial
      a_{ji}}\tr\left(P\phi\left(A\right)\right) &=
    2\sum_{k=1}^{m}\sum_{t=1}^{T-1}
    \tr\left(\overline{O}_k^{\left(t\right)}e_ie_j^{\top}
      \overline{C}_k^{\left(t\right)}\right),
    \\
    &=2\sum_{k=1}^{m}\sum_{t=1}^{T-1} \tr
    \left(\overline{C}_k^{\left(t\right)}\overline{O}_k^{\left(t\right)}e_ie_j^{\top}\right).
  \end{align*}
  The result now follows from the expressions in~\eqref{peq-thm1-2}.
\end{proof}

One can interpret Theorem \ref{thm1} as an extension of the result
in~\cite[Theorem 3.1]{PVC-EN-JC:21-tnse} for the gradient of the
Gramian with respect to an edge weight.  The presence of the arbitrary
symmetric matrix $P$ in~\eqref{peq-thm1-1} provides greater
versatility, and in fact leads to a unified way of computing the
gradients of the various Gramian-based performance metrics, as we
show next.

% The result \cite[Theorem 3.1]{PVC-EN-JC:21-tnse} computes the
% derivative of a Gramian matrix function with respect to the scalar
% edge weight resulting in a matrix.  While the result in Theorem
% \ref{thm1} above computes the derivative of a scalar trace function
% (which is the trace of a matrix function `containing' the Gramian and
% a constant matrix $P$) with respect to the scalar edge weight
% resulting in a scalar.

% Note that the trace of $\frac{\partial \Wc}{\partial a_{ji}}$
% in~\cite[Theorem 3.1]{PVC-EN-JC:21-tnse} is the $(j,i)^{th}$ element
% of the matrix
% $2\sum_{t=1}^{T-1}\overline{C}_H^{\left(t\right)}\overline{O}_H^{\left(t\right)}$
% in Theorem~\ref{thm1} for $P=I$.

\begin{corollary}\longthmtitle{Gradient of performance metrics}\label{thm2}
  Consider the network dynamics~\eqref{peq-2.0.1}.
  % with
  % $BB^{\top}=\sum_{k=1}^{m}b_kb_k^{\top}$ and the controllability
  % Gramian defined in \eqref{peq-2.0.2}.
  For $P\in \real^{n\times n}$ symmetric, let
  \begin{align}
    \label{peq-thm2-1}
    \Theta_P=\sum_{k=1}^{m}\sum_{t=1}^{T-1}\overline{C}_k^{\left(t\right)}
    \overline{O}_k^{\left(t\right)}
  \end{align} 
  with $\overline{C}_k^{\left(t\right)}$,
  $\overline{O}_k^{\left(t\right)}$ defined in Theorem~\ref{thm1} with
  $H=B$.  Then,
  \begin{enumerate}
  \item $\frac{\partial }{\partial
      a_{ji}}\tr\left(\mathcal{W}\right)=2\Theta_I\left(j,i\right)$;
    % for $P=I$;
  \item $\frac{\partial }{\partial a_{ji}}\log\det\left(
      \mathcal{W}\right)=2\Theta_{\Wc^{-1}}\left(j,i\right)$;
    % for $P={\mathcal{W}}^{-1}$;
  \item $\frac{\partial }{\partial a_{ji}}\{-\tr\left(
      {\mathcal{W}}^{-1}\right)\}=2\Theta_{\Wc^{-2}}\left(j,i\right)$;
    % for $P={\mathcal{W}}^{-2}$;
  \item $\frac{\partial }{\partial a_{ji}}\lambda_i\left(
      \mathcal{W}\right)=2\Theta_{V_i}\left(j,i\right)$, where $V_i
    =v_iv_i^{\top}$ and $v_i$ is the eigenvector of $\mathcal{W}$
    corresponding to the $i^{th}$ largest eigenvalue
    $\lambda_i\left(\mathcal{W}\right)$.
%  \item $\frac{\partial }{\partial a_{ji}}\lambda_{\min}\left(
%      \mathcal{W}\right)=2\Theta_{V_{\min}}\left(j,i\right)$ for
%    $P=V_{\min}$.
%    $V_{\min}=v_{\min}v_{\min}^{\top}$ where $v_{\min}$ is the
%    eigenvector of $\mathcal{W}$ corresponding to the eigenvalue
%    $\lambda_{\min}\left(\mathcal{W}\right)$.
  \end{enumerate}
\end{corollary}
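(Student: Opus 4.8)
The plan is to apply Theorem \ref{thm1} with $H = B$ and a suitably chosen symmetric matrix $P$ for each performance metric, recognizing that the right-hand side of \eqref{peq-thm1-1} is exactly $2\,\tr(\Theta_P e_i e_j^\top) = 2\,\Theta_P(j,i)$, where $\Theta_P$ is the matrix defined in \eqref{peq-thm2-1}. So the corollary reduces to identifying, for each metric $f(\Wc)$, the correct $P$ such that $\frac{\partial f(\Wc)}{\partial a_{ji}} = \frac{\partial}{\partial a_{ji}}\tr(P\,\phi(A))$ with $\phi = \Wc$ (since $H=B$ makes $\phi(A) = \Wc$). The key identity I will use repeatedly is the chain rule for matrix functions: for any differentiable scalar function $f$ of a symmetric matrix argument, $\frac{\partial}{\partial a_{ji}} f(\Wc) = \tr\!\big(\nabla f(\Wc)^\top \frac{\partial \Wc}{\partial a_{ji}}\big)$, and since $\Wc$ and its relevant gradient matrices are symmetric, this equals $\tr\!\big(\nabla f(\Wc)\,\frac{\partial \Wc}{\partial a_{ji}}\big)$. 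Matching this against $\tr(P\,\frac{\partial \phi}{\partial a_{ji}})$ from Theorem \ref{thm1} shows we should take $P = \nabla f(\Wc)$.

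The four cases then follow from standard matrix-calculus gradients. For (i), $f(\Wc) = \tr(\Wc)$ gives $\nabla f = I$, so $P = I$ and the claim is immediate. For (ii), $f(\Wc) = \log\det(\Wc)$ gives $\nabla f = \Wc^{-1}$ (using that $\Wc \succ 0$, guaranteed by controllability), so $P = \Wc^{-1}$, which is symmetric. For (iii), $f(\Wc) = -\tr(\Wc^{-1})$ gives $\nabla f = \Wc^{-2}$ via $\frac{\partial}{\partial X}\tr(X^{-1}) = -X^{-2}$ for symmetric $X$; hence $P = \Wc^{-2} \succ 0$. For (iv), $f(\Wc) = \lambda_i(\Wc)$, the derivative of a simple eigenvalue is $\nabla \lambda_i = v_i v_i^\top = V_i$ (assuming $\lambda_i$ is simple, which is the standard regularity condition for eigenvalue differentiability); this $V_i$ is symmetric, so $P = V_i$ and the result follows. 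In each case, substituting the chosen $P$ into Theorem \ref{thm1} and reading off the coefficient of $e_i e_j^\top$ under the trace yields $2\,\Theta_P(j,i)$ with $\Theta_P$ as in \eqref{peq-thm2-1}.

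The main obstacle — really the only nontrivial point — is justifying the eigenvalue derivative formula in case (iv), since $\lambda_i(\Wc)$ is differentiable as a function of $A$ only where the $i$th eigenvalue is simple; one should either state this as a standing assumption or note that it holds generically. The other three cases are routine once the chain-rule reduction to Theorem \ref{thm1} is set up, though one should be slightly careful that $\tr(X Y^\top) = \tr(X^\top Y)$ together with the symmetry of both $P$ and $\partial\Wc/\partial a_{ji}$ is what lets us drop transposes when matching the two trace expressions. A minor bookkeeping check is that $H = B$ indeed makes $\phi(A)$ in Theorem \ref{thm1} coincide with the controllability Gramian $\Wc$ of \eqref{peq-2.0.2}, which is immediate from comparing the two summation formulas.
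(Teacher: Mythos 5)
Your proposal is correct and follows essentially the same route as the paper: apply Theorem~\ref{thm1} with $H=B$ and the choices $P=I$, $\Wc^{-1}$, $\Wc^{-2}$, and $v_iv_i^{\top}$ respectively, obtained by the matrix chain rule $\frac{\partial f(\Wc)}{\partial a_{ji}}=\tr\bigl(\nabla f(\Wc)\,\frac{\partial \Wc}{\partial a_{ji}}\bigr)$ with the gradient matrix treated as constant. Your explicit caveat that case~(iv) requires $\lambda_i$ to be a simple eigenvalue is a reasonable regularity remark that the paper leaves implicit, but it does not change the argument.
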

\begin{proof}
  We prove the result by making repeated use of Theorem~\ref{thm1}
  with $\phi$ defined by the choice $H=B$ and considering different
  $P$ as necessary.  We make use of the following matrix
  properties. For $X$, $Y$, $Z$,
  \begin{align*} % \label{peq-thm2-2}
    \tr (Xe_ie_j^{\top}) &= X\left(j,i\right)
    \\
    \tr\left(XYZ\right) & = \tr\left(ZXY\right) =\tr\left(YZX\right) ,
    \\
    \frac{\partial}{\partial y_{ji}}\tr\left(XY\right) &=
    \tr\left(X\frac{\partial Y}{\partial y_{ji}}\right).
  \end{align*}
  Now, the proof of each item follows by combining these facts with
  the following choices. Case \emph{(i)} follows readily by
  considering $P=I$.  For case \emph{(ii)}, from \cite[Appendix
  A]{SB-LV:09},
  \begin{align*}
    \frac{\partial }{\partial a_{ji}} \log\det\left(\mathcal{W}\right)
    = \tr\left({\mathcal{W}}^{-1}\frac{\partial\mathcal{W}}{\partial
        a_{ji}}\right) .
  \end{align*}
  This case then follows by looking at $P={\mathcal{W}}^{-1}$ as a
  constant matrix that does not change with~$A$.  Similarly, for case
  \emph{(iii)}, from \cite[Appendix A]{SB-LV:09},
  \begin{align*}
    \frac{\partial }{\partial
      a_{ji}}\{-\tr\left({\mathcal{W}}^{-1}\right)\} =
    \tr\left({\mathcal{W}}^{-1}\frac{\partial\mathcal{W}}{\partial
        a_{ji}}{\mathcal{W}}^{-1}\right)=\tr\left({\mathcal{W}}^{-2}\frac{\partial
        \mathcal{W}}{\partial a_{ji}}\right).
  \end{align*}
  This case then follows by looking at $P={\mathcal{W}}^{-2}$ as a
  constant matrix that does not change with~$A$.  Finally, for case
  \emph{(iv)}, the eigenvalue equation, $\mathcal{W}v_i =
  \lambda_i\left(\mathcal{W}\right) v_i$. As $\mathcal{W}$ is real and
  symmetric, we use the fact that $v_i^{\top}v_i=1$ repeatedly to
  first express $\lambda_i\left(\mathcal{W}\right) =
  v_i^{\top}\mathcal{W}v_i$ and then
  \begin{align*}
    \frac{\partial }{\partial a_{ji}}\lambda_i\left(
      \mathcal{W}\right) = v_i^{\top}\frac{\partial
      \mathcal{W}}{\partial a_{ji}}v_i =
    \tr\left(v_iv_i^{\top}\frac{\partial \mathcal{W}}{\partial
        a_{ji}}\right).
  \end{align*}
  This case then follows by looking at $P=v_iv_i^{\top}=V_i$ as a
  constant matrix that does not change with~$A$.
\end{proof}

Note that Corollary~\ref{thm2}(iv) with $i=n$ corresponds to the
smallest eigenvalue of the Gramian~$\Wc$.

\begin{remark}\longthmtitle{Computational effort in gradient
    computation}
  {\rm Corollary~\ref{thm2} (i)-(iii) extend to graphs with arbitrary
    edge weights the result in our previous work~\cite[Corollary
    3.2]{PVC-EN-JC:21-tnse}, which provides different, equivalent
    expressions for digraphs with positive edge weights. There is an
    additional key difference in the computation requirements of each
    expression, which are significantly lighter here.  To compute the
    gradient with respect to an edge weight, we need to calculate for
    $t \in \until{T-1}$ the matrices $\overline{C}_k^{(t)}$,
    $\overline{O}_k^{(t)}$ given in \eqref{peq-thm1-2} according to
    Corollary~\ref{thm2}, and the matrices $C_j^{(t)}$, $O_i^{(t)}$ in
    \eqref{peq-thm1-3a} according to~\cite[Corollary
    3.2]{PVC-EN-JC:21-tnse}. Both sets of computations require
    approximately the same effort.  However, with the approach here,
    the summation over $k$ from $1$ to $m$ (which is independent of
    $n$) yields the gradient of the desired performance matrix with
    respect to all $n^2$ edge weights.  Instead, to obtain the latter
    with the approach in~\cite{PVC-EN-JC:21-tnse}, one has to perform
    a computation like this $n^2$ times (by considering the
    combinations across $i \in \until{n}$ and $j \in \until{n}$),
    resulting in a significantly higher computational effort as $n$
    increases.  
    % In both cases, the summation over $t$ going from $1$ to $T-1$ is
    % also approximately equal.
  } \oprocend
\end{remark}

Finally, we note here that the expression~\eqref{peq-2.0.2} of the
Gramian, along with Theorem~\ref{thm1} and Corollary~\ref{thm2}, are
valid regardless of the stability of~$A$. When considering stable
networks, in the limit $T \rightarrow \infty$, the controllability
Gramian $\Wci$ becomes the solution of the discrete-time Lyapunov
equation,
\begin{align}\label{peq-2.0.2a}
  A\Wci A^{\top}-\Wci+BB^{\top} = 0.
\end{align}
In fact, this equation has a valid solution only if $A$ is stable,
cf.~\cite{CTC:98}. As the time horizon $T$ grows to infinity, one has
$\Vert \Wci-\Wc\Vert\longrightarrow 0$.  For such cases, one could
compute the gradient of the Gramian with respect to edge weights by
differentiating~\eqref{peq-2.0.2a} to obtain
\begin{align}\label{peq-2.0.2b}
  A\frac{\partial \Wci}{\partial a_{ji}}A^{\top}-\frac{\partial
    \Wci}{\partial a_{ji}}+\frac{\partial A}{\partial a_{ji}}\Wci
  A^{\top}+A\Wci\left(\frac{\partial A}{\partial
      a_{ji}}\right)^{\top}=0 .
\end{align}
% As $T\longrightarrow \infty$, $\Big\Vert\frac{\partial
% \Wci}{\partial a_{ji}}-\frac{\partial \Wc}{\partial
% a_{ji}}\Big\Vert\longrightarrow 0$.
To compute the gradient of any performance metric, we need to
solve~\eqref{peq-2.0.2a} and~\eqref{peq-2.0.2b} for $n^2$ number of
edges.
In contrast, Corollary \ref{thm2} offers a computationally efficient
way to compute the gradient of various performance metrics and given
input set using simple matrix multiplications irrespective of the
stability of~$A$.

\subsection{The Edge Centrality Matrix}\label{ECM1}

Here we introduce the edge centrality matrix as a way of capturing the
importance of the network connections in driving its behavior and
performance.  Corollary \ref{thm2} shows that, with an appropriate
choice of the symmetric matrix $P$, the gradient of various
performance metrics can be expressed by means of $\Theta_P$
in~\eqref{peq-thm2-1}.  Given a performance metric, we refer to
\begin{align*}
  \Theta_P^k = \sum_{t=1}^{T-1}\overline{C}_k^{\left(t\right)}
  \overline{O}_k^{\left(t\right)} ,
\end{align*}
as the \emph{edge centrality matrix associated to the $k^{th}$-input}
and~to
\begin{align}\label{eq:additivity}
  \Theta_P = \sum_{k=1}^{m}\Theta_P^k ,
\end{align}
as the \emph{edge centrality matrix} (ECM) for the network
dynamics~\eqref{peq-2.0.1}.  Given the energy interpretations
associated to these notions, the ECMs exactly encode the first-order
changes in physically realizable quantities whenever a network
structure is perturbed.  The additivity property reflected
in~\eqref{eq:additivity} is particularly noteworthy, because it
captures the specific contribution to each edge's centrality of the
presence of actuator~$k \in \until{m}$ in the network. This offers the
system designer flexibility to examine the effects of edge
perturbation due to individual inputs, a subset of inputs, or the
complete set of inputs.  Both properties, the one-to-one
correspondence with first-order changes in the performance metric and
the ability to pinpoint the impact of each actuator, are
significant advantages of the ECM concept over the edge centrality
metric proposed in our previous work~\cite{PVC-EN-JC:21-tnse}, whose
construction is also based on the controllability Gramian.

\section{Edge Centrality Matrix of Stem-Bud
  Networks}\label{stem-bud-1}

We are interested in characterizing the structure of the ECM. The
complexity of this goal is daunting for general networks, so here we
focus our attention on the particular class of directed stem-bud
networks, cf.~\cite{SZ-FP:17}. These networks possess a diagonal
controllability Gramian, which significantly facilitates the study of
their ECM.

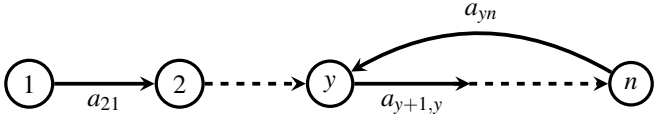
\begin{figure}[htb]
  \centering
  \begin{tikzpicture}[>=stealth,line width=0.43mm]
    \node[shape=circle,draw=black,line width=0.43mm] (no1) at (0,0){$1$};
    \node[shape=circle,draw=black,line width=0.43mm] (no2) at (2,0){$2$};
    \node[shape=circle,draw=black,line width=0.43mm] (no3) at (4,0){$y$};
%    \node[shape=circle,draw=black,line width=0.43mm] (no4) at (6,0){$\small{y+1}$};
 \node[] (no4) at (6,0){};
 \node[] (no4a) at (5.5,0){};
    \node[shape=circle,draw=black,line width=0.43mm] (no5) at (8,0){$n$};
    \draw[->,line width = 0.5mm](no1)--node[below]{$a_{21}$}(no2);
    \draw[dashed,->,line width = 0.5mm](no2)--(no3);
    \draw[->,line width = 0.5mm](no3)--node[below]{$a_{y+1,y}$}(no4);
    \draw[dashed,->,line width = 0.5mm](no4a)--(no5);
    \draw[->,line width=0.5mm](no5) edge[bend right] node [above] {$a_{yn}$}(no3);
  \end{tikzpicture}
  \caption{A directed stem-bud network with $n$ nodes and $y$ as the
    junction.}\label{5-node-2}
\end{figure}

A directed stem-bud network is a combination of a directed line
network, the \emph{stem} from node $1$ to node $y$, with a directed
ring network, the \emph{bud} starting and ending at node $y$, see
Fig. \ref{5-node-2}.  Hence, the stem contains the node sequence
$1\rightarrow 2 \rightarrow \ldots \rightarrow y$ and the bud contains
the node sequence $y\rightarrow y+1\rightarrow \ldots \rightarrow n
\rightarrow y$. The node $y$ is called the \emph{junction}.
Consequently, the possible non-zero entries in the weighted adjacency
matrix of a stem-bud network are $a_{i,i-1}$, with $i=2,3,\ldots,n$,
and $a_{yn}$.  Note that a stem-bud network is simply a directed line
network if $a_{yn}=0$
%
%\marginJC{In later parts of the paper, we don't refer to the line case
%as $a_{yn}=0$, but as $y=0$. Which one is it? It is important b/c of
%the implications for $L_b$. We could make the convention that when
%$y=0$, then we set  $L_b = \infty$ for convenience.}
%%
and a directed ring network if $y=1$.  The network is controllable
with only one actuator if it is placed at node~$1$. When multiple
actuators are used, node $1$ should be actuated to achieve system
controllability.  For $1\leq y \leq n-1$, we let $L_b=n-y+1$ denote
the length of the bud and
$\Lambda_{b}=a_{yn}\prod_{j=y+1}^{n}a_{j,j-1}$ its contribution.  For
convenience, in the directed line network case, we make the convention
that $y=0$ and $L_b=\infty$.

The next result establishes that directed stem-bud networks have
diagonal controllability Gramians and is a generalization to arbitrary
time horizons~$T$ of the result in~\cite[Section 3]{SZ-FP:17} for the
case with $T=n$.
 
\begin{proposition}\longthmtitle{Controllability Gramian of directed
    stem-bud networks}\label{prop1}
  Consider a directed stem-bud network without self-loops and input at
  node $i_b \in \until{n}$. Let $T >0$ be the time horizon. Then, the
  controllability Gramian $\mathcal{W}_{i_b}$ is diagonal.
\end{proposition}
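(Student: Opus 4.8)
The plan is to exploit the ``weighted-shift'' structure of the adjacency matrix of a stem-bud network. Writing $B=e_{i_b}$ for the single actuator, the Gramian is $\Wc_{i_b}=\sum_{t=0}^{T-1}(A^t e_{i_b})(A^t e_{i_b})\T$, so it suffices to prove that each vector $A^t e_{i_b}$ is a scalar multiple of a single canonical basis vector: then every summand $(A^t e_{i_b})(A^t e_{i_b})\T$ is diagonal (with a single nonzero entry on the diagonal), and hence so is the sum.

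First I would record the action of $A$ on the canonical basis. Since the only possibly nonzero entries of $A=(a_{ji})$ are $a_{i,i-1}$ for $i\in\{2,\dots,n\}$ and $a_{yn}$, and each of these is nonzero (being an edge weight), the $j$-th column of $A$ has exactly one nonzero entry. Concretely, $A e_j=a_{j+1,j}\,e_{j+1}$ for $j\in\{1,\dots,n-1\}$ and $A e_n=a_{yn}\,e_y$. The hypothesis of no self-loops is precisely what ensures there is no $e_j$ component in $A e_j$, which is what the argument needs.

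Next, by induction on $t$, I would conclude that $A^t e_{i_b}=c_t\,e_{\sigma(t)}$, where $\sigma(0)=i_b$, $c_0=1$, the index $\sigma(t)$ follows the directed walk leaving $i_b$ along the stem and then cycling around the bud, and $c_t$ is the product of the edge weights traversed so far. (In the degenerate line-network case $a_{yn}=0$ the walk simply terminates, and $A^t e_{i_b}=\zeros$ once $i_b+t>n$, which is still consistent.) Consequently $(A^t e_{i_b})(A^t e_{i_b})\T=c_t^2\,e_{\sigma(t)}e_{\sigma(t)}\T$, and therefore $\Wc_{i_b}=\sum_{t=0}^{T-1}c_t^2\,e_{\sigma(t)}e_{\sigma(t)}\T$ is diagonal, with $p$-th diagonal entry $\sum_{t:\,\sigma(t)=p}c_t^2$.

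I do not expect a genuine obstacle here; the only care required is bookkeeping --- fixing the row/column convention of $A=(a_{ji})$ so that $A$ acts as the forward shift $e_j\mapsto e_{j+1}$ (with $e_n\mapsto e_y$), and describing the walk $t\mapsto\sigma(t)$ cleanly, in particular noting that once it enters the bud it remains there, cycling with period $L_b=n-y+1$. This explicit description of $\sigma(\cdot)$ and of the weights $c_t$ is moreover exactly what will be reused to compute the diagonal entries of the Gramian and to characterize the ECM in the results that follow.
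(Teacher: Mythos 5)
Your proposal is correct and follows essentially the same route as the paper: the paper likewise observes that each vector $v_t=A^{t-1}e_{i_b}$ has exactly one nonzero component (the node reached by the input at step $t$, walking down the stem and then cycling around the bud), writes out the corresponding weight products explicitly, and concludes that $\Wc_{i_b}=\sum_t v_tv_t^{\top}$ is a sum of diagonal rank-one matrices. Your handling of the degenerate line-network case and your explicit induction on the shift action of $A$ are just a slightly more formal packaging of the same argument.
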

\begin{proof}
  Let $b=e_{i_b}$ and $v_t=A^{t-1}b\in\real^n$ for $1\leq t \leq
  T$. Note that, at any given time $t$, the input reaches only one
  node, say $p$. Consequently $v_t$ is a vector with only one non-zero
  element, denoted $v_t\left(p\right)$, at the $p^{th}$ component. We
  next provide expressions for it, distinguishing between two cases,
  $1\leq T\leq n-i_b+1$ and $T> n-i_b+1$,
  \begin{enumerate}
  \item For $1\leq T\leq n-i_b+1$, one has $p=i_b+t-1$ and
    \begin{align}
      \label{peq-prop1-1}
      &v_t(p) = \prod_{j=k+1}^{p}a_{j,j-1}\;\text{for}\; t \in \{1,2,\ldots,T\}.
    \end{align}
  \item For $T>n-i_b+1$, if $1\leq t \leq n-i_b+1$, then $p$ and
    $v_t\left(p\right)$ are as in~(i). If $t> n-i_b+1$, let
    $t-\left(n-i_b+2\right) = \xi L_b+\zeta$, i.e., $\xi$ and $\zeta$
    are the quotient and remainder, respectively, of dividing
    $t-\left(n-i_b+2\right)$ by $L_b$. To determine the value of $p$,
    note that it takes $n-i_b$ hops to reach node $n$ from node $i_b$,
    and then one more hop to reach node $y$.  Of the remaining
    $t-\left(n-i_b+2\right)$ hops left, after traversing the loop
    $\xi$ times, we reach the $\zeta$th node on the loop, and
    consequently $p=y+\zeta$. The expression for $v_t\left(p\right)$
    is then
    \begin{align}
      \label{peq-prop1-2}
      v_t(p) =
      \begin{cases}
        a_{yn}\Lambda_{b}^{\xi}\prod_{j=i_b+1}^{n}a_{j,j-1}\prod_{i=y+1}^{p}a_{i,i-1}
        & \text{for} \; \zeta>0,
        \\
        a_{yn}\Lambda_{b}^{\xi}\prod_{j=i_b+1}^{n}a_{j,j-1}&
        \text{for} \; \zeta=0.
      \end{cases}
    \end{align}  
  \end{enumerate}
  Consequently,
  $\mathcal{W}_{i_b}=\sum_{t=0}^{T-1}A^tbb^{\top}{A^t}^{\top} =
  \sum_{t=1}^{T}v_tv_t^{\top}$ is a diagonal matrix.
\end{proof}

For undirected stem-bud networks, the controllability Gramian might in
general not be diagonal.  As the Gramian is an additive function in
the input location space, cf. \cite{THS-FLC-JL:16}, in case of
multiple inputs, one readily deduces from Proposition~\ref{prop1} that
the Gramian is also diagonal. Building on this result, we next
characterize the structure of the ECM of a directed stem-bud
network. In our next result, an element $q_{ji}$ of matrix $Q\in
\real^{n\times n}$ belongs to the $(i-j)$th super-diagonal if $j<i$,
to the $(j-i)$th sub-diagonal if $j>i$, and to the main diagonal
if~$j=i$.

\begin{theorem}\longthmtitle{Structure of ECM of stem-bud
    networks}\label{thm31}
  Consider a controllable $n-$node directed stem-bud network with
  dynamics \eqref{peq-2.0.1} and $m$ inputs.  For the choices of
  matrix $P \in \real^{n \times n}$ specified in Corollary~\ref{thm2},
  the ECM $\Theta_P$ may have non-zero elements in
  \begin{enumerate}
  \item the set of sub-diagonals $N_{\sub}=\{1 + i\, L_b: i \in
    \ontil{k_{\sub}}\}$, 
    % $N_{sub}=\{1,L_b+1,2L_b+1,\ldots,k_{sub}L_b+1\}$
    with $k_{\sub}=\big\lfloor \frac{n-2}{L_b}\big\rfloor$;
  \item the set of super-diagonals $N_{\sup}=\{i \, L_b-1 : i \in
    \until{k_{\sup}} \}$,
    % $N_{super}=\{L_b-1,2L_b-1,\ldots,
    % k_{super}L_b-1\}$
    with $k_{\sup}=\big\lfloor \frac{n}{L_b}\big\rfloor$.
  \end{enumerate}
\end{theorem}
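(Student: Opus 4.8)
The plan is to combine the near-diagonal sparsity of the adjacency matrix $A$ of a stem-bud network with the diagonality of its controllability Gramian. First I would record two reductions. Since the inputs act at nodes, every column of $B$ is canonical, $b_k=e_{i_k}$; and since each node has at most one outgoing edge and all edge weights $a_{i,i-1}$, $a_{yn}$ are nonzero, the walk out of any node is unique, so $A^r e_p = c_{p,r}\,e_{\pi(p,r)}$, where $c_{p,r}\ne0$ is the product of the edge weights along that walk and $\pi(p,r)$ is the node reached after $r$ hops along $p\to p+1\to\cdots\to n\to y\to y+1\to\cdots$ (the term being $0$ if the walk runs off node $n$, which can only happen in the line case). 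Concretely, $\pi(p,r)=p+r$ while $p+r\le n$, and $\pi(p,r)=y+\big((r-(n-p+1))\bmod L_b\big)$ otherwise. I would also note that, by Proposition~\ref{prop1}, $\Wc$ is diagonal; hence so are $I$, $\Wc^{-1}$, $\Wc^{-2}$, and the eigenvectors of a diagonal $\Wc$ can be chosen canonical, so $V_i=v_iv_i^\top$ is diagonal as well. Thus every $P$ appearing in Corollary~\ref{thm2} is diagonal.

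The technical heart is a lemma on when two of these walks \emph{merge}: for $p_1,p_2\in\until{n}$, there exists $r\ge0$ with $\pi(p_1,r)=\pi(p_2,r)$ if and only if $p_1\equiv p_2\pmod{L_b}$ (read as $p_1=p_2$ in the line case $L_b=\infty$). I would prove it by noting that the bud $\{y,\dots,n\}$ is the unique cycle of the out-degree-one graph induced by $A$; a stem node $p\le y$ first lands on this cycle, at node $y$, after $y-p$ hops; and once both walks are on the cycle their positions differ by a constant amount modulo $L_b$. Computing this offset in each configuration (both on the stem, one on the stem and one on the bud, both on the bud) shows it equals $(p_1-p_2)\bmod L_b$, so the positions coincide for some $r$ precisely when $L_b\mid(p_1-p_2)$.

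Next I would expand a generic ECM entry. From Theorem~\ref{thm1} and the definition of $\Theta_P$,
\begin{align*}
  \Theta_P(j,i)=\sum_{k=1}^{m}\sum_{t=1}^{T-1}\sum_{s=1}^{t}
  \big(e_j^\top {A^{t-s}}^\top P A^t b_k\big)\big(e_i^\top A^{s-1}b_k\big).
\end{align*}
Substituting $b_k=e_{i_k}$, the representation $A^r e_p=c_{p,r}e_{\pi(p,r)}$, and diagonality of $P$, a summand can be nonzero only if $\pi(i_k,s-1)=i$, $\pi(j,t-s)=\pi(i_k,t)=:q$, and $P_{qq}\ne0$. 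Applying the semigroup identity $\pi(p,a+b)=\pi(\pi(p,a),b)$ twice to the first relation gives $\pi(i_k,t)=\pi(i,t-s+1)=\pi(i^+,r)$, where $r:=t-s\ge0$ and $i^+$ is the out-neighbor of $i$ (so $i^+=i+1$ if $i\le n-1$, and $i^+=y$ if $i=n$). Combined with the second relation this forces $\pi(j,r)=\pi(i^+,r)$, and the merging lemma then forces $j\equiv i^+\pmod{L_b}$. Hence $\Theta_P(j,i)=0$ whenever $j\not\equiv i^+\pmod{L_b}$, for every admissible $P$ and every horizon $T$.

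It remains to translate the congruence $j\equiv i^+\pmod{L_b}$ into the stated diagonals, which is bookkeeping. For $i\le n-1$ it reads $j-i\equiv1\pmod{L_b}$: a sub-diagonal entry has $1\le j-i\le n-1$, so $j-i=1+\ell L_b$ for some $0\le\ell\le\lfloor(n-2)/L_b\rfloor=k_{\sub}$, i.e., $j-i\in N_{\sub}$; a super-diagonal entry has $1\le i-j\le n-2$ with $i-j\equiv L_b-1$, so $i-j=\ell L_b-1$ for some $1\le\ell\le\lfloor(n-1)/L_b\rfloor\le k_{\sup}$, i.e., $i-j\in N_{\sup}$. For $i=n$ the congruence reads $j\equiv y\equiv n+1\pmod{L_b}$; $j=n$ would require $L_b\mid1$, excluded since $L_b\ge2$, so the entry is strictly super-diagonal with $1\le n-j\le n-1$ and $n-j\equiv L_b-1\pmod{L_b}$, hence $n-j=\ell L_b-1$ for some $1\le\ell\le\lfloor n/L_b\rfloor=k_{\sup}$, i.e., $n-j\in N_{\sup}$. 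A main-diagonal entry would need $L_b\mid1$, which is impossible (recall $1\le y\le n-1$ gives $L_b\ge2$, while the line case has $L_b=\infty$, $N_{\sub}=\{1\}$, $N_{\sup}=\emptyset$, and the same argument applies verbatim). This gives the asserted containment. I expect the merging lemma --- fixing the offset modulo $L_b$ uniformly over the positions of $p_1,p_2$ on stem/bud, while keeping the semigroup manipulations valid under the ``walk runs off node~$n$'' convention of the line case --- to be the main obstacle; the conversion to $N_{\sub}$, $N_{\sup}$ is routine but must track the floor functions carefully.
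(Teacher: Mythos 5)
Your proof is correct, and it rests on the same two pillars as the paper's own argument: the diagonality of every admissible $P$ (via Proposition~\ref{prop1}) and the out-degree-one walk structure of the stem-bud adjacency matrix. Where you differ is in the organization. The paper expands a generic term ${A^s}^\top P A^t e_{i_b} e_{i_b}^\top {A^{t-1-s}}^\top$ and explicitly enumerates, in three cases on $t$ (input still on the stem, input between the junction and node $n$, input circulating in the bud), every node that the backward walk ${A^s}^\top$ can reach; this backward walk branches at the junction, which is what produces the lists of candidate positions and the index bookkeeping with $\zeta_{\max}$ and $\delta_1,\dots,\delta_4$ in Cases 2 and 3. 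You instead rewrite $e_j^\top{A^{t-s}}^\top e_q$ as $e_q^\top A^{t-s}e_j$ so that both walks run forward and are deterministic, and you replace the positional enumeration by a single invariant: two forward walks can meet only if their start nodes agree modulo $L_b$. This merging lemma collapses the paper's three cases into the one congruence $j\equiv i^{+}\pmod{L_b}$, after which identifying $N_{\sub}$ and $N_{\sup}$ is the same floor-function bookkeeping both proofs perform (your bounds on $\ell$ check out, including the $i=n$ case and the exclusion of the main diagonal). The trade-off is that your route is more uniform and less error-prone, at the cost of stating and proving the merging lemma and handling the line-network convention $L_b=\infty$ separately, which you do. One small caveat shared with the paper: taking $P=V_i$ diagonal requires choosing canonical eigenvectors of the diagonal Gramian, which deserves a sentence when $\Wc$ has repeated eigenvalues.
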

\begin{proof}
  Note that since $\Theta_P$ is additive in the input space,
  $\Theta_P=\sum_{i_b=1}^{m}\Theta_P^{i_b}$,
  cf.~\eqref{eq:additivity}, it is enough to reason for the
  single-input case.  Consider then the case of an input located at
  node $i_b$, i.e., with $b_{i_b}=e_{i_b}$.  Note that $A^te_{i_b}$ is
  a column vector with only one non-zero value at the $t+{i_b}$
  coordinate.  From~\eqref{peq-thm2-1}, $\Theta_P^{i_b}$ consists of
  additive terms of
  $\overline{C}_{i_b}^{\left(t\right)}\overline{O}_{i_b}^{\left(t\right)}$
  running in $t\in \{1,\ldots,T-1\}$.  We consider three cases as
  follows.
	
  {\textit{Case 1}} ($t < y-{i_b}$):  Relying on~\eqref{peq-thm1-2},
  consider one general term in the product
  $\overline{C}_{i_b}^{\left(t\right)}\overline{O}_{i_b}^{\left(t\right)}$
  as ${A^s}^{\top}PA^te_{i_b}{e_{i_b}}^{\top}{A^{t-1-s}}^{\top}$,
  where $s$ is an integer and $0\leq s \leq t-1$.  In this case, the
  only non-zero entry of $A^te_{i_b}$ corresponds to a node $t+{i_b}$
  at the stem since $t < y-{i_b}$.  From Proposition~\ref{prop1}, the
  Gramian $\mathcal{W}$ is diagonal, and hence the matrix $P$ in
  Corollary \ref{thm2} is diagonal, and may be singular or
  non-singular.  If $PA^te_{i_b}$ is non-zero, then it retains the
  same structure as $A^te_{i_b}$. In that case, the quantity
  ${A^s}^{\top}PA^te_{i_b}$ is a column vector with only one non-zero
  element at the $t+{i_b}-s$ coordinate. Now,
  ${e_{i_b}}^{\top}{A^{t-1-s}}^{\top}$ results in a row vector with
  only non-zero element at the $t+{i_b}-s-1$ coordinate. So,
  ${A^s}^{\top}PA^te_{i_b}{e_{i_b}}^{\top}{A^{t-1-s}}^{\top}$ is a
  matrix with only one non-zero element at the
  $\left(t+{i_b}-s,t+{i_b}-s-1\right)$ position, which is part of the
  first sub-diagonal.

  {\textit{Case 2}} ($y-{i_b}\leq t \leq n-{i_b}$): In this case,
  ${A^s}^{\top}PA^te_{i_b}$ results in the input reaching nodes
  $t+{i_b}-s$, $t+{i_b}-s+\zeta L_b$, where $\zeta \in
  \{1,2,\ldots,\zeta_{max}\}$ on the stem and node
  $\left(n-s+t+{i_b}-y+\zeta_{max}L_b+1\right)$ on bud (here, 
  $\zeta_{max}=\big\lfloor \frac{s-t-{i_b}+y}{L_b}\big\rfloor$). Now,
  ${e_{i_b}}^{\top}{A^{t-1-s}}^{\top}$ results in a row vector with
  only non-zero coordinate at $t+{i_b}-s-1$ place. So,
  ${A^s}^{\top}PA^te_{i_b}{e_{i_b}}^{\top}{A^{t-1-s}}^{\top}$ is a
  matrix with non-zero elements at positions,
  \begin{enumerate}[label=(\roman*)]
  \item $\left(t+{i_b}-s,t+{i_b}-s-1\right)$ i.e., first sub-diagonal.
  \item $\left(t+{i_b}-s+\zeta L_b,t+{i_b}-s-1\right)$ i.e.,
    $\left(\zeta L_b+1\right)$ sub-diagonal for $\zeta\in
    \{1,2,\ldots,\zeta_{max}\}$.
  \item $\left(n-s+t+{i_b}-y+\zeta_{max}L_b+1,t+{i_b}-s-1\right)$
    i.e.,
    $n-y+\zeta_{max}L_b+2=\left(\left(\zeta_{max}+1\right)L_b+1\right)$
    sub-diagonal.
  \end{enumerate} 
  
  {\textit{Case 3}} ($t > n-{i_b}$): Proceeding as in Case 2,
  $A^te_{i_b}$ will make the input reach node $y+r_1$ where $r_1 =
  t-y+{i_b}-\delta_1L_b$ and $\delta_1= \big\lfloor
  \frac{t-y+{i_b}}{L_b}\big\rfloor$. ${A^s}^{\top}PA^te_{i_b}$ will
  result in nodes $y-s+r_1+\delta_2L_b$ where $\delta_2 =
  \{0,1,2\ldots,\delta_2^{max}\}$, $\delta_2^{max}=\big\lceil
  \frac{1+s-y-r}{L_b}\big\rceil$ on the stem. On the bud, the node is
  $n-r_2$ with $r_2 = s-r_1-1-\delta_3L_b$ and $\delta_3=\big\lfloor
  \frac{s-r-1}{L_b}\big\rfloor$. ${e_{i_b}}^{\top}{A^{t-1-s}}^{\top}$
  is the node $y+r_3$ where $r_3=t-1-s-y+{i_b}-\delta_4L_b$ and
  $\delta_4=\big\lfloor \frac{t-1-s-y+{i_b}}{L_b}\big\rfloor$. So,
  ${A^s}^{\top}PA^te_{i_b}{e_{i_b}}^{\top}{A^{t-1-s}}^{\top}$ is a
  matrix with non-zero elements at positions,
  \begin{enumerate}[label=(\roman*)]
  \item
    $\left(t+{i_b}-s-\delta_1L_b+\delta_2L_b,t-1+{i_b}-s-\delta_4L_b\right)$
    i.e., $\left(\left(\delta_2+\delta_4-\delta_1\right)L_b+1\right)$
    sub-diagonal or
    $\left(\left(\delta_1-\delta_2-\delta_4\right)L_b-1\right)$
    super-diagonal.
  \item
    $\left(n+t+{i_b}-y-s-\delta_1L_b+1+\delta_3L_b,t-1+{i_b}-s-\delta_4L_b\right)$
    i.e.,
    $\left(\left(\delta_3+\delta_4+1-\delta_1\right)L_b+1\right)$
    sub-diagonal or
    $\left(\left(\delta_1-\delta_3-1-\delta_4\right)L_b-1\right)$
    super-diagonal.
  \end{enumerate} 	
  In all the three cases, we find the non-zero elements belong to the
  sub-diagonals $1,L_b+1,2L_b+1,\ldots,k L_b+1$, with
  $k L_b+1 \leq n-1$, or the super-diagonals
  $L_b-1,2L_b-1,\ldots, k L_b-1$, with $k L_b-1 \leq n-1$, as stated.
\end{proof}

Theorem~\ref{thm31} states that for stem-bud networks, the structure
of ECM depends only on the size $n$ of the network and the length
$L_b$ of its bud.  In general, the performance of the network may be
improved by suitably modifying its edges and their corresponding
weights. However, determining such modifications is a computationally
intensive process. The edge centrality matrix offers a criterium to
guide the selection of edges to modify by identifying those with
higher first-order impact on network performance.  All edges
corresponding to sub-diagonal elements (except those in the first
sub-diagonal) and all super-diagonal elements (except the element at
$(y,n)$) of ECM do not exist in the stem-bud network and are therefore
new.  Edges corresponding to an element in a sub-diagonal run along
the stem.  Edges corresponding an element in a super-diagonal form a
new cycle in the network.

When a stem-bud network is either a directed line network ($y=0$)
or a directed ring network ($y=1$), the next result shows that the
structure of their respective ECMs is the same as that of the weighted
adjacency matrix~$A$.

\begin{corollary}\longthmtitle{Structure of ECM of directed
    line and ring networks}\label{corr1}
  For directed line and ring networks, the edge centrality matrix
  $\Theta_P$ has the same structure as the weighted adjacency
  matrix~$A$.
\end{corollary}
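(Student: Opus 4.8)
The plan is to obtain Corollary~\ref{corr1} directly from Theorem~\ref{thm31} by substituting the conventions attached to the two degenerate stem-bud topologies. Recall that a directed line network is the case $y=0$ with the convention $L_b=\infty$, whereas a directed ring network is the case $y=1$, so that $L_b=n-y+1=n$. In both cases the support of the weighted adjacency matrix is known a priori: for a line, $A$ has non-zero entries exactly on the first sub-diagonal ($a_{i,i-1}\neq 0$ for $i\in\{2,\dots,n\}$); for a ring, $A$ has non-zero entries on the first sub-diagonal together with the single entry $a_{1n}=a_{yn}$, which lies on the $(n-1)$th super-diagonal. So the task is to check that the diagonal index sets $N_{\sub}$, $N_{\sup}$ of Theorem~\ref{thm31} collapse exactly onto these patterns.

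First I would treat the line network. With $L_b=\infty$, the exponents appearing in Theorem~\ref{thm31} reduce to $k_{\sub}=\lfloor (n-2)/L_b\rfloor=0$ and $k_{\sup}=\lfloor n/L_b\rfloor=0$ (the floors being interpreted as $0$ by the stated convention). Hence $N_{\sub}=\{1+iL_b : i\in\ontil{0}\}=\{1\}$ and $N_{\sup}=\{iL_b-1 : i\in\until{0}\}=\emptyset$, so Theorem~\ref{thm31} allows $\Theta_P$ to be non-zero only on the first sub-diagonal, which is precisely the diagonal structure of $A$ for a line network.

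Next I would treat the ring network. With $L_b=n$ one gets $k_{\sub}=\lfloor (n-2)/n\rfloor=0$, hence $N_{\sub}=\{1\}$, and $k_{\sup}=\lfloor n/n\rfloor=1$, hence $N_{\sup}=\{1\cdot n-1\}=\{n-1\}$. By Theorem~\ref{thm31}, $\Theta_P$ can therefore have non-zero entries only on the first sub-diagonal and on the $(n-1)$th super-diagonal (whose single slot is the $(1,n)$ position), which is exactly the diagonal structure of $A$ for a ring network. Combining the two cases gives the claim.

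The derivation is routine; the only point that calls for care — the closest thing to an obstacle here — is making the limiting convention $L_b=\infty$ precise in the line case, i.e.\ verifying that every index term with $i\geq 1$ in $N_{\sub}$ and $N_{\sup}$ genuinely drops out, which is immediate once the floors are read as $0$ but is worth recording explicitly. If one wishes to read ``same structure'' as an exact equality of sparsity patterns rather than an inclusion, it remains to note that the first-sub-diagonal entries (and, for the ring, the $(1,n)$ entry) are generically non-vanishing, which can be read off the explicit forms of $\overline{C}_k^{(t)}$, $\overline{O}_k^{(t)}$ in Theorem~\ref{thm1} (e.g.\ the $t=1$ term); the two patterns then coincide rather than merely nest.
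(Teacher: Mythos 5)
Your proposal is correct and follows essentially the same route as the paper: both instantiate Theorem~\ref{thm31} with $L_b=\infty$ for the line (giving $k_{\sub}=k_{\sup}=0$, hence only the first sub-diagonal) and $L_b=n$ for the ring (giving $k_{\sub}=0$, $k_{\sup}=1$, hence the first sub-diagonal and the $(n-1)$th super-diagonal), and then match these against the sparsity pattern of $A$. Your closing remark on upgrading the inclusion of sparsity patterns to an equality is a small extra care the paper does not spell out, but it does not change the argument.
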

\begin{proof}
  A directed line network corresponds to a stem-bud network with
  $L_b=\infty$.  From Theorem~\ref{thm31}, $k_{\sup} = 0$, meaning
  that $\Theta_P$ has non-zero elements in the first sub-diagonal
  only. This structure is the same as that of~$A$. A directed ring
  network corresponds to a stem-bud network with $L_b=n$, which yields
  $k_{\sub}=0$ and $k_{\sup}=1$. Thus $\Theta_P$ has non-zero elements
  in the first sub-diagonal and the $n-1$ super-diagonal only, i.e.,
  $\Theta_P$ has the same structure as that of~$A$.
\end{proof}

\section{Network Performance and Stability Bounds}\label{section4}
Here, we examine two complementary aspects regarding network
performance and stability motivated by our observations in
Section~\ref{stem-bud-1}.  On  one hand, we seek to bound the
impact on network performance that edge modification might have and to
understand  to what extent ECM is a good indicator of it. Given the
challenges in addressing this question, here we focus on quantifying
the impact over an infinite time horizon caused by the modification of
a single edge. On the other hand, to ensure that edge modifications do
not result in network instability, we characterize bounds on the weight
changes for each edge such that the network retains its stability
properties.

\subsection{Bounding the Change in Network Performance of Single-Edge
  Modification}\label{global-1}
Our interest here lies in quantifying the effect of edge perturbations in
stable network systems.
% As our time horizon $T$ is large, we work with the infinite Gramian
% $\Wci$ rather than the finite time Gramian$\Wc$ to derive closed
% form bounds of controllability metrics.
We consider the trace, log-det and, trace inverse of the Gramian as
performance metrics and provide bounds for each of them for the
original (unmodified) network as well as for the network modified at
one edge. Consistent with the interpretation of these metrics
regarding network controllability, we derive upper bounds for the
trace and log-det and a lower bound for the trace inverse. The
following notation is useful in our forthcoming discussion. For $w \in
\real$,  let
\begin{align}\label{defs-1}
  \Xc & = (I-\vert A\vert)^{-1} , \quad \alpha_{ij} =\frac{\vert
    w\vert}{1-\vert w\vert e_i\T\Xc e_j}, \quad \alpha =
  \underset{i\neq j}{\max}\; \alpha_{ij}, \notag
  \\
  \beta & =\max\Big\{\underset{i\neq j}{\max}\; 2e_i\T \Xc e_j,
  \underset{i\neq j}{\max}\; e_i\T \Xc e_j+\underset{j}{\max}\;
  \Vert\Xc e_j\Vert\Big\},
  \\
  \notag \gamma &=\underset{k}{\max}\;e_k\T\Xc\T\Xc e_k, \quad
  \bar{\gamma}=\underset{k}{\max}\;e_k\T\Xc\BB \Xc\T e_k.
\end{align}
We are ready to state the first result of this section.

\begin{theorem}\longthmtitle{Upper bound on trace of
    Gramian}\label{thm4}
  Given a network with adjacency matrix $A$ and $\rho(\vert
  A\vert)<1$, consider the modified network resulting from adding the
  weight $w \in \real$ to an edge $i\longrightarrow j$, $i\neq j$,
  such that $\rho(\vert A\vert +\vert w\vert e_je_i\T)<1$. Then,
  \begin{subequations}\label{peq-thm4-00}
    \begin{align}
      \label{peq-thm4-00a} \tr(\Wc_A) & \leq\tr(\Wci_A)  \leq
      \tr(\Hc_{\Xc}),
      \\
      \label{peq-thm4-00b} \tr(\Wc_{A+w e_je_i\T})
      % \leq \tr(\Wci_{A+w e_je_i\T})
      & \leq
      (1+\alpha\beta)\tr(\Hc_{\Xc})+\alpha^2\gamma\bar{\gamma},
    \end{align}
  \end{subequations}
  where $\Hc_{\Xc}=\mathcal{X}\vert B\vert \vert
  B\vert^{\top}\mathcal{X}^{\top}$.
\end{theorem}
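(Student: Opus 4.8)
The plan is to treat the two inequality chains~\eqref{peq-thm4-00a}--\eqref{peq-thm4-00b} by a single mechanism: dominate each controllability Gramian, entry by entry, by its nonnegative ``absolute-value'' surrogate, and then collapse the resulting geometric matrix series into $\Xc=(I-\vert A\vert)^{-1}$. Throughout, I use the elementary facts that $\vert A^t\vert\le\vert A\vert^t$ entrywise, that $M\le N$ entrywise (for $M,N\ge 0$) implies $M^t\le N^t$ entrywise, and that $\rho(M)\le\rho(\vert M\vert)$, with $\rho$ monotone on nonnegative matrices.

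For~\eqref{peq-thm4-00a}: the first inequality is immediate because $\Wci_A-\Wc_A=\sum_{t\ge T}A^tBB^{\top}{A^t}^{\top}\succeq 0$ (note $\rho(A)\le\rho(\vert A\vert)<1$, so $\Wci_A$ exists). For the second, $\tr(\Wci_A)=\sum_{t\ge 0}\tr\!\big(A^tBB^{\top}{A^t}^{\top}\big)=\sum_{t\ge 0}\Vert A^tB\Vert^2\le\sum_{t\ge 0}\Vert\,\vert A\vert^t\vert B\vert\,\Vert^2=\tr\!\big(\sum_{t\ge 0}\vert A\vert^t\BB(\vert A\vert^{\top})^t\big)$, using $\vert A^tB\vert\le\vert A\vert^t\vert B\vert$ entrywise. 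Since $\rho(\vert A\vert)<1$ gives $\sum_{t\ge 0}\vert A\vert^t=\Xc$, the matrix $\Hc_{\Xc}=\Xc\BB\Xc^{\top}=\sum_{s,r\ge 0}\vert A\vert^s\BB(\vert A\vert^{\top})^r$ is entrywise nonnegative, so its diagonal dominates termwise that of the $s=r$ subsum; hence $\tr\!\big(\sum_{t}\vert A\vert^t\BB(\vert A\vert^{\top})^t\big)\le\tr(\Hc_{\Xc})$.

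For~\eqref{peq-thm4-00b}: put $\tilde A=A+we_je_i^{\top}$ and $\hat A=\vert A\vert+\vert w\vert e_je_i^{\top}\ge 0$. The hypothesis $\rho(\hat A)<1$ gives $\rho(\tilde A)\le\rho(\vert\tilde A\vert)\le\rho(\hat A)<1$, so $\Wci_{\tilde A}$ exists and $\tr(\Wc_{\tilde A})\le\tr(\Wci_{\tilde A})$. Since $\vert\tilde A\vert\le\hat A$ entrywise, the estimate above gives $\tr(\Wci_{\tilde A})\le\tr\!\big(\hat\Xc\BB\hat\Xc^{\top}\big)$ with $\hat\Xc=(I-\hat A)^{-1}=\sum_{t\ge 0}\hat A^t\ge 0$. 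Applying Sherman--Morrison to $\hat\Xc=(\Xc^{-1}-\vert w\vert e_je_i^{\top})^{-1}$ --- legitimate because $\hat\Xc\ge\Xc\ge 0$ entrywise forces $1-\vert w\vert e_i^{\top}\Xc e_j>0$ --- yields $\hat\Xc=\Xc+\alpha_{ij}\,\Xc e_je_i^{\top}\Xc$. Substituting and using cyclicity of the trace and symmetry of $\Hc_{\Xc}$,
\[
  \tr\!\big(\hat\Xc\BB\hat\Xc^{\top}\big)=\tr(\Hc_{\Xc})+2\alpha_{ij}\,e_i^{\top}\Hc_{\Xc}\Xc e_j+\alpha_{ij}^2\,(e_i^{\top}\Hc_{\Xc}e_i)\,\Vert\Xc e_j\Vert^2 .
\]
The last term is $\le\alpha^2\gamma\bar{\gamma}$ via $\alpha_{ij}\le\alpha$, $e_i^{\top}\Hc_{\Xc}e_i=e_i^{\top}\Xc\BB\Xc^{\top}e_i\le\bar{\gamma}$, and $\Vert\Xc e_j\Vert^2=e_j^{\top}\Xc^{\top}\Xc e_j\le\gamma$. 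For the cross term, all factors are entrywise nonnegative, so it is nonnegative and it suffices to show $2\,e_i^{\top}\Hc_{\Xc}\Xc e_j\le\beta\,\tr(\Hc_{\Xc})$: writing $c_k=\vert B\vert^{\top}\Xc^{\top}e_k\ge 0$ so that $(\Hc_{\Xc})_{k\ell}=\langle c_k,c_\ell\rangle$ and $e_i^{\top}\Hc_{\Xc}\Xc e_j=\langle c_i,\sum_k\Xc_{kj}c_k\rangle$, one estimates by Cauchy--Schwarz and the triangle inequality, separating the diagonal ($k=i$) contribution from the remaining entries of the perturbed column $\Xc e_j$; the estimate is arranged so the two resulting upper bounds are precisely the two quantities whose maximum defines $\beta$ in~\eqref{defs-1} (using $(\Hc_{\Xc})_{ii}\le\bar\gamma\le\tr(\Hc_{\Xc})$). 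Multiplying by $\alpha_{ij}\le\alpha$ and adding the three pieces gives~\eqref{peq-thm4-00b}.

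The main obstacle is exactly this cross-term estimate. A single application of Cauchy--Schwarz already gives $2\,e_i^{\top}\Hc_{\Xc}\Xc e_j\le 2\Vert\Xc e_j\Vert\,\tr(\Hc_{\Xc})$, i.e. the coefficient $2\max_{j}\Vert\Xc e_j\Vert$; bringing this down to the sharper $\beta$ --- which can be strictly smaller, e.g. when $\Xc$ has small off-diagonal entries --- requires carefully tracking which entries of the row $e_i^{\top}\Xc$ and column $\Xc e_j$ are actually activated by the single-edge modification. This bookkeeping, rather than any delicate inequality, is where the real work lies; the remaining ingredients (finite-versus-infinite-horizon comparison, entrywise domination, Sherman--Morrison) are routine.
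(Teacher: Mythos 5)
Your overall architecture coincides with the paper's: entrywise domination by $\vert A\vert$ and $\vert B\vert$, collapsing the geometric series into $\Xc$, Sherman--Morrison for the rank-one update, and a term-by-term trace estimate. Part~\eqref{peq-thm4-00a} and the quadratic term $\alpha^2\gamma\bar{\gamma}$ are handled correctly and essentially as in the paper. The gap is the cross-term estimate $2\,e_i^{\top}\Hc_{\Xc}\Xc e_j\le\beta\,\tr(\Hc_{\Xc})$, which you yourself flag as ``where the real work lies'' but never carry out, and the Cauchy--Schwarz/triangle-inequality bookkeeping you gesture at does not land on $\beta$. Writing $v=\Xc e_j$ and splitting off the $k=i$ term as you propose gives, after the natural applications of Cauchy--Schwarz and AM--GM, a coefficient of the form $2e_i^{\top}\Xc e_j+\Vert v\Vert$, which strictly exceeds $e_i^{\top}\Xc e_j+\Vert \Xc e_j\Vert$ whenever $\Xc_{ij}>0$. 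Moreover, $\beta$ is a \emph{maximum} of two quantities, not a sum: any decomposition of the cross term into two pieces bounded separately by those two quantities would yield their sum, which is larger still. So the constant $\beta$ cannot be reached by the route you describe.

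The missing idea is spectral rather than combinatorial. Observe that $2\,e_i^{\top}\Hc_{\Xc}\Xc e_j=\tr\big(\Hc_{\Xc}U\big)$ with $U=e_i(\Xc e_j)^{\top}+(\Xc e_j)e_i^{\top}$ symmetric of rank at most $2$; use $\tr(\Hc_{\Xc}U)\le\lambda_1(U)\tr(\Hc_{\Xc})$ for $\Hc_{\Xc}\succeq 0$ (the paper cites \cite[Proposition 8.4.13]{DSB:09}; elementarily, expand $\Hc_{\Xc}$ in its eigenbasis and bound each Rayleigh quotient of $U$); and then bound $\lambda_1(U)$ by a case distinction on $\rank(U)$: for $U=ve_i^{\top}+e_iv^{\top}$ one has $\lambda_1(U)=2v^{\top}e_i$ if $\rank(U)=1$ and $\lambda_1(U)\le v^{\top}e_i+\Vert v\Vert$ if $\rank(U)=2$ (Lemma~\ref{lem4} in the appendix). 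The maximum over these two mutually exclusive cases, taken over $i\neq j$, is exactly what produces $\beta$ in~\eqref{defs-1}. Without this eigenvalue computation (or an equivalent), the bound~\eqref{peq-thm4-00b} is not justified; with it, the rest of your argument goes through.
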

\begin{proof}
  To prove \eqref{peq-thm4-00a}, note that since $
  \mathcal{W}_A^{\infty}=
  \sum_{t=0}^{\infty}A^tBB^{\top}{A^t}^{\top}$, one has
  $\Wci_A\succeq\Wc_A$. As $\vert A\vert \geq A$ and $\vert B\vert
  \geq B$,
  \begin{align}
    \label{peq-lem3-0}
%    \tr(\mathcal{W}_{\vert A\vert}^{\infty})
   \tr\Big(\sum_{t=0}^{\infty}\vert A
          \vert^t\vert B\vert \vert B\vert^{\top}{\vert A
            \vert^t}^{\top}\Big)\geq
    \tr(\mathcal{W}_A^{\infty})\geq \tr(\Wc_A).
  \end{align}	
  As $\vert A \vert \geq 0$ and $\rho(\vert A \vert)<1$, from
  \cite[Lemma 2.3.3]{GHG-CFVL:13}, $\mathcal{X}$ is non-singular and
  \begin{align}
    \label{peq-lem3-1}
    \mathcal{X} = \sum_{t=0}^{\infty}\vert A \vert^t.
  \end{align}
  Now consider,
  \begin{align}
    \label{peq-lem3-2}
    \mathcal{X}\vert B\vert \vert B\vert^{\top}\mathcal{X}^{\top} =
    \sum_{t=0}^{\infty}\vert A \vert^t\vert B\vert \vert
    B\vert^{\top}{\vert A \vert^t}^{\top}+\Psi(\vert A \vert,\vert
    B\vert),
  \end{align}
  where $\Psi(\vert A\vert,\vert B\vert)$ is a function of cross-terms
  involving $\vert A \vert$ and $\vert B \vert$. Taking trace
  throughout in \eqref{peq-lem3-2}, we obtain $\tr(\mathcal{X}\vert
  B\vert \vert B\vert^{\top}\mathcal{X}^{\top}) \geq \tr\Big(
  \sum_{t=0}^{\infty}\vert A \vert^t\vert B\vert \vert
  B\vert^{\top}{\vert A \vert^t}^{\top}\Big)$, which combined with
  \eqref{peq-lem3-0} gives us \eqref{peq-thm4-00a}.

  Next, we show~\eqref{peq-thm4-00b}. Define $\delta A = w
  e_je_i^{\top}$, $\mathcal{Y}=\left(I-\vert A\vert-\vert w\vert
    e_je_i^{\top}\right)^{-1}$, and $\Hc_{\Yc}=\Yc \BB\Yc\T$.  From
  \eqref{peq-thm4-00a},
  \begin{align*}
    \tr(\Wc_{ A+\delta A})\leq\tr(\mathcal{W}_{A+\delta A}^{\infty})
    \leq \tr(\Hc_{\Yc}).
  \end{align*}
  In addition, from \cite[Section 0.7.4]{RAH-CRJ:12}, we have
  \begin{align*}
    \mathcal{Y}&=(I-\vert A\vert)^{-1}+\frac{\vert w\vert (I-\vert
      A\vert)^{-1}e_je_i^{\top}(I-\vert A\vert)^{-1}}{1-\vert w\vert
      e_i^{\top}(I-\vert A\vert)^{-1}e_j}
  \end{align*}
  Therefore, $\mathcal{Y} 
  =\mathcal{X}+\alpha_{ij}\mathcal{X}e_je_i^{\top}\mathcal{X}$, and
  hence,
  \begin{align}
    \label{peq-thm4-3}
    \notag\Hc_{\Yc}&= \Xc\BB\Xc\T+\alpha_{ij}\Xc\BB\Xc\T
    e_ie_j\T\Xc\T
    \\
    & \quad + \alpha_{ij}\Xc e_je_i\T\Xc\BB\Xc\T
    \\
    &\quad +\alpha_{ij}^2\Xc e_je_i\T\Xc\BB\Xc\T e_ie_j\T\Xc\T. \notag
  \end{align}
  Taking trace throughout, using $\tr(U_1U_2)=\tr(U_2U_1)$ repeatedly,
  and the fact that  $\alpha_{ij} \le \alpha$ from \eqref{defs-1},
  \begin{align}
    \label{peq-thm4-4}
    \notag\tr(\Hc_{\Yc}) & \leq \tr(\Xc\BB\Xc\T)
    \\
    &\quad +\alpha\tr\Big(\Xc\BB\Xc\T( e_ie_j\T\Xc\T+\Xc
    e_je_i\T)\Big)
    \\
    & \quad +\alpha^2\tr(e_j\T\Xc\T\Xc e_je_i\T\Xc\BB\Xc\T
    e_i). \notag
  \end{align}
  Using~\cite[Proposition 8.4.13]{DSB:09}, $\tr\Big(\Xc\BB\Xc\T(
  e_ie_j\T\Xc\T+\Xc e_je_i\T)\Big)\leq \lambda_1( e_ie_j\T\Xc\T+\Xc
  e_je_i\T)\tr(\Xc\BB\Xc\T)=\lambda_1( e_ie_j\T\Xc\T+\Xc
  e_je_i\T)\tr(\Hc_{\Xc})$.
  From Lemma \ref{lem4}, $\lambda_1( e_ie_j\T\Xc\T+\Xc e_je_i\T)\leq
  \max\{2e_i\T\Xc e_j,e_i\T\Xc e_j+\Vert \Xc e_j\Vert\}$. Noting that
  $ \tr(e_j\T\Xc\T\Xc e_je_i\T\Xc\BB\Xc\T e_i) = e_j\T\Xc\T\Xc
  e_je_i\T\Xc\BB\Xc\T e_i$, we get from~\eqref{defs-1} that
  \begin{align}
    \label{peq-thm4-5}
    \notag \tr\Big(\Xc\BB\Xc\T( e_ie_j\T\Xc\T+\Xc e_je_i\T)\Big) &
    \leq \beta\tr(\Hc_{\Xc}),
    \\
    \tr(e_j\T\Xc\T\Xc e_je_i\T\Xc\BB\Xc\T e_i)&\leq
    \gamma\bar{\gamma},
  \end{align}
  which together with \eqref{peq-thm4-4} yields the result.
\end{proof}

\begin{remark}\longthmtitle{Comparison of upper bound of trace of
    Gramian with the literature}\label{rem-trace}
  {\rm The work~\cite{TM-NF-MK:82} derives a different upper bound for
    the trace of Gramian as
    \begin{align}\label{eq:literature}
      \tr(\Wc_A) \le \frac{\tr(BB^{\top})}{1-\lambda_1(AA^{\top})} ,
    \end{align}
    which is valid under the assumption that
    $\sqrt{\lambda_1(AA^{\top})}<1$. This restricts its utility as
    this condition might be violated even if $\rho(\vert A\vert)<1$.
    % An illustration of this situation is the
    % $10-$node numerical example in~\cite[Section
    % 5.4.1]{PVC-EN-JC:21-tnse}.
    In contrast, the upper bound~\eqref{peq-thm4-00a} in
    Theorem~\ref{thm4} needs only $\rho(\vert A\vert)<1$ and hence is
    more generally applicable.  The $1000$ random Erd\H{o}s-R\'{e}nyi
    networks (used in our numerical simulations, cf.
    Section~\ref{ex-2}), all have $\sqrt{\lambda_1(AA^{\top})}<1$ and
    $\rho(\vert A\vert)<1$, and we observe that the upper
    bound~\eqref{peq-thm4-00a} is tighter than the upper
    bound~\eqref{eq:literature}.}\oprocend
\end{remark}

Theorem~\ref{thm4} can be used to lower bound the trace of the inverse
of the Gramian. The following result follows by using $\tr(U^{-1})\geq
\frac{n^2}{\tr(U)}$ , for $U\succeq 0$, cf.~\cite[Fact
8.12.2]{DSB:09}.

\begin{theorem}\longthmtitle{Lower bound on trace inverse of
    Gramian}\label{trace-inverse-lower-bound}
  Given a network with adjacency matrix $A$ and $\rho(\vert
  A\vert)<1$, consider the modified network resulting from adding the
  weight $w \in \real$ to its edge $i\longrightarrow j$, $i\neq j$,
  such that $\rho(\vert A\vert +\vert w\vert e_je_i\T)<1$. Then,
  \begin{align*}
    \tr(\Wcinv_A) & \geq \frac{n^2}{\tr(\Hc_{\Xc})},
    \\
    \tr({\Wcinv_{A+we_je_i\T}}) & \geq
    \frac{n^2}{(1+\alpha\beta)\tr(\Hc_{\Xc}) +
      \alpha^2\gamma\bar{\gamma}}.
  \end{align*}
\end{theorem}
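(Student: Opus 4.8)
The plan is to obtain both inequalities directly from Theorem~\ref{thm4} combined with the elementary bound $\tr(U^{-1}) \ge n^2/\tr(U)$, valid for every positive-definite $U \in \real^{n\times n}$ (this is \cite[Fact 8.12.2]{DSB:09}; it follows from the Cauchy--Schwarz inequality applied to the eigenvalues $\lambda_1(U),\dots,\lambda_n(U)$, since $n^2 = \big(\sum_k \sqrt{\lambda_k}\,\sqrt{1/\lambda_k}\big)^2 \le \big(\sum_k \lambda_k\big)\big(\sum_k 1/\lambda_k\big) = \tr(U)\tr(U^{-1})$). The standing controllability assumption on $(A,B)$ guarantees $\Wc_A \succ 0$, so the bound applies with $U = \Wc_A$.

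For the first claim I would take $U = \Wc_A$, obtaining $\tr(\Wcinv_A) \ge n^2/\tr(\Wc_A)$. The number $\tr(\Hc_{\Xc}) = \tr(\Xc\BB\Xc\T)$ is strictly positive, because $\Xc = (I-\vert A\vert)^{-1} = \sum_{t\ge 0}\vert A\vert^t$ is nonsingular (as $\rho(\vert A\vert)<1$) and $B \ne 0$; therefore dividing the positive constant $n^2$ by both sides of $\tr(\Wc_A) \le \tr(\Hc_{\Xc})$ from \eqref{peq-thm4-00a} reverses the inequality and yields $\tr(\Wcinv_A) \ge n^2/\tr(\Hc_{\Xc})$.

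For the second claim I would repeat the argument with $U = \Wc_{A+we_je_i\T}$, which is again positive definite since controllability is retained, giving $\tr(\Wcinv_{A+we_je_i\T}) \ge n^2/\tr(\Wc_{A+we_je_i\T})$. By \eqref{peq-thm4-00b} the denominator satisfies $\tr(\Wc_{A+we_je_i\T}) \le (1+\alpha\beta)\tr(\Hc_{\Xc}) + \alpha^2\gamma\bar\gamma$, and this right-hand side is strictly positive: the entries of $\Xc$ are nonnegative, so $\tr(\Hc_{\Xc}), \beta, \gamma, \bar\gamma \ge 0$, while the hypothesis $\rho(\vert A\vert + \vert w\vert e_je_i\T)<1$ keeps the denominator $1-\vert w\vert e_i\T\Xc e_j$ of $\alpha_{ij}$ positive, so $\alpha \ge 0$. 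Inverting once more gives the stated lower bound.

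There is no genuine obstacle here, since the substantive estimates were already established in Theorem~\ref{thm4}; the only points requiring care are (a) invertibility of $\Wc_A$ and $\Wc_{A+we_je_i\T}$, which makes $\Wcinv$ meaningful and lets the scalar inequality apply, and (b) strict positivity of the two denominators on the right-hand sides, which legitimizes passing from $\tr(\Wc) \le (\cdot)$ to $n^2/\tr(\Wc) \ge n^2/(\cdot)$ without flipping the inequality direction incorrectly. Both are routine consequences of the controllability hypothesis and the spectral-radius conditions in the statement.
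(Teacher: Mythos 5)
Your argument is exactly the paper's: the authors state that the result ``follows by using $\tr(U^{-1})\geq \frac{n^2}{\tr(U)}$ for $U\succeq 0$, cf.~\cite[Fact 8.12.2]{DSB:09}'' combined with the trace bounds of Theorem~\ref{thm4}, which is precisely your route. Your additional checks on invertibility of the Gramians and strict positivity of the denominators are sound housekeeping that the paper leaves implicit, but they do not constitute a different approach.
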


%%
%
%To derive upper bound on $\log\det(\Wc)$ we first restate simplified \cite[Thorem 2]{ZB-GHG:96},
%\begin{theorem}\cite{ZB-GHG:96}\longthmtitle{Upper bound on  log-det of symmetric positive definite matrix}\label{thm5}
%	Let $\Zc$ be a $n\times n$ symmetric positive definite matrix with each of its eigenvalue $\lambda(\Zc)$ satisfying $\lambda(\Zc)\in \Big[\underline{\lambda},\;\overline{\lambda}\Big]$ with $\underline{\lambda}>0$. If $\mu_1=\tr(\Zc)$, $\mu_2=\Vert \Zc \Vert^2=\tr(\Zc\T\Zc)$, $\overline{\omega}=\frac{\overline{\lambda}\mu_1-\mu_2}{\overline{\lambda}n-\mu_1}$ then
%	\begin{align*}
%		\log\det(\Zc) \leq& \frac{1}{\overline{\lambda}\overline{\omega}^2-\overline{\lambda}^2\overline{\omega}}\Big[\log(\overline{\lambda})(\mu_1\overline{\omega}^2-\mu_2\overline{\omega})\\
%		\notag&\hspace{3cm}+\log(\overline{\omega})(\mu_2\overline{\lambda}-\mu_1\overline{\lambda}^2)\Big].
%	\end{align*}
%	
%\end{theorem}
%
%
% These bounds will be used in our next main result of the section.
% Note that the upper bound in Lemma \ref{lem5} is never reached as we
% always have $n>1$.  Now we prove our next main result.

Next, we derive an upper bound for the log-det of Gramian.

\begin{theorem}\longthmtitle{Upper bound on log-det of
    Gramian}\label{thm6}
  Given a network with adjacency matrix $A$ and $\rho(\vert
  A\vert)<1$, consider the modified network resulting from adding the
  weight $w \in \real$ to an arbitrary edge $i\longrightarrow j$,
  $i\neq j$, such that $\rho(\vert A\vert +\vert w\vert
  e_je_i\T)<1$. Then,
  \begin{subequations}
    \begin{align}\label{peq-thm6-0a}
      &\log\det(\Wc_A) \leq \sigma
      n\log\Big(\frac{\tr(\Wc_A)}{n^{{1}/{\sigma}}}\Big) \leq
      \sigma
      n\log\Big(\frac{\tr(\Hc_{\Xc})}{n^{{1}/{\sigma}}}\Big),
    \end{align}
    with $\sigma=1$ if $\tr(\Wc_A)\leq 1$ and $\sigma=2$ if
    $1<\tr(\Wc_A)$ and
    \begin{align}\label{peq-thm6-0b}
      &\log\det(\Wc_{A+w e_je_i\T}) \leq \sigma
      n\log\Big(\frac{\tau}{n^{{1}/{\sigma}}}\Big),
    \end{align}
    with $\tau :=
    (1+\alpha\beta)\tr(\Hc_{\Xc})+\alpha^2\gamma\bar{\gamma}$ and $
    \sigma=1$ if $\tau\leq 1$ and $\sigma=2$ for $ 1<\tau$.
  \end{subequations}
%  Consider network system \eqref{peq-2.0.1} with controllability
%  Gramian $\Wc_Z$. If $\lambda_1(\Wc_Z)\leq\lo$, $\mu_1=\tr(\Wc_Z)$
%  then
%  \begin{align}\label{peq-thm6-0}
%    \log\det(\Wc_Z)
%    % \leq \log\det\left(\Wci_Z\right)
%    \leq
%    \frac{n\mu_1}{n\lo-\mu_1}\log(\lo)+
%    n\log\Big(\frac{\mu_1}{n}\Big).
%  \end{align}
%  For an edge $i\longrightarrow j$ with $i\neq j$ is modified with
%  weight $w\in\real$ such that $\rho(\vert A\vert +\vert w\vert
%  e_je_i\T)<1$, \eqref{peq-thm6-0} gives the upper bound for
%  \begin{enumerate}
%  \item Original network, i.e., $Z=A$, $\lambda_1(\Wc_A)\leq\lo$,
%    $\mu_1=\tr(\Wc_A)$.
%  \item Modified network, i.e., $Z=A+we_je_i^{\top}$,
%    $\lambda_1(\Wc_{A+we_je_i^{\top}})\leq\lo$,
%    $\mu_1=\tr(\Wc_{A+we_je_i^{\top}})$.
%  \end{enumerate}
\end{theorem}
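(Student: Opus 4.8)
The plan is to reduce the whole statement to one elementary scalar inequality between the log-determinant and the trace of a positive definite matrix, and then feed it the trace estimates already supplied by Theorem~\ref{thm4}. The single ingredient I will use repeatedly is: for any $W \succ 0$ with eigenvalues $\lambda_1,\dots,\lambda_n>0$, concavity of the logarithm (equivalently, the arithmetic--geometric mean inequality applied to $\lambda_1,\dots,\lambda_n$) gives
\begin{align*}
  \log\det(W)=\sum_{i=1}^{n}\log\lambda_i \le n\log\Big(\frac1n\sum_{i=1}^{n}\lambda_i\Big)=n\log\Big(\frac{\tr(W)}{n}\Big).
\end{align*}
Since $(A,B)$ is controllable for $T=n$ we have $\Wc_A\succ0$, so this applies verbatim with $W=\Wc_A$; it also applies with $W=\Wc_{A+we_je_i^{\top}}$ whenever the modified pair is controllable, and if it is not then $\log\det$ of the (singular, positive semidefinite) Gramian is $-\infty$ and every claimed bound holds trivially.

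For~\eqref{peq-thm6-0a}, I would argue as follows. If $\tr(\Wc_A)\le1$, set $\sigma=1$; the displayed inequality above is then exactly the first inequality in~\eqref{peq-thm6-0a}. If $\tr(\Wc_A)>1$, set $\sigma=2$; writing $n\log(\tr(\Wc_A)/n)=n\log\tr(\Wc_A)-n\log n$ and using $n\log\tr(\Wc_A)>0$ we may enlarge the right-hand side to $2n\log\tr(\Wc_A)-n\log n=2n\log(\tr(\Wc_A)/\sqrt n)$, which is the stated $\sigma=2$ bound. In either case the second inequality in~\eqref{peq-thm6-0a} follows from Theorem~\ref{thm4}, which yields $\tr(\Wc_A)\le\tr(\Wci_A)\le\tr(\Hc_{\Xc})$, combined with the monotonicity of $x\mapsto \sigma n\log(x/n^{1/\sigma})$ on $x>0$ (it is increasing since $\sigma n>0$).

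For~\eqref{peq-thm6-0b}, I would apply the same scalar inequality to $W=\Wc_{A+we_je_i^{\top}}$, obtaining $\log\det(\Wc_{A+we_je_i^{\top}})\le n\log(\tr(\Wc_{A+we_je_i^{\top}})/n)$, and then invoke the bound~\eqref{peq-thm4-00b} of Theorem~\ref{thm4}, namely $\tr(\Wc_{A+we_je_i^{\top}})\le\tau$, together with monotonicity to get $\le n\log(\tau/n)$. If $\tau\le1$ this is the claim with $\sigma=1$; if $\tau>1$ the same ``$n\log\tau>0$'' enlargement used above turns $n\log(\tau/n)=n\log\tau-n\log n$ into $2n\log\tau-n\log n=2n\log(\tau/\sqrt n)$, which is~\eqref{peq-thm6-0b} with $\sigma=2$.

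I do not expect a genuine obstacle: the argument is the AM--GM/Jensen bound on $\log\det$ packaged with the trace estimates of Theorem~\ref{thm4}. The only place that needs care is the bookkeeping of the case split on $\sigma$ and the orientation of each inequality in the chains --- in particular, that replacing $\tr(\Wc_A)$ (resp.\ $\tr(\Wc_{A+we_je_i^{\top}})$) by the larger quantity $\tr(\Hc_{\Xc})$ (resp.\ $\tau$) is legitimate because the envelope $x\mapsto\sigma n\log(x/n^{1/\sigma})$ is increasing, and that passing from the $\sigma=1$ form to the $\sigma=2$ form only weakens the bound, hence is valid exactly when the relevant quantity exceeds~$1$.
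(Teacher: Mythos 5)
Your proof is correct, but it takes a genuinely different and more elementary route than the paper. The paper invokes the Bai--Golub upper bound on $\log\det$ (\cite[Theorem 2]{ZB-GHG:96}), which expresses the estimate in terms of $\mu_1=\tr(\Wc_Z)$, $\mu_2=\Vert\Wc_Z\Vert^2$ and a parameter $\lo\in[\lambda_1(\Wc_Z),\mu_1]$; it then uses Lemma~\ref{lem5} to confine $\mu_2$ to $[\mu_1^2/n,\lo\mu_1]$ and maximizes the two terms $\phi_1\phi_2$ and $\phi_1\phi_3$ over that interval to arrive at the scalar inequality $\log\det(\Wc_Z)\le n\log\mu_1-n\log n$ for $\mu_1\le 1$ and $\le 2n\log\mu_1-n\log n$ for $\mu_1>1$, after which it plugs in the trace bounds of Theorem~\ref{thm4} exactly as you do. Your AM--GM/Jensen step $\log\det(W)\le n\log(\tr(W)/n)$ reaches the same place in one line, and in fact yields the $\sigma=1$ bound \emph{unconditionally}; the $\sigma=2$ form is then obtained, as you note, by adding the nonnegative quantity $n\log\tr(W)$ (resp.\ $n\log\tau$), so your intermediate estimate is at least as tight as the paper's. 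The remaining bookkeeping --- monotonicity of $x\mapsto\sigma n\log(x/n^{1/\sigma})$ to substitute the larger quantities $\tr(\Hc_{\Xc})$ and $\tau$ from Theorem~\ref{thm4}, positive definiteness of $\Wc_A$ from controllability, and the degenerate $\log\det=-\infty$ case for the modified network --- is handled correctly. What the paper's heavier machinery buys is unclear here, since after its maximization over $\mu_2$ and $\lo$ it lands on a bound no sharper than yours; your argument is the one I would keep.
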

\begin{proof}
  Let $Z$ be an arbitrary adjacency matrix with Gramian $\Wc_Z$. We
  use~\cite[Theorem 2]{ZB-GHG:96} to write $
  \log\det\left(\Wc_Z\right)\leq \phi_1(\phi_2+\phi_3)$, where
  \begin{align*}
    \phi_1 & = \frac{1}{\lo\wo^2-\lo^2\wo}, \quad \phi_2 =
    (\mu_1\wo^2-\mu_2\wo) \log(\lo),
    \\
    \phi_3 & =(\mu_2\lo-\mu_1\lo^2) \log(\wo),
    \\
    \wo & = \frac{\lo \mu_1-\mu_2}{\lo n-\mu_1},
  \end{align*}
  for $\mu_1=\tr(\Wc_Z)$, $\lambda_1(\Wc_Z)\leq \lo\leq \mu_1$ and $
  \mu_2=\Vert \Wc_Z \Vert^2$.  Direct substitution of the value of
  $\wo$ and some simplifications lead~to
  % Substituting $\wo$ in $\phi_1,\phi_2,\phi_3$ and simplifying gives,
  % \begin{align*}
  %   \phi_1 &=\frac{\phi_{1n}}{\phi_{1d}}=
  %   \frac{(n\lo-\mu_1)^2}{\lo(\mu_2-\lo\mu_1)(\mu_2+n\lo^2-2\mu_1\lo)},
  % %   {2\lo^3\mu_1^2 +
  % %   \lo\mu_2^2-3\lo^2\mu_1\mu_2-n\lo^4\mu_1+n\lo^3\mu_2},
  %   \\
  %   \phi_2&=\frac{\phi_{2n}}{\phi_{2d}}= \frac{\lo(n\mu_2-\mu_1^2)(\mu_2-\lo\mu_1)}
  % %   {\lo^2\mu_1^3 -
  % %   \lo\mu_1^2\mu_2-n\lo^2\mu_1\mu_2+n\lo\mu_2^2}
  %   {(n\lo-\mu_1)^2}
  %   \log(\lo),
  %   \\
  %   \phi_3&=\lo(\mu_2-\lo\mu_1)\log\Bigg(\frac{\lo\mu_1-\mu_2}{n\lo
  %   -\mu_1}\Bigg) .
  % \end{align*}
  % Now $\phi_{1d},\;\phi_{2n}$ are quadratic and $\phi_3$ is linear in
  % $\mu_2$ respectively. Canceling the common factor $\mu_2-\lo\mu_1$
  % gives,
  \begin{align*}
    \phi_1\phi_2 & =
    \frac{n\mu_2-\mu_1^2}{\mu_2+n\lo^2-2\mu_1\lo}\log(\lo),
    \\
    \phi_1\phi_3 & = \frac{(n\lo-\mu_1)^2}{\mu_2+n\lo^2-2\lo\mu_1}\log
    \bigg(\frac{\lo\mu_1-\mu_2}{n\lo-\mu_1}\bigg).
  \end{align*}
  We look at these as functions of~$\mu_2$. From Lemma~\ref{lem5},
  $\mu_2\in \Big[\frac{\mu_1^2}{n},\;\lo\mu_1\Big]$.  Since
  $\phi_1\phi_3$ is a decreasing function of $\mu_2$, its maximum
  value occurs at $\mu_2=\frac{\mu_1^2}{n}$, yielding $
  n\log(\mu_1)-n\log(n)$.  To find the maximum value of
  $\phi_1\phi_2$, we compute its derivative as $\frac{d}{d\mu_2}
  (\phi_1\phi_2) =
  \Big(\frac{n\lo-\mu_1}{\mu_2+n\lo^2-2\mu_1\lo}\Big)^2\log(\lo)$. If
  $\log(\lo)\leq 0$, then $\phi_1\phi_2$ is decreasing and has maximum
  value of $0$ at $\mu_2=\frac{\mu_1^2}{n}$. If $\log(\lo)> 0$, then
  $\phi_1\phi_2$ is increasing and has maximum value of
  $\frac{\mu_1}{\lo}\log(\lo)$ at $\mu_2=\lo\mu_1$.  From the above
  discussion, and from $\lo\leq \mu_1 \leq n\lo$, we deduce
  \begin{align}
    \label{peq-thm6-1} 
    \log\det(\Wc_Z)\leq 
    \begin{cases}
      n\log(\mu_1)-n\log(n) & \text{if}\; \mu_1\leq 1,
      \\
      2n\log(\mu_1)-n\log(n) &  \text{if}\; \mu_1> 1.
    \end{cases}
  \end{align}
  For $Z=A$, \eqref{peq-thm6-0a} follows from~\eqref{peq-thm6-1} by
  using $\tr(\Wc_A) \leq \tr(\Hc_{\Xc})$, cf. Theorem~\ref{thm4}.  For
  $Z=A+we_je_i^{\top}$, \eqref{peq-thm6-0b} follows
  from~\eqref{peq-thm6-1} by using $\tr(\Wc_{A+w e_je_i\T})\leq
  (1+\alpha\beta)\tr(\Hc_{\Xc})+\alpha^2\gamma\bar{\gamma}$,
  cf. Theorem~\ref{thm4}.
%
%Note that we have also used the properties for positive real numbers
% $p_1,p_2$, $p_1\log(p_2) = \log(p_2^{p_1})$ and
% $\log\Big(\frac{p_1}{p_2}\Big)=\log(p_1)-\log(p_2)$.
\end{proof}

Note that the upper bound on the log-det of the Gramian in
Theorem~\ref{thm6} depends on the trace of Gramian and the size of the
network.

\subsection{Ensuring Stability of the Modified   Network}\label{stability-modified-network-1}

The modification of edges in a network can lead to instability. Here
we provide bounds on the edge-weight addition parameter $w \in \real$
that ensures instability does not arise.  The next result
generalizes~\cite[Theorem 1]{GL-CA:20-ifac} by considering arbitrary
edge weight modifications instead of only positive ones.

\begin{theorem}\longthmtitle{Bounds on edge weight perturbation
    resulting in stable modified network}\label{thm:edge-weight-bounds-1}
  Consider the network dynamics \eqref{peq-2.0.1} with a stable
  adjacency matrix $A$ and let $w\in \real$ be the edge modification
  on the edge $i \longrightarrow j$.  If $\rho(\vert A\vert)<1$ and
  $\frac{-1}{\mathcal{X}_{ij}}<w<\frac{1}{\mathcal{X}_{ij}}$, then the
  modified network $A+we_je_i^{\top}$ is stable.
\end{theorem}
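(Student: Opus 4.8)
The plan is to reduce the claim $\rho(A+w e_j e_i^{\top})<1$ to a statement about nonnegative matrices and then close it with the Sherman--Morrison identity already used in the proof of Theorem~\ref{thm4}. First I would invoke two classical facts (cf.~\cite{RAH-CRJ:12}): for any real square matrix $M$ one has $\rho(M)\le\rho(|M|)$, and if $0\le B\le C$ entrywise then $\rho(B)\le\rho(C)$. Since $|A+w e_j e_i^{\top}|\le |A|+|w|\,e_j e_i^{\top}$ entrywise, these give $\rho(A+w e_j e_i^{\top})\le\rho(|A|+|w|\,e_j e_i^{\top})$, so it suffices to prove $\rho(|A|+|w|\,e_j e_i^{\top})<1$.

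Write $N:=|A|+|w|\,e_j e_i^{\top}\ge 0$. I would use the standard $M$-matrix characterization that, for a nonnegative matrix $N$, $\rho(N)<1$ if and only if $I-N$ is invertible with $(I-N)^{-1}\ge 0$ (equivalently $(I-N)^{-1}=\sum_{t\ge 0}N^{t}$). Since $\rho(|A|)<1$, the matrix $\mathcal{X}=(I-|A|)^{-1}=\sum_{t\ge 0}|A|^{t}$ exists and is nonnegative, cf.~\cite[Lemma 2.3.3]{GHG-CFVL:13}. Then Sherman--Morrison (cf.~\cite[Section 0.7.4]{RAH-CRJ:12}, exactly as in the proof of Theorem~\ref{thm4}) gives
\begin{align*}
  (I-N)^{-1}=\mathcal{X}+\frac{|w|}{1-|w|\,\mathcal{X}_{ij}}\,\mathcal{X}e_j e_i^{\top}\mathcal{X},
\end{align*}
valid whenever $1-|w|\,\mathcal{X}_{ij}\neq 0$.

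The hypothesis $-1/\mathcal{X}_{ij}<w<1/\mathcal{X}_{ij}$ is exactly $|w|\,\mathcal{X}_{ij}<1$ (and if $\mathcal{X}_{ij}=0$ the stated interval is all of $\real$, while the denominator equals $1$), so $1-|w|\,\mathcal{X}_{ij}>0$ in every case; the displayed formula therefore applies, and since $\mathcal{X}\ge 0$ forces $\mathcal{X}e_j e_i^{\top}\mathcal{X}\ge 0$, the right-hand side is a nonnegative matrix. Hence $(I-N)^{-1}$ exists and is nonnegative, so $\rho(N)<1$, and the chain of inequalities above yields $\rho(A+w e_j e_i^{\top})<1$, i.e.\ the modified network is stable.

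I do not anticipate a serious obstacle; the single point needing care is that establishing invertibility of $I-N$ alone --- e.g.\ via the determinant lemma, which gives $\det(I-N)=\det(I-|A|)\,(1-|w|\,\mathcal{X}_{ij})\neq 0$ --- does \emph{not} by itself imply $\rho(N)<1$, so one genuinely needs the nonnegativity of $(I-N)^{-1}$, which is why the explicit Sherman--Morrison computation is the crux rather than a bare determinant argument. An equivalent route would be a homotopy in $N(s)=|A|+s|w|\,e_j e_i^{\top}$, $s\in[0,1]$: the spectral radius of a nonnegative matrix is continuous and monotone in its entries, $\rho(N(0))=\rho(|A|)<1$, and $\rho(N(s))=1$ would force (via the same determinant lemma) $s=1/(|w|\,\mathcal{X}_{ij})>1$, a contradiction; but the explicit inverse is cleaner and reuses machinery already present in the paper.
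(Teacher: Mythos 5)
Your proposal is correct and follows essentially the same route as the paper's proof: reduce to the nonnegative matrix $|A|+|w|\,e_je_i^{\top}$ via $\rho(M)\le\rho(|M|)$ and entrywise monotonicity, apply Sherman--Morrison to $I-|A|-|w|\,e_je_i^{\top}$, and conclude $\rho<1$ from nonnegativity of the resulting inverse (the inverse-positivity characterization the paper cites as \cite[Theorem 1.2]{JD-AZ:09}). Your explicit remark that invertibility of $I-N$ alone would not suffice, and that the nonnegativity of $(I-N)^{-1}$ is the crux, is a correct and worthwhile clarification of a point the paper leaves implicit.
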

\begin{proof}
  We seek to prove that $\rho(A+we_je_i^{\top})<1$.  We know that
  $A\leq \vert A\vert$ and $A+we_je_i^{\top}\leq \vert A \vert +\vert
  w \vert e_je_i^{\top}$.  Consider $(I-\vert A\vert-\vert w\vert
  e_je_i^{\top})^{-1}$.  Using~\cite[Section 0.7.4]{RAH-CRJ:12},
  \begin{align}
    \label{peq-thm3.4.1}
    (I-\vert A\vert-\vert w\vert
    e_je_i^{\top})^{-1}=\mathcal{X}+\frac{\vert w\vert
      \mathcal{X}e_je_i^{\top}\mathcal{X}}{1-\vert w\vert
      e_i^{\top}\mathcal{X}e_j}.
  \end{align}
  As $\rho(\vert A\vert)<1$, we have $\mathcal{X}\geq 0$ from
  \cite[Theorem 1.2]{JD-AZ:09}.  So, if $1-\vert w\vert
  e_i^{\top}(I-\vert A\vert)^{-1}e_j>0$, then $(I-\vert A\vert-\vert
  w\vert e_je_i^{\top})^{-1}\geq 0$. Note that
  $\frac{-1}{\mathcal{X}_{ij}}<w<\frac{1}{\mathcal{X}_{ij}}$ is
  equivalent to the condition $1-\vert w\vert e_i^{\top}(I-\vert
  A\vert)^{-1}e_j>0$. Therefore, in such case, and using \cite[Theorem
  1.2]{JD-AZ:09}, we get $\rho(\vert A \vert +\vert w \vert
  e_je_i^{\top})<1$.  From \cite[Theorem 8.1.18]{RAH-CRJ:12},
  $\rho(A+we_je_i^{\top})\leq \rho(\vert A \vert +\vert w \vert
  e_je_i^{\top})$ and the required result follows.
\end{proof}

Note that Theorem~\ref{thm:edge-weight-bounds-1} is a sufficient
condition for general networks, i.e., for $A\in\real^{n\times n}$ and
$w\in\real$, but is a necessary and sufficient condition for $A\geq0$
and $w\ge 0$, cf.~\cite[Theorem 1]{GL-CA:20-ifac}.  The element
$(j,i)$ of $\vert A \vert^t$ represents the sum of the products of the
(absolute values of) weights of all paths from node $i$ to node $j$ of
length exactly $t$, where $t$ is an arbitrary positive integer. Hence,
from~\eqref{peq-lem3-1}, if there is a path (of any length) from node
$i$ to node $j$, then the $(j,i)$ element of $\Xc$ is non-zero.  From
Theorem~\ref{thm:edge-weight-bounds-1}, the bounds on the modification
weight $w$ of edge $i\longrightarrow j$ depend upon the $(i,j)$
element of $\Xc$, i.e., on the existence of a path from node $j$ to
node $i$. Therefore, a stable network could be made unstable by
forming cycles in the network through edge modifications. These
observations lead us to state the following result regarding the
stability of stem-bud networks after edge modification.

\begin{corollary}\longthmtitle{Stability of modified stem-bud
    networks}\label{cor:stem-bud-stability-1}
  A stable stem-bud network with $A\geq 0$ remains stable if
  modifications $w\ge 0$ are performed to the  weight of edges in its
  stem.
\end{corollary}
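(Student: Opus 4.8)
The plan is to derive the statement as a direct consequence of Theorem~\ref{thm:edge-weight-bounds-1}, by showing that for every edge in the stem the admissible range of weight modifications is the whole real line. First I would observe that $A\ge 0$ gives $\vert A\vert = A$, so that stability of $A$ (that is, $\rho(A)<1$) is precisely the hypothesis $\rho(\vert A\vert)<1$ needed in Theorem~\ref{thm:edge-weight-bounds-1}; moreover, by~\eqref{peq-lem3-1}, $\Xc = (I-A)^{-1}=\sum_{t=0}^{\infty}A^t\ge 0$. If $y=1$ the network is a ring and has no stem edges, so the claim is vacuous; hence assume $y\ge 2$.

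Next I would set up the bookkeeping. An edge in the stem goes from node $p$ to node $p+1$ for some $p\in\{1,\dots,y-1\}$, so in the notation of Theorem~\ref{thm:edge-weight-bounds-1} we take $i=p$, $j=p+1$ (hence $i\neq j$), and modifying its weight by $w$ corresponds to the rank-one update $A\mapsto A+w\,e_{p+1}e_p^{\top}$. The scalar governing the allowable range of $w$ is $\Xc_{ij}=e_p^{\top}\Xc e_{p+1}$, the $(p,p+1)$ entry of~$\Xc$. Since the $(p,p+1)$ entry of $A^t$ equals the sum of the products of edge weights along all directed paths of length $t$ from node $p+1$ to node $p$, summing over $t$ shows that $e_p^{\top}\Xc e_{p+1}\neq 0$ if and only if there is a directed path (of some length) from node $p+1$ to node~$p$.

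The crux is then the structural claim that, in a stem-bud network, there is \emph{no} directed path from $p+1$ to $p$ whenever $1\le p\le y-1$. I would argue this from the topology of the network: a node $p$ with $p\le y-1$ lies strictly inside the stem, so its only in-neighbor is node $p-1$ (and node $1$ has no in-neighbor at all). Hence any directed path ending at $p$ must traverse the segment $1\to 2\to\dots\to p$ and therefore originates at some node in $\{1,\dots,p\}$; in particular it cannot originate at node $p+1$. Consequently $e_p^{\top}\Xc e_{p+1}=0$, so the condition $-1/\Xc_{ij}<w<1/\Xc_{ij}$ of Theorem~\ref{thm:edge-weight-bounds-1} holds for every $w\in\real$ (equivalently, $1-\vert w\vert\,e_p^{\top}\Xc e_{p+1}=1>0$ for all $w$), and Theorem~\ref{thm:edge-weight-bounds-1} then yields that $A+w\,e_{p+1}e_p^{\top}$ is stable for all $w$, in particular for all $w\ge 0$. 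I do not expect a genuine obstacle here: the only points requiring care are matching ``modifying the weight of stem edge $p\to p+1$'' with the update $w\,e_{p+1}e_p^{\top}$, and noting that the relevant entry of $\Xc$ is the transposed one, $\Xc_{p,p+1}$, which encodes reverse reachability from node $p+1$ back to node~$p$.
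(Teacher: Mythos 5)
Your proposal is correct and follows essentially the same route as the paper, which presents the corollary as a direct consequence of Theorem~\ref{thm:edge-weight-bounds-1} together with the observation that $\Xc_{ij}$ is nonzero exactly when there is a directed path from node $j$ back to node $i$, and that no such return path exists for a stem edge $p\to p+1$ with $p\le y-1$. Your write-up merely makes explicit the details the paper leaves implicit (the identification $\vert A\vert=A$, the rank-one update $w\,e_{p+1}e_p^{\top}$, and the reading of the degenerate condition $\Xc_{p,p+1}=0$ as an unbounded admissible range for $w$), and in fact establishes the slightly stronger conclusion that stability is retained for all $w\in\real$, not just $w\ge 0$.
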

% \begin{proof}
%   \PVC{From the discussion following
%     Theorem~\ref{thm:edge-weight-bounds-1} only those existing or new
%     edges can make the network unstable if they are part of an
%     existing cycle or a cycle is formed by addition of a new
%     edge. When any edge in the stem part of a stem-bud network is
%     modified, a cycle can never be formed. Hence the modified network
%     is always stable when only edges in the stem part are modified.}
% \end{proof}

Consequently, a directed line network can be made unstable only by
adding a new edge (that will create a bud and hence break its pure
line structure).  A directed ring network can be made unstable by
suitably modifying the weight of any of its edges. If all edges of a
stem-bud network have equal positive weights and only edges of the bud
are modified, then the bounds on the edge weight modification and
upper bound on the trace of Gramian depend upon the length of the bud,
as shown next.
  
\begin{theorem}\longthmtitle{Stem-bud networks with
    equal edge weights}\label{stem-bud-analysis}
  Consider a stable stem-bud network with $A\geq 0$, all edge weights
  equal to $0\leq a<1$, and a single input at node 1.  We have the following,
  \begin{enumerate}
  \item Let $w>0$ denote the weight modification performed only on the
    edges of the bud.  Then, the modified network is stable if $0< w
    <\frac{1-a^{L_b}}{a^{L_b-1}}$;
  \item $\tr(\Wc_A)\leq \sum_{k=0}^{y-2}a^{2k} +
    \sum_{k=y-1}^{n-1}\frac{a^{2k}}{(1-a^{L_b})^2}$.
  \end{enumerate}
\end{theorem}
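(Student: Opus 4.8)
The plan is to exploit the explicit structure of a stem-bud network with equal edge weights, where the adjacency matrix is $A = a\,N$ for a $0/1$ matrix $N$ encoding the stem-plus-bud topology, together with the two general results already established: Theorem~\ref{thm:edge-weight-bounds-1} for the stability bound and Theorem~\ref{thm4} for the trace bound. For part~(i), I would compute the relevant entry $\Xc_{ij} = e_i\T(I-\vert A\vert)^{-1}e_j$ appearing in Theorem~\ref{thm:edge-weight-bounds-1}. The edge being modified is the closing edge of the bud, $n \longrightarrow y$, so in the notation $i\longrightarrow j$ we have $i=n$, $j=y$, and we need $e_n\T \Xc e_y = \big((I-aN)^{-1}\big)_{ny}$. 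By \eqref{peq-lem3-1}, this equals $\sum_{t\ge 0} a^t (N^t)_{ny}$, and $(N^t)_{ny}$ counts directed walks of length $t$ from node $y$ to node $n$ in the stem-bud graph. Since the only way from $y$ to $n$ is to go around the bud (lengths $n-y,\ n-y+L_b,\ n-y+2L_b,\dots$, i.e.\ $L_b-1 + kL_b$ for $k\ge 0$, using $L_b = n-y+1$), this geometric sum evaluates to $\frac{a^{L_b-1}}{1-a^{L_b}}$. Plugging into the condition $-1/\Xc_{ij} < w < 1/\Xc_{ij}$ of Theorem~\ref{thm:edge-weight-bounds-1} and restricting to $w>0$ gives exactly $0 < w < \frac{1-a^{L_b}}{a^{L_b-1}}$.

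For part~(ii), I would apply Theorem~\ref{thm4}'s bound $\tr(\Wc_A)\le \tr(\Hc_{\Xc})$ with $\Hc_{\Xc} = \Xc \vert B\vert\vert B\vert\T \Xc\T$ and $B = e_1$ (single input at node~1), so $\tr(\Hc_{\Xc}) = \Vert \Xc e_1\Vert^2 = \sum_{k=1}^{n} \big(\Xc_{k1}\big)^2$. Here $\Xc_{k1} = \big((I-aN)^{-1}\big)_{k1} = \sum_{t\ge 0} a^t (N^t)_{k1}$ counts walks from node~1 to node~$k$. For a node $k$ on the stem ($k \le y-1$, equivalently $0 \le k-1 \le y-2$) there is a unique walk, of length $k-1$, so $\Xc_{k1} = a^{k-1}$, contributing $a^{2(k-1)}$ and giving the first sum $\sum_{k=0}^{y-2} a^{2k}$ after reindexing. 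For a node on the bud, $y \le k \le n$ (reindexing $k-1$ runs over $y-1,\dots,n-1$), the walks from~1 reach $k$ first at length $k-1$ and then after each further loop of length $L_b$, so $\Xc_{k1} = a^{k-1}\sum_{j\ge 0} a^{jL_b} = \frac{a^{k-1}}{1-a^{L_b}}$, contributing $\frac{a^{2(k-1)}}{(1-a^{L_b})^2}$ and giving the second sum. Assembling, $\tr(\Wc_A) \le \tr(\Hc_{\Xc}) = \sum_{k=0}^{y-2} a^{2k} + \sum_{k=y-1}^{n-1} \frac{a^{2k}}{(1-a^{L_b})^2}$.

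The routine but slightly delicate point is the precise bookkeeping of walk lengths on the bud: one must verify that the first arrival at node $k$ on the bud is at length $k-1$ (counting $1\to\cdots\to y\to\cdots\to k$), that successive arrivals differ by exactly $L_b$, and that no other walks contribute — this uses that the graph has out-degree at most one at every node except the junction~$y$, and that from~$y$ the two out-edges lead back into the same unique cycle, so the walk structure is essentially deterministic once we are on the bud. I would also double-check the boundary conventions ($y=1$ for a pure ring, where the first sum is empty; $L_b=\infty$ for a pure line, where $a^{L_b}=0$ and the second sum degenerates to $\sum a^{2k}$, consistent with Corollary~\ref{corr1}'s picture). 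The only genuine obstacle is ensuring the geometric series for $\Xc_{ny}$ and $\Xc_{k1}$ are correctly indexed; everything else follows by direct substitution into the cited theorems. I expect no conceptual difficulty, just careful exponent arithmetic, and the convention $L_b=n-y+1$ must be used consistently throughout.
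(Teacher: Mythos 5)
Your proposal follows essentially the same route as the paper: compute the entries of $\Xc=(I-\vert A\vert)^{-1}$ explicitly via the geometric walk-counting structure of the stem-bud graph and substitute into Theorems~\ref{thm:edge-weight-bounds-1} and~\ref{thm4}; your part~(ii) matches the paper's argument exactly, including the split between stem nodes (reached once) and bud nodes (reached every $L_b$ steps). The one incompleteness is in part~(i): the statement concerns modification of the edges of the bud generally, not only the closing edge $n\longrightarrow y$, so you also need $\Xc_{i,i+1}$ for $y\le i\le n-1$ --- but these count walks from $i+1$ back around the bud to $i$, of lengths $L_b-1+kL_b$, so they all equal the same value $\frac{a^{L_b-1}}{1-a^{L_b}}$ and the bound is unchanged, which is precisely how the paper's proof handles it.
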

\begin{proof}
  To show the result, we rely on
  Theorems~\ref{thm:edge-weight-bounds-1} and~\ref{thm4}. Before
  invoking them, we compute the expression for the elements
  of~$\Xc=(\Xc_{ij})$. Since $A\geq0$ and $\rho(A)<1$ with $0\leq
  a<1$, we have from \eqref{peq-lem3-1} that $\Xc =
  \sum_{t=0}^{\infty}A^t$. Note that the ${ij}$th element of $A^t$
  corresponds to the product of weights in the path of length $t$ (if
  it exists) from node $j$ to node~$i$.  Any node $i$ in the stem
  ($1\leq i <y$) can only be reached by the input once while any node
  $i$ in the bud ($y\leq i \leq n$) can be reached multiple
  times. Therefore, for $1\leq i<y$, we have
  \begin{align*}
    \Xc_{ij} &=
    \begin{cases}
      0 & \text{for } i<j,
      \\
      a^{i-j} & \text{for } i\geq j,
    \end{cases}
  \end{align*}
  and for $y\leq i\leq n$, we have
  \begin{align*}
    \Xc_{ij} &= 
    \begin{cases}
      \sum_{k=0}^{\infty}
      a^{L_b+i-j}a^{kL_b}=\frac{a^{L_b+i-j}}{1-a^{L_b}} & \text{for
      } y\leq i<j,
      \\
      \sum_{k=0}^{\infty} a^{i-j}a^{kL_b} =\frac{a^{i-j}}{1-a^{L_b}}
      & \text{for } j\leq i\leq n,
    \end{cases}
  \end{align*}
  where we have used the infinite geometric series formula
  $\sum_{k=0}^{\infty}a_1r^k = \frac{a_1}{1-r}$, for $\vert r \vert
  <1$.

  To establish (i), we resort to
  Theorem~\ref{thm:edge-weight-bounds-1}.  Since only edges in the bud
  are modified, to ensure stability it is enough to enforce that $|w|$
  is smaller than $1/\Xc_{i-1,i}$, for $y+1\leq i \leq n$, and
  $1/\Xc_{ny}$. The minimum of these values is
  $\frac{1-a^{L_b}}{a^{L_b-1}}$, and the result follows.  Regarding
  (ii), since there is only one input at node~$1$, $B = e_1$, and
  hence $\Xc B$ corresponds to the first column of~$\Xc$. Therefore, $
  \tr(\Hc_{\Xc}) =
  \sum_{k=0}^{y-2}a^{2k}+\sum_{k=y-1}^{n-1}\frac{a^{2k}}{(1-a^{L_b})^2}$,
  and the result follows from Theorem~\ref{thm4}.
\end{proof}

% With the knowledge of bounds on $w$ and $\tr(\Hc_{\Xc})$ we can
% easily numerically compute bounds on the trace and log-det 
% %
% \marginJC{For log-det? And why do you mention the trace here, isn't
%   that what we just have done? This sentence is quite confusing.}
% %
% of Gramian after a single edge addition using Theorem~\ref{thm4} and
% Theorem~\ref{thm6}.  To illustrate Theorem~\ref{stem-bud-analysis},
% consider a $6-$node network with all edges having equal weight $a$ and
% junction at node $2$. The system has only one input at node $1$ and
% $L_b=5$. Then using Theorem~\ref{stem-bud-analysis} for this system we
% get,
% \begin{align*}
%   \Xc=\begin{pmatrix}
%     1 & 0 & 0 & 0 & 0 & 0\\
%     \alpha_6 a & \alpha_6  & \alpha_6 a^4 & \alpha_6 a^3 & \alpha_6 a^2 & \alpha_6 a \\
%     \alpha_6 a^2 & \alpha_6 a & \alpha_6  & \alpha_6 a^4 & \alpha_6 a^3 & \alpha_6 a^2 \\
%     \alpha_6 a^3 & \alpha_6 a^2 & \alpha_6 a & \alpha_6  & \alpha_6 a^4 & \alpha_6 a^3 \\
%     \alpha_6 a^4& \alpha_6 a^3 & \alpha_6 a^2 & \alpha_6 a & \alpha_6  & \alpha_6 a ^4\\
%     \alpha_6 a^5& \alpha_6 a^4 & \alpha_6 a^3 & \alpha_6 a^2 & \alpha_6 a & \alpha_6  \\
%   \end{pmatrix}
% \end{align*}
% where $\alpha_6=\frac{1}{1-a^5}$. We compute the stability weight
% bound as $0<w<\frac{1-a^5}{a^4}$ and
% $\tr(\Hc_{\Xc})=1+\frac{a^2+a^4+a^6+a^8+a^{10}}{(1-a^5)^2}$.

\section{Numerical Examples}\label{examples-1}
Here, we provide two sets of simulations to illustrate the efficacy of
the proposed edge centrality measure and the bounds derived for the
performance metrics.  All the simulations are performed using MATLAB
on a desktop with Intel core-i7-8700, $3.20$ GHz processor with $16$
GB of RAM. We consider the objective function~$f$ as the trace and as the
log-det of the Gramian and use the following notation in both
examples: $f_I$ denotes the value of $f $ for the original network;
$f_{EC}$ denotes the value of $f$ for the network after modification
of the edge with the best edge centrality; and $f_{EX}$ denotes the
global maximum of $f$ computed by exhaustive search of all single-edge
modifications.

% We seek to understand how helpful the network performance bounds are
% in capturing the global maxima of the corresponding metric.  Using
% Theorems~\ref{thm4} and~\ref{thm6}, we propose a heuristic method to
% approximately determine the value of the global maximum objective
% (trace or log-det of Gramian) for the case of network with one
% modified edge.  f_g is an estimate of the global maximum.

\subsection{6-Node Stem-Bud Networks}\label{ex-1}
Consider a family of $6$-node stem-bud networks, all with parameters
$a_{21}=0.9$, $a_{32}=0.7$, $a_{43}=0.8$, $a_{54}=0.6$, and
$a_{65}=0.8$. We consider different networks depending on where the
junction node $y$ lies, i.e., $0\leq y \leq 5$. Figure~\ref{6-node-1}
shows an example with $y=2$. When $y=0$, the network is a directed
line and when $y=1$, the network is a directed ring.  Whenever a
backward edge is present, we set $a_{y6}=0.7$.  We consider actuators
at nodes $1$ and~$3$, and take $T=2n$. In all cases, the
controllability Gramian is a diagonal matrix, validating
Proposition~\ref{prop1}.

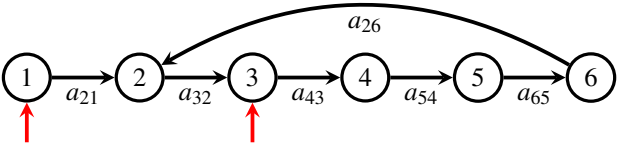
\begin{figure}[htb]
  \centering
  \begin{tikzpicture}[>=stealth,line width=0.43mm]
    \node[shape=circle,draw=black,line width=0.43mm] (no1) at
    (0,0){$1$}; \node[shape=circle,draw=black,line width=0.43mm] (no2)
    at (1.5,0){$2$}; \node[shape=circle,draw=black,line width=0.43mm]
    (no3) at (3,0){$3$}; \node[shape=circle,draw=black,line
    width=0.43mm] (no4) at (4.5,0){$4$};
    \node[shape=circle,draw=black,line width=0.43mm] (no5) at
    (6,0){$5$}; \node[shape=circle,draw=black,line width=0.43mm] (no6)
    at (7.5,0){$6$};
    % \draw[->,line width = 0.5mm](no1)--node[above]{$0.2$}(no2);
    \draw[->,line width = 0.5mm](no1)--node[below]{$a_{21}$}(no2);
    % \draw[->,line width = 0.5mm](no2)--node[above]{$0.1$}(no3);
    \draw[->,line width = 0.5mm](no2)--node[below]{$a_{32}$}(no3);
    % \draw[->,line width = 0.5mm](no3)--node[above]{$0.5$}(no4);
    \draw[->,line width = 0.5mm](no3)--node[below]{$a_{43}$}(no4);
    % \draw[->,line width = 0.5mm](no4)--node[above]{$0.7$}(no5);
    \draw[->,line width = 0.5mm](no4)--node[below]{$a_{54}$}(no5);
    \draw[->,line width = 0.5mm](no5)--node[below]{$a_{65}$}(no6);
    % \draw[->,line width=0.5mm](no5) edge[bend right] node [above] {$0/0.5$}(no1);
    \draw[->,line width=0.5mm](no6) edge[bend right] node [below] {$a_{26}$}(no2);
    
    \node[] (i1) at (0,-1){};
    \node[] (i2) at (3,-1){};
    
    \draw[->,red,line width=0.5mm](i1) -- (no1);
    \draw[->,red,line width=0.5mm](i2) -- (no3);
  \end{tikzpicture}
  \caption{Example of a $6$-node stem-bud network with inputs at node
    $1$ and node $3$ with junction at node $2$. The structure of ECM
    for this particular network is shown in Figure
    \ref{pat2}.}\label{6-node-1}
\end{figure}

Following Theorem~\ref{thm31}, Table~\ref{table-11} describes the
properties of the edge centrality matrix as the junction node $y$ goes
from $0$ to $5$, including the list of sub- and super-diagonals having
non-zero elements. As $y$ moves towards the end node $n$, the bud
length decreases, but more edges (existing as well as non-existing)
become influential.  Note that the structure of the ECMs is the same
irrespective of the number of inputs and the performance
metric. Figure~\ref{fig:sparsity-1} shows the corresponding sparsity
patterns of ECMs for each network.

\begin{table}[htb]
  \begin{center}
    \setlength{\extrarowheight}{1.6pt}
    \begin{tabular}{|c|c|c|c|c|c|c|c|}
      \hline
      \multicolumn{2}{|c|}{Junction node $y$} & $0$ & $1$ & $2$  &
      $3$ & $4$ & $5$
      \\
      \hline
      \multicolumn{2}{|c|}{Bud length $L_b$} & $\infty$  & $6$  & $5$ & $4$ & $3$ &
      $2$
      \\
      \hline
      \multirow{2}{*}{Sub-diagonals} & $k_{\sub}$ & $0$ & $0$ & $0$ &
      $1$ & $1$ & $2$
      \\ \cline{2-8}
      & $N_{\sub}$ & $\{1\}$ & $\{1\}$ & $\{1\}$ & $\! \{1,5\}\!$  &
      $\! \{1,4\} \!$  & $\! \{1,3,5\}\!$ 
      \\ 
      \hline	
      \multirow{2}{*}{Super-diagonals} & $k_{\sup}$ & $0$ & $1$ & $1$
      & $1$ & $2$ & $3$\\
      \cline{2-8}
      & $N_{\sup}$ & $\emptyset$ & $\{5\}$ & $\{4 \}$ & $\{3\}$  &
      $\! \{2,5\} \! $
      & $\! \{1,3,5\} \!$ 
      \\ 
      \hline      	    	     	    	
    \end{tabular} 
  \end{center}
  \caption{Structure of  ECM for a $6$-node stem-bud network as a
    function of the  junction node location. Columns correspond, in
    order, to plots (a)-(f) in Figure~\ref{fig:sparsity-1}. }\label{table-11}
\end{table}

\begin{figure}[htb]
  \centering
  \subfloat[$y=0$]{\label{pat0}\includegraphics[width=0.33\columnwidth]{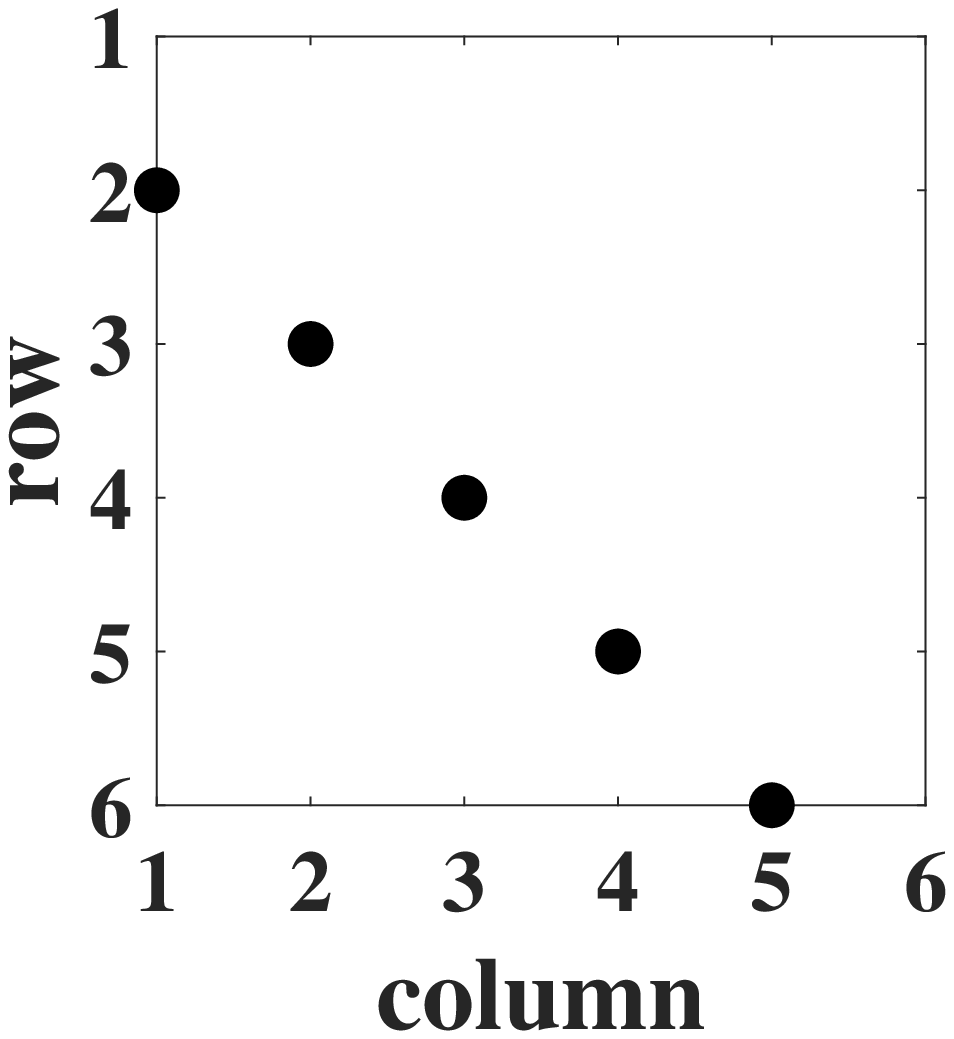}
  }
  \subfloat[$y=1$]{\label{pat1}\includegraphics[width=0.33\columnwidth]{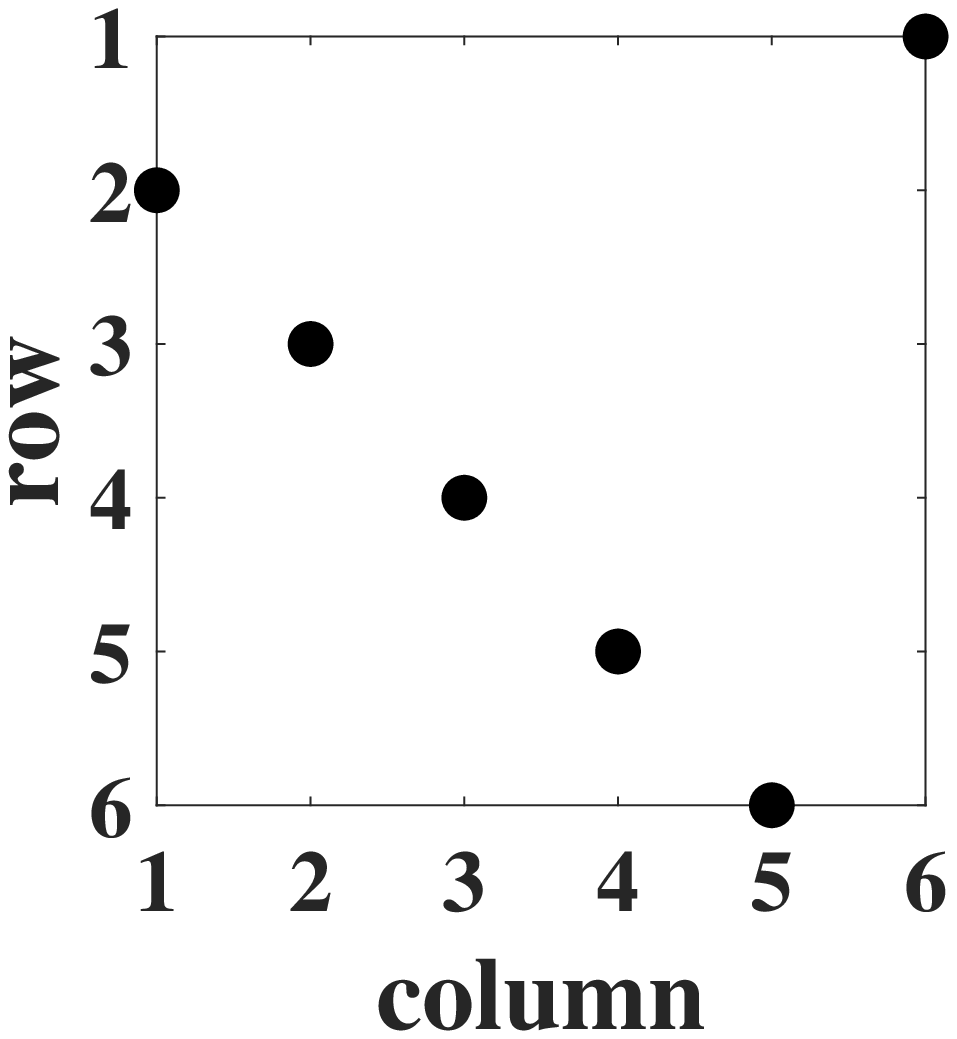}}
  \subfloat[$y=2$]{\label{pat2}\includegraphics[width=0.33\columnwidth]{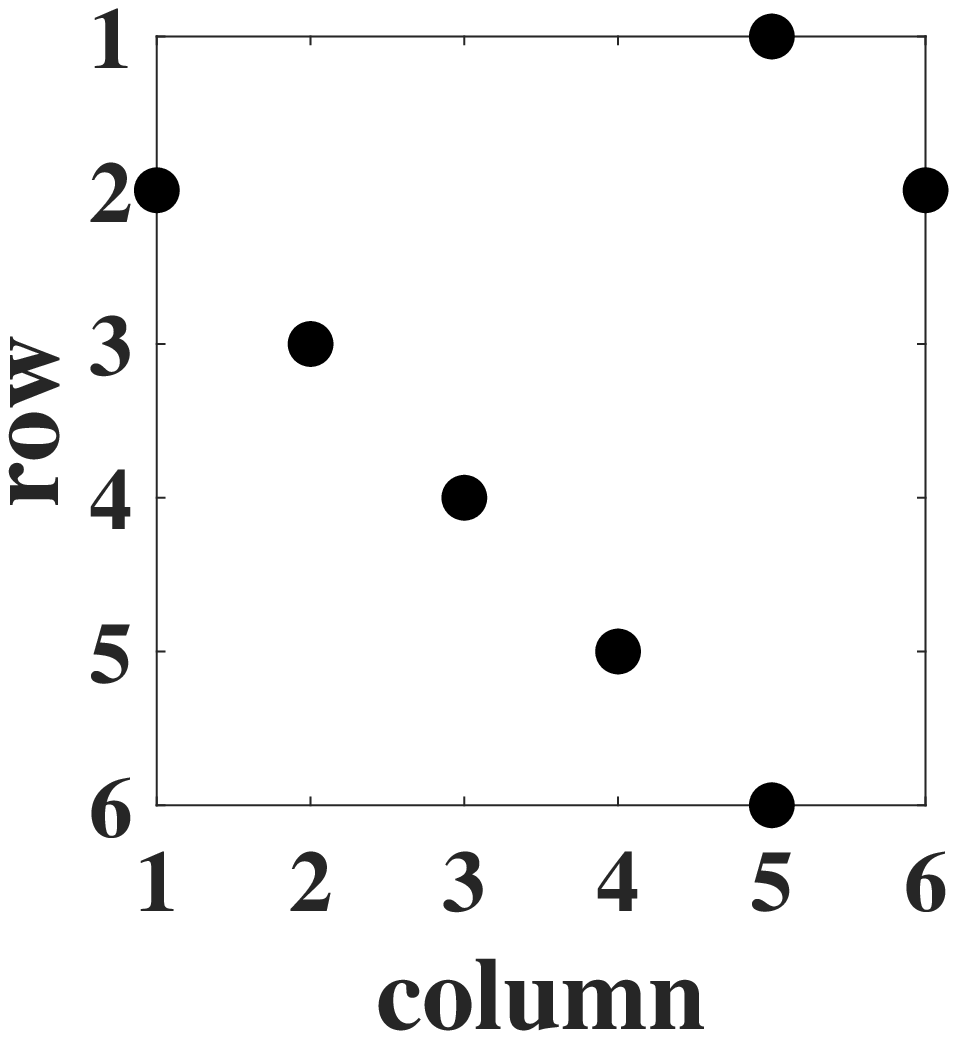}
  }
  \\
  \subfloat[$y=3$]{\label{pat3}\includegraphics[width=0.33\columnwidth]{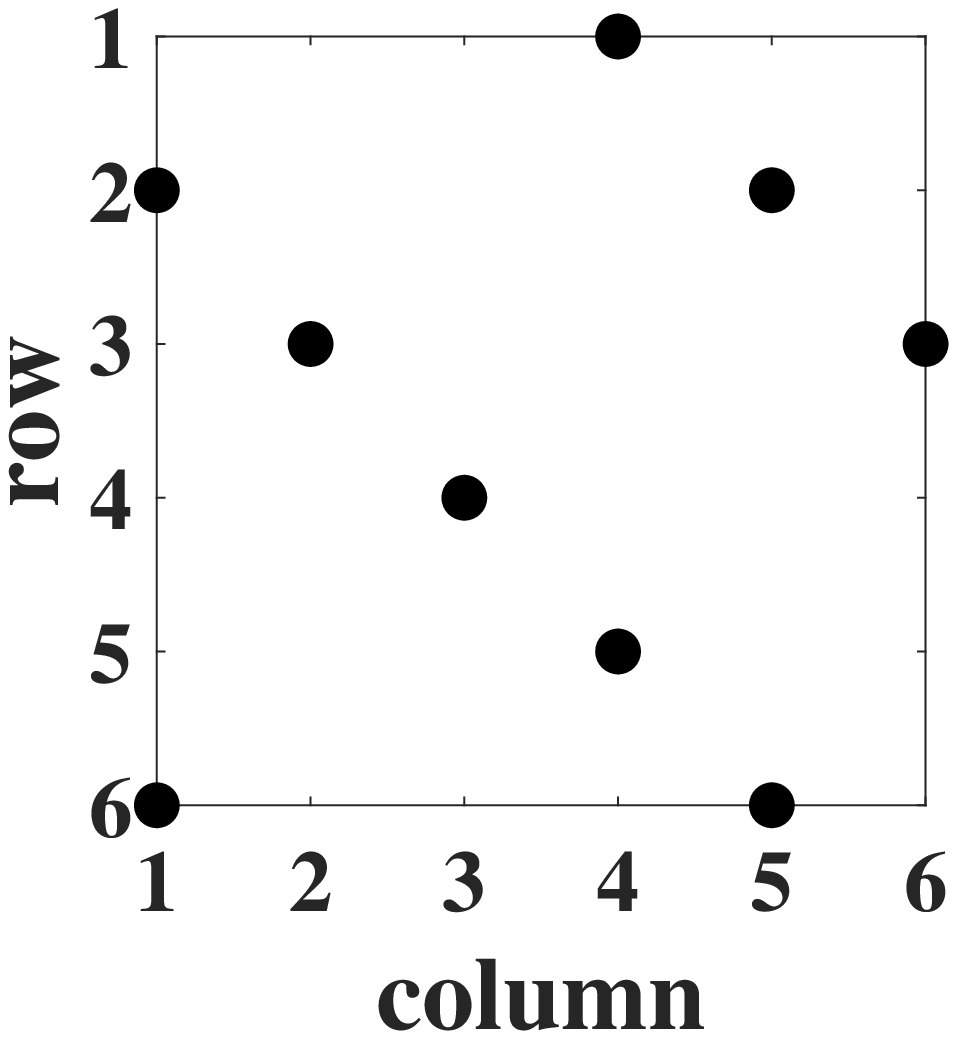}}
  \subfloat[$y=4$]{\label{pat4}\includegraphics[width=0.33\columnwidth]{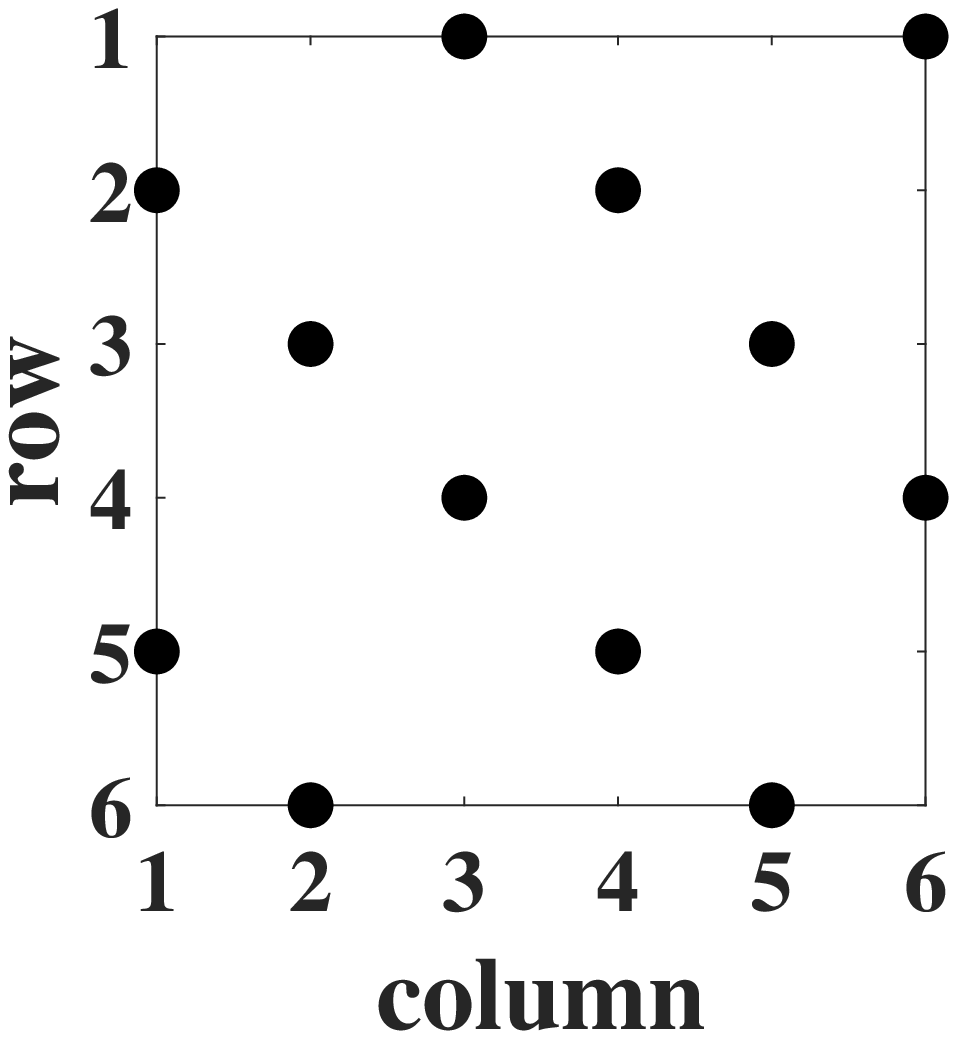}
  }
  \subfloat[$y=5$]{\label{pat5}\includegraphics[width=0.33\columnwidth]{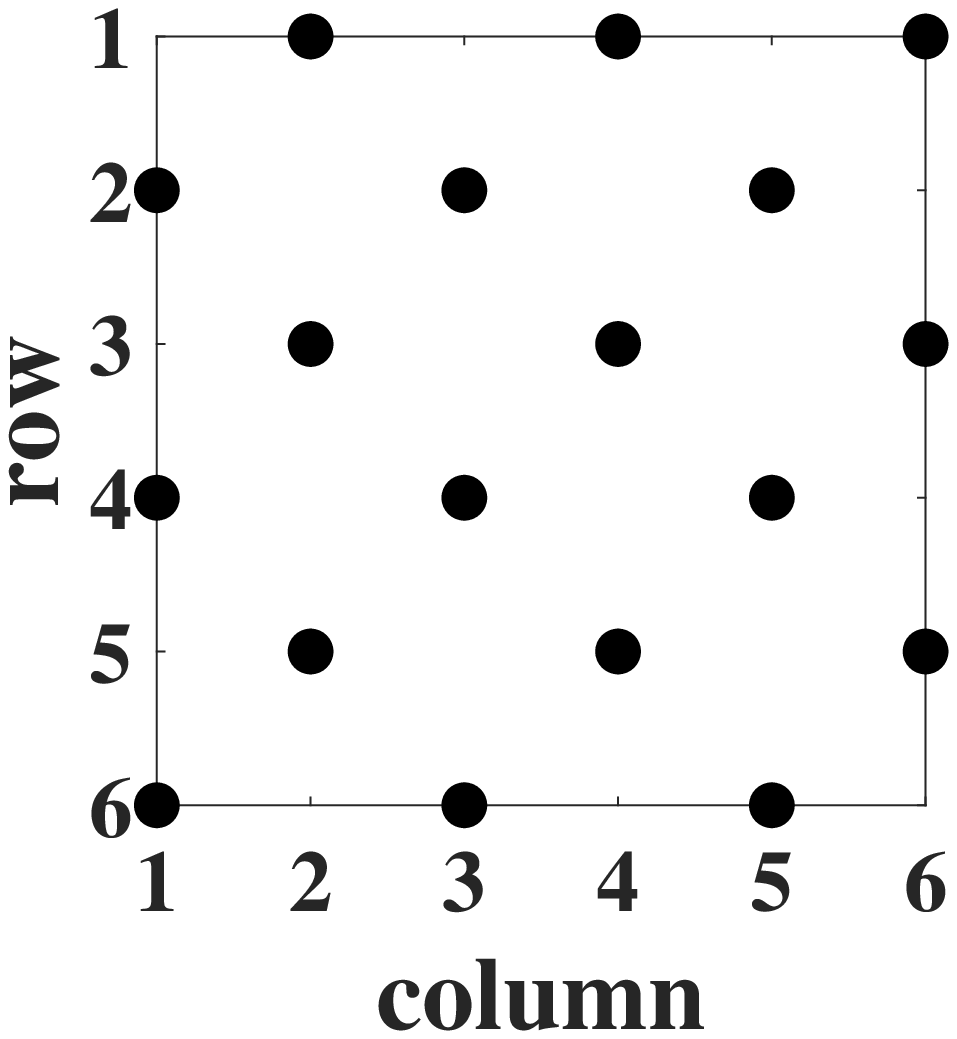}}
  \caption{Sparsity pattern of ECM for a $6$-node stem-bud network as
    a function of junction node location with either trace or log-det
    as performance metric.  `$\bullet$' represents a non-zero
    element.}\label{fig:sparsity-1}%
\end{figure}

We analyze the improvements in network controllability with a single
edge modification (excluding self-loops).  For each possible junction
node $y$, we first compute $\Xc$ and determine its largest
non-diagonal element. According to
Theorem~\ref{thm:edge-weight-bounds-1}, this determines the maximum
allowable perturbation weight, denoted $w_{\max}$, which ensures
stability of the modified network. We set $w=0.99w_{\max}$ as the edge
modification weight and exhaustively search for the global solution,
including the edges not present in the network. Table~\ref{table-12}
shows the results of the comparison with the solution obtained by
modifying the edge with the best edge centrality. We observe that, for
the trace of the Gramian, the global solution and the best edge
centrality solution are the same except for the directed line and when
the junction node is at $3$ or $4$. For the log-det case, the best
edge centrality solution matches the global solution for all the
junction nodes.  This shows the usefulness of the proposed edge
centrality notion even when the modification weight is large.

\begin{table}[htb]
  \begin{center}
    \setlength{\extrarowheight}{1.6pt}
    \begin{tabular}{|c|c|c|c|c|c|c|c|}
      \hline
      \multicolumn{2}{|c|}{\backslashbox{Metric}{\vspace{-5pt} $y$}} & $0$ & $1$ & $2$  & $3$ & $4$& $5$ \\
      \hline   
      % \multicolumn{2}{|c|}{$w_{\max}$} & $1.11$ & $0.83$ & $0.81$&
      % $0.73$& $0.66$& $0.44$ \\
      % \hline
      \multicolumn{2}{|c|}{$w$} & $1.1$  & $0.91$ & $0.89$& $0.90$& $0.82$& $0.55$	 \\
      \hline
      \multirow{3}{*}{Trace} & $f_I$ & $4.63$ & $4.90$&$4.83$ & $4.84$ & $4.81$  & $4.87$  \\ \cline{2-8}
      & $f_{EC}$ & $11$ & $11.7$ & $11.4$ & $13$ & $9.8$ & $7.9$ \\ 
      \cline{2-8}
      & $f_{EX}$ & $16$ & $11.7$ & $11.4$ & $14.3$ & $11.6$ & $7.9$ \\
      \hline    
      \multirow{3}{*}{Log-det} & $f_I$ & $-2.7$ & $-2.39$&$-2.44$ & $-2.4$ & $-2.39$  & $-1.96$  \\ \cline{2-8}
      & $f_{EC}$ & $2.5$ & $3.2$ & $2.9$ & $3.2$ & $3$ & $1.2$ \\ 
      \cline{2-8}
      & $f_{EX}$ & $2.5$ & $3.2$ & $2.9$ & $3.2$ & $3$ & $1.2$ \\ 
      \hline	     	    	
    \end{tabular} 
  \end{center}
  \caption{Improvement  in performance metrics with single-edge modification for
    a $6$-node stem-bud network as a function of  junction node
    location. $w$ denotes the edge modification weight.}\label{table-12}
\end{table}

\subsection{Random Erd\H{o}s-R\'{e}nyi Networks}\label{ex-2}
Here, we show the efficacy of ECM and the utility of the performance
metric bounds obtained in Section~\ref{global-1} on $1000$ random
Erd\H{o}s-R\'{e}nyi (ER) networks~\cite{PE-AR:60} without self-loops.
We consider networks with $n = 100$ nodes, $m=30$ input nodes, and
edge probability of $0.35$. For uniformity, the spectral radius
$\rho(A)$ of each ER network belongs to the interval $(0.85,0.90)$.
The time horizon is $T=n$ and we neglect self-loops.  Using
Theorem~\ref{thm:edge-weight-bounds-1}, we compute $3.82$ as the bound
on the edge modification weight beyond which stability of the random
ER networks cannot be guaranteed.  We compute the improvement $f_{EC}$
in the performance metrics obtained by selecting the edge with the
best centrality, and compare this improvement with that obtained
using exhaustive search, $f_{EX}$.

\begin{table}[!htb]
  \begin{center}
    \setlength{\extrarowheight}{1.6pt}
    \begin{tabular}{|c|c|c|c|c|c|c|c|c|}
      \hline
      \multicolumn{2}{|c|}{\backslashbox{Metric}{\vspace{-5pt}$w$}} &
      $0.5$ & $1$ & $1.5$  & $2$ & $2.5$ & $3$ & $3.5$
      \\ 
      \hline
      \multicolumn{2}{|c|}{Avg $\vert f_I \vert$} & $32$  & $32$  &
      $32$ & $32$ & $32$ & $32$ & $32$ 
      \\
      \hline
      \multirow{3}{*}{{$\%$ $\frac{f_{XC}}{\vert f_I\vert}$}} & Worst
      & $0.06$ & $0.32$ & $0.92$ & $2.16$ & $4.80$ & $10.4$ &$23.8$\\
      \cline{2-9} 
      & Best & $0$ & $0$ & $0$ & $0$  & $0$  & $0$  & $0$ \\ \cline{2-9}
      & Avg & $0.00$ & $0.04$ & $0.18$ & $0.61$ & $1.63$ & $3.91$
      &$9.17$ 
      \\
      \hline  
      \multicolumn{2}{|c|}{Avg $f_{XC}$} & $0.00$  & $0.01$  & $0.06$
      & $0.20$ & $0.51$ & $1.25$ & $2.93$ 
      \\
      \hline
      \multirow{3}{*}{{$\%$ $\frac{f_{CI}}{\vert f_I\vert}$}} & Worst
      & $1.26$ & $4.26$ & $9.03$ & $15.6$  & $24.0$  & $34.1$  &
      $46.2$ 
      \\ \cline{2-9}
      & Best & $1.90$ & $5.97$ & $12.6$ & $22.5$ & $37.5$ & $61.5$ &
      $108$
      \\ 
      \cline{2-9}
      & Avg & $1.57$ & $5.08$ & $10.6$ & $18.4$ & $28.6$ & $41.9$
      &$59.1$ 
      \\
      \hline 
      \multicolumn{2}{|c|}{ Avg $f_{CI}$} & $0.50$  & $1.63$  & $3.39$
      & $5.90$ & $9.15$ & $13.4$ & $18.9$ 
      \\
      \hline	 
      \multicolumn{2}{|c|}{{Avg $\%$ $\frac{\vert
            f_{EX}-f_{g}\vert}{\vert f_I\vert}$}} &  $1.46$ & $3.12$
      & $5.63$ & $7.37$ & $8.91$  & $10.6$ & $14.4$ 
      \\
      \hline	      	    	     	    	
    \end{tabular} 
  \end{center}
  \caption{Performance analysis with trace of Gramian as
    objective for 1000 Random Erd\H{o}s-R\'{e}nyi networks. Here,
    $f_{XC}=\vert f_{EX}-f_{EC}\vert$ and $f_{CI}=\vert 
    f_{EC}-f_{I}\vert$.}\label{table-21}
\end{table}

\begin{table}[!htb]
  \begin{center}
    \setlength{\extrarowheight}{1.6pt}
    \begin{tabular}{|c|c|c|c|c|c|c|c|c|}
      \hline
      \multicolumn{2}{|c|}{\backslashbox{Metric}{\vspace{-5pt}$w$}} &
      $0.5$ & $1$ & $1.5$  & $2$ & $2.5$ & $3$ & $3.5$
      \\
      \hline
      \multicolumn{2}{|c|}{Avg $\vert{f_I}\vert$  } & $553$  & $553$
      & $553$   & $553$   & $553$   & $553$   & $553$     
      \\
      \hline
      \multirow{3}{*}{{$\%$ $\frac{f_{XC}}{\vert f_I\vert}$}  } & Worst & $0.56$ & $0.67$ &
      $0.72$ & $0.76$ & $0.77$ & $0.80$ & $0.83$
      \\ \cline{2-9}
      & Best &$0$ & $0$ & $0$ & $0$  & $0$  & $0$  & $0$  
      \\ 
      \cline{2-9}
      & Avg & $0.19$ & $0.24$ & $0.26$ & $0.29$ & $0.30$ & $0.32$ &
      $0.33$ 
      \\ 
      \hline 
      \multicolumn{2}{|c|}{  Avg $f_{XC}$} & $1.05$  & $1.33$  & $1.44$
      & $1.60$ & $1.66$ & $1.77$ & $1.83$ 
      \\
      \hline 
      \multirow{3}{*}{{$\%$ $\frac{f_{CI}}{\vert f_I\vert}$}  } & Worst & $1.16$ & $1.63$ &
      $1.95$ & $2.20$  & $2.43$  & $2.62$  & $2.79$ 
      \\ \cline{2-9}
      & Best & $2.59$ & $3.36$ & $3.86$ & $4.24$ & $4.54$ & $4.81$ &
      $5.03$ 
      \\ 
      \cline{2-9}
      & Avg & $1.88$ & $2.63$ & $3.12$ & $3.50$ & $3.81$ & $4.08$ &
      $4.32$ 
      \\ 
      \hline  
      \multicolumn{2}{|c|}{Avg $f_{CI}$  } & $10.4$  & $14.5$  & $17.3$
      & $19.4$ & $21.0$ & $22.6$ & $23.9$ 
      \\ 
      \hline	
      \multicolumn{2}{|c|}{ {Avg $\%$ $\frac{\vert
            f_{EX}-f_{g}\vert}{\vert f_I\vert}$}} & $1.04$  & $1.36$
      & 
      $2$ & $1.81$ & $1.49$ & $1.13$ & $0.94$
      \\ 
      \hline	    	
    \end{tabular}
  \end{center}
  \caption{Performance analysis with log-det of Gramian as
    objective  for 1000 Random Erd\H{o}s-R\'{e}nyi networks. Here,
    $f_{XC}=\vert f_{EX}-f_{EC}\vert$ and $f_{CI}=\vert 
    f_{EC}-f_{I}\vert$.} 
  \label{table-22}
\end{table}

The results are displayed in Table~\ref{table-21} for trace of Gramian
and Table~\ref{table-22} for log-det of Gramian. Regarding the
performance of the ECM-based solution versus exhaustive search for the
trace of Gramian, we observe that, as the modification weight
increases, the worst-case value for $\%\frac{f_{XC}}{\vert f_I\vert}$
increases.  Such behavior is expected as $f_{EC}$ is computed using
the ECM according to the first-order effects of edge
modification. However the average $\%\frac{f_{XC}}{\vert f_I\vert}$
remains significantly lower, less than half the worst case.  Regarding
improvement, we see that the average $\%\frac{f_{CI}}{\vert f_I\vert}$
increases with the modification weight and is more prominent for the
trace as objective with an average more than $50\%$ for $w=3.5$. In
the case of log-det of Gramian as objective, even though the values of
$f_{XC}$ and $f_{CI}$ are significant, the percentage values are small
as the initial objective value $f_I$ is large. Note that even for
higher modification weights, in some cases the use of edge centrality
leads to the global solution.
% The above discussion shows us that using edge centrality (by
% computing ECM) is a good approximation for the global edge
% modification (often computed by exhaustive search).
Note that the exhaustive search procedure is weight dependent, while
the ECM-based one is not. Thus, the computation of the global solution
exhaustively involves a significantly larger computational cost than
the ECM-based one as the network size increases.  For the considered
ER networks, the cost of computing the best edge using ECM is
approximately one-tenth the cost of exhaustive search for a
given weight.

We also illustrate the capability of the network performance bounds to
capture the global maxima of the corresponding metric.  For the trace
of the Gramian, we employ~\eqref{peq-thm4-00b} in Theorem~\ref{thm4}
to compute a global upper bound $h_g^{\tr}$.  Similarly, for the
log-det of the Gramian, we employ~\eqref{peq-thm6-0b} with
$\tau=h_g^{\tr}$ in Theorem~\ref{thm6} to compute the global upper
bound
\begin{align*}
  h_g^{\log\det}=2n\log \Big(\frac{h_g^{\tr}}{\sqrt{n}}\Big).
\end{align*}
We observe that these bounds of the true global optimum $f_{EX}$
(obtained through exhaustive search) tend to be conservative, so we
refine them by computing estimates $f_g^{\tr}$ and $f_g^{\log\det}$ of
the value of $f$ corresponding to the global edge modification
solution as follows.

We take the $30$ edges with the largest ECM.  For each of these edges,
we modify its weight by $w>0$ and consider the resulting adjacency
matrix $\tA$. We then compute $\Wc_{\tA}$, $\tXc=(I-\tA)^{-1}$, and
$\Hc_{\tXc}=\tXc BB^{\top} \tXc^{\top}$. According
to~\eqref{peq-thm4-00a} in Theorem~\ref{thm4}, we have
$\fb=\tr(\Wc_{\tA}) \le \hb=\tr(\Hc_{\tXc})$.  We take all the data
pairs $(\fb,\hb)$ obtained in this way and fit a curve using the
curve-fitting function \textit{`fit'} and model type \textit{`rat55'}
of MATLAB. Using the obtained curve-fit we compute the estimate
$f_g^{\tr}$ corresponding to $h_g^{\tr}$.  We run this procedure for
different modification weights $w$, from $0.5$ to $3.5$ with an
increment of $0.5$ for all considered ER networks.

% Now we form two vectors $\fb$ and $\hb$ of size $N_S$ as follows.
% Let the $k^{th}$ selected edge be $i\longrightarrow j$ with $1\leq
% k\leq N_S$. Construct the modified weighted adjacency matrix
% $A_k=A+w e_je_i^{\top}$ for the given edge modification weight
% $w>0$. Compute $\Wc_{A_k}$, $\Xc_{k}=(I-A_k)^{-1}$,
% $\Hc_{\Xc_k}=\Xc_k BB^{\top}\Xc_k^{\top}$. Now we form two vectors
% $\fb$ and $\hb$ of size $N_S$ as follows.

We follow the same procedure for the log-det of the Gramian, computing
the appropriate bounds, now resorting to Theorem~\ref{thm6}, as
follows.  As the trace of the Gramian for all $1000$ ER networks is
greater than $1$, we use~\eqref{peq-thm6-0a} with $\sigma=2$.  For
each of the $30$ edges with the largest ECM we obtain
$\fb=\log\det(\Wc_{\tA}) \leq
\hb=2n\log\Big(\frac{\tr(\Hc_{\tXc})}{\sqrt{n}}\Big)$.  As in the case
of the trace, we take all the data pairs $(\fb,\hb)$ obtained in this
way and fit a curve, which we use to compute the estimate
$f_g^{\log\det}$ corresponding to $h_g^{\log\det}$.

% \PVC{ Now in Theorem~\ref{thm6} for $A_k$,
% ${\mu_1}_k=\tr(\Wc_{A_k})$.  In the upper bound of
% Theorem~\ref{thm6} dividing the numerator and the denominator by
% $\lo$ and using $ \lo_k \leq {\mu_1}_k (\text{implying}\;
% \log(\lo_k)\leq \log({\mu_1}_k))$, gives us the modified upper bound
% as
% $\frac{n({\mu_1}_k+1)}{n-\frac{{\mu_1}_k}{\lo_k}}\log({\mu_1}_k)$. As
% $1\leq \frac{{\mu_1}_k}{\lo_k}\leq n$, using
% $\frac{{\mu_1}_k}{\lo_k}= n$ gives us the trivial upper bound
% $\infty$. Hence, we use $\frac{{\mu_1}_k}{\lo_k}=1,\; {\mu_1}_k\leq
% \tr(\Hc_{\Xc_k})$ and assign
% $\hb_k=\frac{n(\tr(\Hc_{\Xc_k})+1)}{n-1}\log(\tr(\Hc_{\Xc_k}))$. Note
% that unlike in the case of the trace of Gramian as discussed above,
% $\hb_k$ here is not the upper bound on the performance metric
% log-det of Gramian.
%   % 
%   Now using the upper bound on the trace of Gramian, i.e., $h_T$, we
%   also assign $h_g=\frac{n(h_T+1)}{n-1}\log(h_T)$.  }

% Using \eqref{peq-ex2.2} and \eqref{peq-ex2.2} in \eqref{peq-ex2.1},
% the difference between the estimated global solution and the actual
% global solution (exhaustive search solution) was computed.
The last rows of Tables~\ref{table-21} and~\ref{table-22} list the
average percentage values of the error, relative to the initial metric
value, between the global solution computed by exhaustive search and
the estimate of the global solution obtained as described above. This
error is less than $\approx 15\%$ for the trace of Gramian and less
than $\approx 2\%$ for the log-det of Gramian. Note that, for creating
the estimate of the global solution, we have used just $30$ of $9900$
edges, i.e., $0.3\%$ of the total number of edges. For the trace of
Gramian, the error between the estimate and the actual global solution
increases with the modification weight. This may be due to the
nonlinear dependence of $h^{\tr}_g$ on the modification weight~$w$.

\section{Conclusions}\label{sec:conclusions}
We have studied network systems modeled as controlled linear-time
invariant systems and addressed the question of characterizing the
importance for controllability of individual agent connections.  We
have considered a suite of performance metrics based on the spectral
properties of the associated controllability Gramian and formally
characterized the effect on them of perturbing the weights of all
possible edges, including those not present in the network. This
analysis has led us to propose a novel notion of edge centrality as a
way of measuring the first-order variation in network performance.
The edge centrality matrix (ECM) encodes important physically
realizable quantities and is additive on the set of inputs, meaning
that it captures the specific contribution to each edge's centrality
of the presence of any given actuator.  We have fully characterized
the structure of ECM for the class of directed stem-bud networks,
which possess a diagonal controllability Gramian, and shown that it
only depends on the network size and the length of its bud, with
possible non-zero entries only at specific sub/super-diagonals.
Finally, given an edge modification weight, we have developed novel
bounds on the value of the trace, trace inverse, and log-det of the
Gramian before and after single-edge modifications. We have also
determined bounds on the weight that ensure the resulting modified
network remains stable. Numerical examples illustrate the usefulness
of the proposed edge centrality notion and the derived results.
Future work will employ the proposed notions in the algorithmic
synthesis of stable networks via edge modification with enhanced
guarantees on convergence and controllability and the mitigation of
the effect of malicious attacks in network strategic scenarios,
explore the design of distributed schemes for the computation of the
proposed edge centrality measures, and develop bounds of the impact on
network performance of multiple-edge modifications.

\appendix
\label{appendix1}

\begin{lemma}\longthmtitle{Upper bound on largest eigenvalue of sum of
    two rank-one matrices}\label{lem4}
  For $v\in \real^n$, let $U=v e_i^{\top}+e_i v^{\top}$.  Then,
  \begin{align*}
    \lambda_1(U) & =2v^{\top}e_i && \text{if } \rank(U)=1,
    \\
    \lambda_1 (U) & \leq v^{\top}e_i+\Vert v\Vert && \text{if }
    \rank(U)=2.
  \end{align*}
\end{lemma}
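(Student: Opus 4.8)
The plan is to exploit the low rank of $U = v e_i^\top + e_i v^\top$. First I would observe that every column of $U$ lies in $\Span\{v,e_i\}$, that this subspace is $U$-invariant, and that $Uw=0$ whenever $w$ is orthogonal to both $v$ and $e_i$; since $U$ is symmetric it is orthogonally diagonalizable, with $0$ an eigenvalue of multiplicity at least $n-2$ and at most two nonzero eigenvalues, namely those of $U$ restricted to $\Span\{v,e_i\}$. The argument then splits according to $\dim\Span\{v,e_i\}$, matching the two cases of the statement.

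If $\rank(U)=1$, then $v$ and $e_i$ are linearly dependent; as $e_i\neq 0$ one can write $v=c\,e_i$, and $U\neq 0$ forces $c\neq 0$. Using $\Vert e_i\Vert=1$ gives $v^\top e_i=c$, and $U=2c\,e_ie_i^\top$ has unique nonzero eigenvalue $2c=2v^\top e_i$, which is what is reported as $\lambda_1(U)$ (in the situations where the lemma is invoked one has $v=\Xc e_j\geq 0$, hence $v^\top e_i\geq 0$, so this is indeed the largest eigenvalue).

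If $\rank(U)=2$, then $v$ and $e_i$ are independent, and I would compute the two nonzero eigenvalues by substituting a candidate eigenvector $x=\alpha v+\beta e_i$ into $Ux=\lambda x$. Using $e_i^\top x=\alpha(v^\top e_i)+\beta$ and $v^\top x=\alpha\Vert v\Vert^2+\beta(v^\top e_i)$ and matching coefficients of the independent vectors $v$ and $e_i$, the pair $(\alpha,\beta)^\top$ must be an eigenvector of $\left(\begin{smallmatrix} v^\top e_i & 1\\ \Vert v\Vert^2 & v^\top e_i\end{smallmatrix}\right)$, whose eigenvalues are $v^\top e_i\pm\Vert v\Vert$. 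Hence the spectrum of $U$ is $\{v^\top e_i+\Vert v\Vert,\ v^\top e_i-\Vert v\Vert,\ 0,\dots,0\}$, and by the Cauchy--Schwarz inequality $|v^\top e_i|\le\Vert v\Vert$, so $v^\top e_i-\Vert v\Vert\le 0\le v^\top e_i+\Vert v\Vert$; this identifies $v^\top e_i+\Vert v\Vert$ as $\lambda_1(U)$, yielding the asserted bound (in fact with equality). The whole computation is elementary; the one place needing a little care is this final appeal to Cauchy--Schwarz, to be sure that it is the root $v^\top e_i+\Vert v\Vert$, rather than $v^\top e_i-\Vert v\Vert$ or a zero eigenvalue, that attains $\lambda_1(U)$.
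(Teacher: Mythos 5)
Your proof is correct, and it takes a genuinely different route from the paper's. The paper does not diagonalize $U$ at all: it invokes the Wolkowicz--Styan bound $\lambda_1(U)\leq \frac{\tr(U)}{r_U}+\sqrt{(r_U-1)\bigl(\frac{\tr(U^2)}{r_U}-(\frac{\tr(U)}{r_U})^2\bigr)}$ (their cited Theorem 2.1), computes $\tr(U)=2v^{\top}e_i$ and $\tr(U^2)=2(v^{\top}e_i)^2+2v^{\top}v$, and substitutes $r_U\in\{1,2\}$, which immediately yields both displayed lines. Your argument instead restricts $U$ to the invariant subspace $\Span\{v,e_i\}$ and reads off the exact spectrum $\{v^{\top}e_i\pm\Vert v\Vert,0,\dots,0\}$ from the $2\times 2$ coefficient matrix; this is slightly longer but fully self-contained (no external eigenvalue bound needed) and proves the stronger statement that the rank-two inequality is in fact an equality. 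Your parenthetical caveats are also well taken: in the rank-one case the claimed identity $\lambda_1(U)=2v^{\top}e_i$, and in the rank-two case the identification of $v^{\top}e_i+\Vert v\Vert$ as the \emph{largest} eigenvalue, both implicitly use $v^{\top}e_i\geq 0$ (or Cauchy--Schwarz with $\Vert e_i\Vert=1$), which holds where the lemma is applied since $v=\Xc e_j\geq 0$; the paper's proof glosses over this same point in the rank-one case.
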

\begin{proof}
  Denote $\rank(U)=r_U$.  From \cite[Theorem 2.1]{HW-GPHS:80},
  \begin{align}
    \label{peq-lem4-2}
    \lambda_1(U)\leq \frac{\tr(U)}{r_U} +
    \sqrt{(r_U-1)\Bigg(\frac{\tr(U^2)}{r_U}-\Bigg(\frac{\tr(U)}{r_U}\Bigg)^2\Bigg)}.
  \end{align}
  Now,
  \begin{align}
    \label{peq-lem4-3}
    \notag \tr(U) & =\tr(ve_i^{\top})+\tr(e_iv^{\top}) =
    2v^{\top}e_i,
    \\
    \tr(U^2) &=2(v^{\top}e_i)^2+2v^{\top}v.
  \end{align}
  As $\rank(ve_i^{\top})=\rank(e_iv^{\top})=1$, from \cite[Section
  0.4.5]{RAH-CRJ:12}, $\rank(U)\leq 2$. Using \eqref{peq-lem4-3} and
  $\Vert v\Vert=\sqrt{v^{\top}v}$ in \eqref{peq-lem4-2} for
  $r_U=1$ and $r_U=2$ gives the required result.
\end{proof}

\begin{lemma}\longthmtitle{Bounds on  norm of positive definite
    matrix}\label{lem5} 
  Let $\Zc$ be a $n\times n$ symmetric positive definite matrix and $
  \lo \geq \lambda_1(\Zc)$. Then,
  \begin{align*}
    \frac{(\tr(\Zc))^2}{n}\leq \Vert \Zc \Vert^2 \leq \lambda_1(\Zc)
    \tr(\Zc) \leq \lo \tr(\Zc) .
  \end{align*}
\end{lemma}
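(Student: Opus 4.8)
The plan is to reduce all three inequalities to elementary facts about the eigenvalues of $\Zc$. Since $\Zc$ is symmetric and positive definite, write its spectral decomposition with orthonormal eigenvectors and eigenvalues $\lambda_1 \geq \lambda_2 \geq \cdots \geq \lambda_n > 0$, where $\lambda_1 = \lambda_1(\Zc)$. Because the Frobenius norm is unitarily invariant, $\Vert \Zc \Vert^2 = \tr(\Zc^2) = \sum_{i=1}^n \lambda_i^2$, while $\tr(\Zc) = \sum_{i=1}^n \lambda_i > 0$. With these two identities in hand, each of the claimed bounds becomes a one-line inequality about the positive reals $\lambda_1,\dots,\lambda_n$.

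For the leftmost inequality I would invoke the Cauchy--Schwarz inequality (equivalently the QM--AM / power-mean inequality) applied to the vectors $(\lambda_1,\dots,\lambda_n)$ and $(1,\dots,1)\in\real^n$, which gives $\big(\sum_{i=1}^n \lambda_i\big)^2 \leq n \sum_{i=1}^n \lambda_i^2$; dividing by $n$ yields $(\tr(\Zc))^2/n \leq \Vert \Zc \Vert^2$. For the middle inequality, note that $0 < \lambda_i \leq \lambda_1$ for every $i$, so multiplying through by $\lambda_i > 0$ gives $\lambda_i^2 \leq \lambda_1 \lambda_i$; summing over $i$ produces $\Vert \Zc \Vert^2 = \sum_i \lambda_i^2 \leq \lambda_1 \sum_i \lambda_i = \lambda_1(\Zc)\,\tr(\Zc)$. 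Finally, since $\tr(\Zc) > 0$ and by hypothesis $\lo \geq \lambda_1(\Zc)$, multiplying the inequality $\lo \geq \lambda_1(\Zc)$ by $\tr(\Zc)$ gives $\lambda_1(\Zc)\,\tr(\Zc) \leq \lo\,\tr(\Zc)$, which closes the chain.

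I do not anticipate any real obstacle; the one point requiring a bit of care is to use positive \emph{definiteness} rather than mere semi-definiteness, so that all $\lambda_i$ are strictly positive and $\tr(\Zc) > 0$ — this is what legitimizes both the step $\lambda_i^2 \leq \lambda_1\lambda_i$ (the multiplier has the right sign) and the final comparison after multiplying by $\tr(\Zc)$. An alternative, matrix-level phrasing of the middle bound would observe that $\Zc \preceq \lambda_1(\Zc) I$, hence $\Zc^2 = \Zc^{1/2}\Zc\,\Zc^{1/2} \preceq \lambda_1(\Zc)\,\Zc$, and take traces; but the direct eigenvalue computation above is cleaner and self-contained.
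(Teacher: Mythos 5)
Your proof is correct and follows essentially the same route as the paper: both arguments reduce the claim to the two trace inequalities $(\tr(\Zc))^2/n \leq \tr(\Zc^2)$ and $\tr(\Zc^2) \leq \lambda_1(\Zc)\tr(\Zc)$, the only difference being that the paper cites these as standard facts (from Bernstein's matrix handbook) while you derive them directly from the spectral decomposition via Cauchy--Schwarz and $\lambda_i^2 \leq \lambda_1 \lambda_i$. Your version is self-contained and equally valid.
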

\begin{proof}
  From \cite[Proposition 8.4.13]{DSB:09}, $\Vert \Zc
  \Vert^2=\tr(\Zc^2)\leq\lambda_1(\Zc)\tr(\Zc) \leq \lo
  \tr(\Zc)$. From \cite[Fact 8.12.1]{DSB:09},
  $\frac{(\tr(\Zc))^2}{n}\leq \tr(\Zc^2)=\Vert \Zc \Vert^2$.
\end{proof}

\bibliographystyle{IEEEtran}%
\bibliography{alias,JC,Main,Main-add}

% Generated by IEEEtran.bst, version: 1.12 (2007/01/11)
\begin{thebibliography}{10}
\providecommand{\url}[1]{#1}
\csname url@samestyle\endcsname
\providecommand{\newblock}{\relax}
\providecommand{\bibinfo}[2]{#2}
\providecommand{\BIBentrySTDinterwordspacing}{\spaceskip=0pt\relax}
\providecommand{\BIBentryALTinterwordstretchfactor}{4}
\providecommand{\BIBentryALTinterwordspacing}{\spaceskip=\fontdimen2\font plus
\BIBentryALTinterwordstretchfactor\fontdimen3\font minus
  \fontdimen4\font\relax}
\providecommand{\BIBforeignlanguage}[2]{{%
\expandafter\ifx\csname l@#1\endcsname\relax
\typeout{** WARNING: IEEEtran.bst: No hyphenation pattern has been}%
\typeout{** loaded for the language `#1'. Using the pattern for}%
\typeout{** the default language instead.}%
\else
\language=\csname l@#1\endcsname
\fi
#2}}
\providecommand{\BIBdecl}{\relax}
\BIBdecl

\bibitem{PVC-JC:20-acc}
P.~V. Chanekar and J.~Cort\'es, ``Edge centrality matrix: impact of network
  modification on {G}ramian controllability metrics,'' in \emph{{A}merican
  {C}ontrol {C}onference}, Denver, CO, Jul. 2020, pp. 3313--3318.

\bibitem{MOJ:10}
M.~O. Jackson, \emph{Social and Economic Networks}.\hskip 1em plus 0.5em minus
  0.4em\relax Princeton University Press, 2010.

\bibitem{MN:18}
M.~Newman, \emph{Networks}.\hskip 1em plus 0.5em minus 0.4em\relax New York,
  NY, USA: Oxford University Press, Inc., 2018.

\bibitem{MB-CK:15}
M.~Benzi and C.~Klymko, ``On the limiting behavior of parameter-dependent
  network centrality measures,'' \emph{SIAM Journal on Matrix Analysis and
  Applications}, vol.~36, no.~2, pp. 686--706, 2015.

\bibitem{LCF:79}
L.~C. Freeman, ``Centrality in social networks conceptual clarification in
  hawaii nets conferences,'' \emph{Social Networks}, vol.~1, no.~3, pp.
  215--239, 1979.

\bibitem{LCF:77}
------, ``A set of measures of centrality based on betweenness,''
  \emph{Sociometry}, pp. 35--41, 1977.

\bibitem{PB:87}
P.~Bonacich, ``Power and centrality: A family of measures,'' \emph{American
  Journal of Sociology}, vol.~92, no.~5, pp. 1170--1182, 1987.

\bibitem{LK:53}
L.~Katz, ``A new status index derived from sociometric analysis,''
  \emph{Psychometrika}, vol.~18, no.~1, pp. 39--43, 1953.

\bibitem{SB-LP:12}
S.~Brin and L.~Page, ``Reprint of: The anatomy of a large-scale hypertextual
  web search engine,'' \emph{Computer Networks}, vol.~56, no.~18, pp.
  3825--3833, 2012.

\bibitem{MP-MP-LH:13}
M.~Piraveenan, M.~Prokopenko, and L.~Hossain, ``Percolation centrality:
  Quantifying graph-theoretic impact of nodes during percolation in networks,''
  \emph{PLOS One}, vol.~8, no.~1, p. e53095, 2013.

\bibitem{MGE-SPB:98}
M.~G. Everett and S.~P. Borgatti, ``Analyzing clique overlap,''
  \emph{Connections}, vol.~21, no.~1, pp. 49--61, 1998.

\bibitem{GR-ZZ:13}
G.~Ranjan and Z.~Zhang, ``Geometry of complex networks and topological
  centrality,'' \emph{Physica A: Statistical Mechanics and its Applications},
  vol. 392, no.~17, pp. 3833--3845, 2013.

\bibitem{SW-PS:03}
S.~White and P.~Smyth, ``Algorithms for estimating relative importance in
  networks,'' in \emph{ACM SIGKDD Conference on Knowledge Discovery and Data
  Mining}, Washington, D.C., 2003, pp. 266--275.

\bibitem{JMK:99}
J.~M. Kleinberg, ``Authoritative sources in a hyperlinked environment,''
  \emph{Journal of the ACM}, vol.~46, no.~5, pp. 604--632, 1999.

\bibitem{SD-YE-RP:10}
S.~Dolev, Y.~Elovici, and R.~Puzis, ``Routing betweenness centrality,''
  \emph{Journal of the ACM}, vol.~57, no.~4, p.~25, 2010.

\bibitem{EE-JAR-05}
E.~Estrada and J.~A. Rodriguez-Velazquez, ``Subgraph centrality in complex
  networks,'' \emph{Physical Review E}, vol.~71, no.~5, p. 056103, 2005.

\bibitem{MB-CK:13}
M.~Benzi and C.~Klymko, ``Total communicability as a centrality measure,''
  \emph{Journal of Complex Networks}, vol.~1, no.~2, pp. 124--149, 2013.

\bibitem{UB:01-jms}
U.~Brandes, ``A faster algorithm for betweenness centrality,'' \emph{Journal of
  Mathematical Sociology}, vol.~25, no.~2, pp. 163--177, 2001.

\bibitem{FA-MB:15}
F.~Arrigo and M.~Benzi, ``Edge modification criteria for enhancing the
  communicability of digraphs,'' \emph{SIAM Journal on Matrix Analysis and
  Applications}, vol.~37, no.~1, pp. 443--468, 2016.

\bibitem{KF-NEL:16}
K.~Fitch and N.~E. Leonard, ``Joint centrality distinguishes optimal leaders in
  noisy networks,'' \emph{IEEE Transactions on Control of Network Systems},
  vol.~3, no.~4, pp. 366--378, 2016.

\bibitem{GL-CA:19}
G.~Lindmark and C.~Altafini, ``Combining centrality measures for control energy
  reduction in network controllability problems,'' in \emph{{E}uropean
  {C}ontrol {C}onference}, Naples, Italy, Jun. 2019, pp. 1518--1523.

\bibitem{EN-FP-JC:19-jcn}
E.~Nozari, F.~Pasqualetti, and J.~Cort\'es, ``Heterogeneity of central nodes
  explains the benefits of time-varying control scheduling in complex dynamical
  networks,'' \emph{Journal of Complex Networks}, vol.~7, no.~5, pp. 659--701,
  2019.

\bibitem{MS-SB-BB-NM:17}
M.~Siami, S.~Bolouki, B.~Bamieh, and N.~Motee, ``Centrality measures in linear
  consensus networks with structured network uncertainties,'' \emph{IEEE
  Transactions on Control of Network Systems}, vol.~5, no.~3, pp. 924--934,
  2017.

\bibitem{GY-JR-YL-CL-BL:12}
G.~Yan, J.~Ren, Y.~Lai, C.~Lai, and B.~Li, ``Controlling complex networks: How
  much energy is needed?'' \emph{Physical Review Letters}, vol. 108, no.~21, p.
  218703, 2012.

\bibitem{FP-SZ-FB:14}
F.~Pasqualetti, S.~Zampieri, and F.~Bullo, ``Controllability metrics,
  limitations and algorithms for complex networks,'' \emph{IEEE Transactions on
  Control of Network Systems}, vol.~1, no.~1, pp. 40--52, 2014.

\bibitem{THS-FLC-JL:16}
T.~H. Summers, F.~L. Cortesi, and J.~Lygeros, ``On submodularity and
  controllability in complex dynamical networks,'' \emph{IEEE Transactions on
  Control of Network Systems}, vol.~3, no.~1, pp. 91--101, 2016.

\bibitem{CTC:98}
C.~T. Chen, \emph{Linear System Theory and Design}, 3rd~ed.\hskip 1em plus
  0.5em minus 0.4em\relax New York, NY, USA: Oxford University Press, Inc.,
  1998.

\bibitem{PCM-HIW:72}
P.~C. M\"uller and H.~I. Weber, ``Analysis and optimization of certain
  qualities of controllability and observability for linear dynamical
  systems,'' \emph{Automatica}, vol.~8, no.~3, pp. 237--246, 1972.

\bibitem{VT-MAR-GJP-AJ:16}
V.~Tzoumas, M.~A. Rahimian, G.~J. Pappas, and A.~Jadbabaie, ``Minimal actuator
  placement with bounds on control effort,'' \emph{IEEE Transactions on Control
  of Network Systems}, vol.~3, no.~1, pp. 67--78, 2016.

\bibitem{YG-MS-CS-NM:21}
Y.~Ghaedsharaf, M.~Siami, C.~Somarakis, and N.~Motee, ``Centrality in
  time-delay consensus networks with structured uncertainties,''
  \emph{Automatica}, vol. 125, p. 109378, 2021.

\bibitem{SZ-FP:17}
S.~Zhao and F.~Pasqualetti, ``Discrete-time dynamical networks with diagonal
  controllability {G}ramian,'' \emph{IFAC-PapersOnLine}, vol.~50, no.~1, pp.
  8297--8302, 2017.

\bibitem{SZ-FP:19}
------, ``Networks with diagonal controllability {G}ramian: analysis, graphical
  conditions, and design algorithms,'' \emph{Automatica}, vol. 102, pp. 10--18,
  2019.

\bibitem{PVC-EN-JC:21-tnse}
P.~V. Chanekar, E.~Nozari, and J.~Cort\'es, ``Energy-transfer edge centrality
  and its role in enhancing network controllability,'' \emph{{IEEE}
  Transactions on Network Science and Engineering}, vol.~8, no.~1, pp.
  331--346, 2021.

\bibitem{GL-CA:20-ifac}
G.~Lindmark and C.~Altafini, ``On the impact of edge modifications for
  networked control systems,'' \emph{IFAC-PapersOnLine}, vol.~53, no.~2, pp.
  10\,969--10\,974, 2020.

\bibitem{GY-GT-BB-JJS-YYL-ALB:15}
G.~Yan, G.~Tsekenis, B.~Barzel, J.~J. Slotine, Y.~Y. Liu, and A.~L.
  Barab{\'a}si, ``Spectrum of controlling and observing complex networks,''
  \emph{Nature Physics}, vol.~11, no.~9, pp. 779--786, 2015.

\bibitem{KZ-JD-KG:95}
K.~Zhou, J.~Doyle, and K.~Glover, \emph{Robust and Optimal Control}.\hskip 1em
  plus 0.5em minus 0.4em\relax Englewood Cliffs, NJ: Prentice Hall, 1995.

\bibitem{SB-LV:09}
S.~Boyd and L.~Vandenberghe, \emph{Convex Optimization}.\hskip 1em plus 0.5em
  minus 0.4em\relax Cambridge University Press, 2009.

\bibitem{GHG-CFVL:13}
G.~H. Golub and C.~F.~V. Loan, \emph{Matrix Computations}.\hskip 1em plus 0.5em
  minus 0.4em\relax The Johns Hopkins University Press, 2013.

\bibitem{RAH-CRJ:12}
R.~A. Horn and C.~R. Johnson, \emph{Matrix Analysis}.\hskip 1em plus 0.5em
  minus 0.4em\relax Cambridge University Press, 2012.

\bibitem{DSB:09}
D.~S. Bernstein, \emph{Matrix Mathematics}, 2nd~ed.\hskip 1em plus 0.5em minus
  0.4em\relax Princeton University Press, 2009.

\bibitem{TM-NF-MK:82}
T.~Mori, N.~Fukuma, and M.~Kuwahara, ``On the discrete {L}yapunov matrix
  equation,'' \emph{IEEE Transactions on Automatic Control}, vol.~27, no.~2,
  pp. 463--464, 1982.

\bibitem{ZB-GHG:96}
Z.~Bai and G.~H. Golub, ``Bounds for the trace of the inverse and the
  determinant of symmetric positive definite matrices,'' \emph{Annals of
  Numerical Mathematics}, vol.~4, pp. 29--38, 1996.

\bibitem{JD-AZ:09}
J.~Ding and A.~Zhou, \emph{Nonnegative Matrices, Positive Operators, and
  Applications}.\hskip 1em plus 0.5em minus 0.4em\relax World Scientific
  Publishing Company, 2009.

\bibitem{PE-AR:60}
P.~Erd\"os and A.~R\'enyi, ``On the evolution of random graphs,''
  \emph{Publications of the Mathematical Institute of the Hungarian Academy of
  Sciences}, vol.~5, pp. 17--61, 1960.

\bibitem{HW-GPHS:80}
H.~Wolkowicz and G.~Styan, ``Bounds for eigenvalues using traces,''
  \emph{Linear Algebra and its Applications}, vol.~29, pp. 471--506, 1980.

\end{thebibliography}

\vspace*{-8ex}

\begin{IEEEbiography}[{\includegraphics[width=1in,height=1.25in,clip,keepaspectratio]{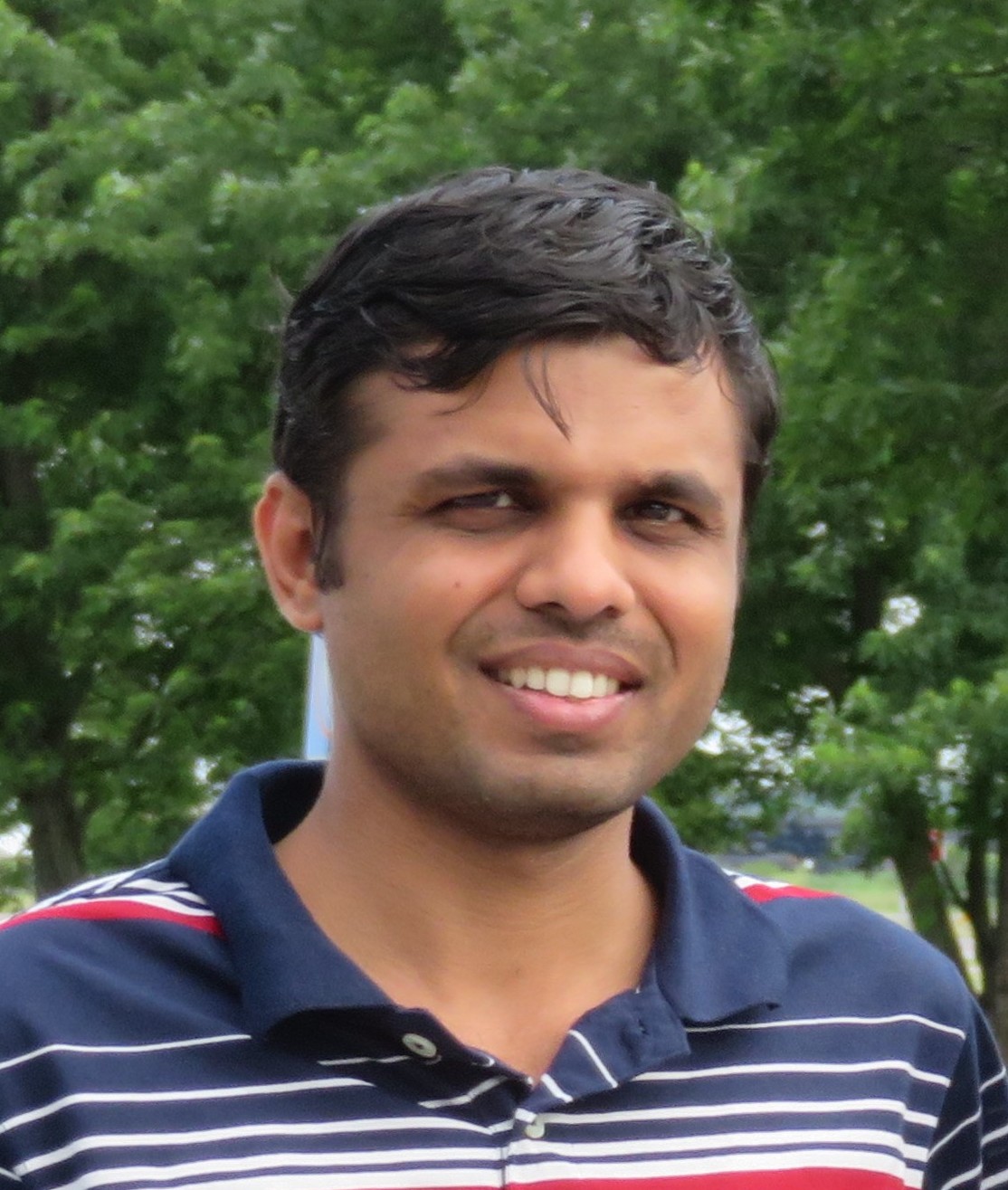}}]{Prasad
    Vilas Chanekar}
  received the undergraduate degree in Mathematics and Computing from
  the Indian Institute of Technology, Kharagpur, India. He then
  obtained M.Sc. and Ph.D. degrees in Mechanical Engineering from the
  Indian Institute of Science, Bangalore, India and the University of
  Maryland, College Park, USA respectively. He is currently a
  Post-Doctoral researcher in the Department of Mechanical and
  Aerospace Engineering at the University of California, San Diego,
  USA. His research interests include dynamics, control and
  optimization of large-scale engineering systems.
\end{IEEEbiography}

\vspace*{-6ex}

\begin{IEEEbiography}[{\includegraphics[width=1in,height=1.25in,clip,keepaspectratio]{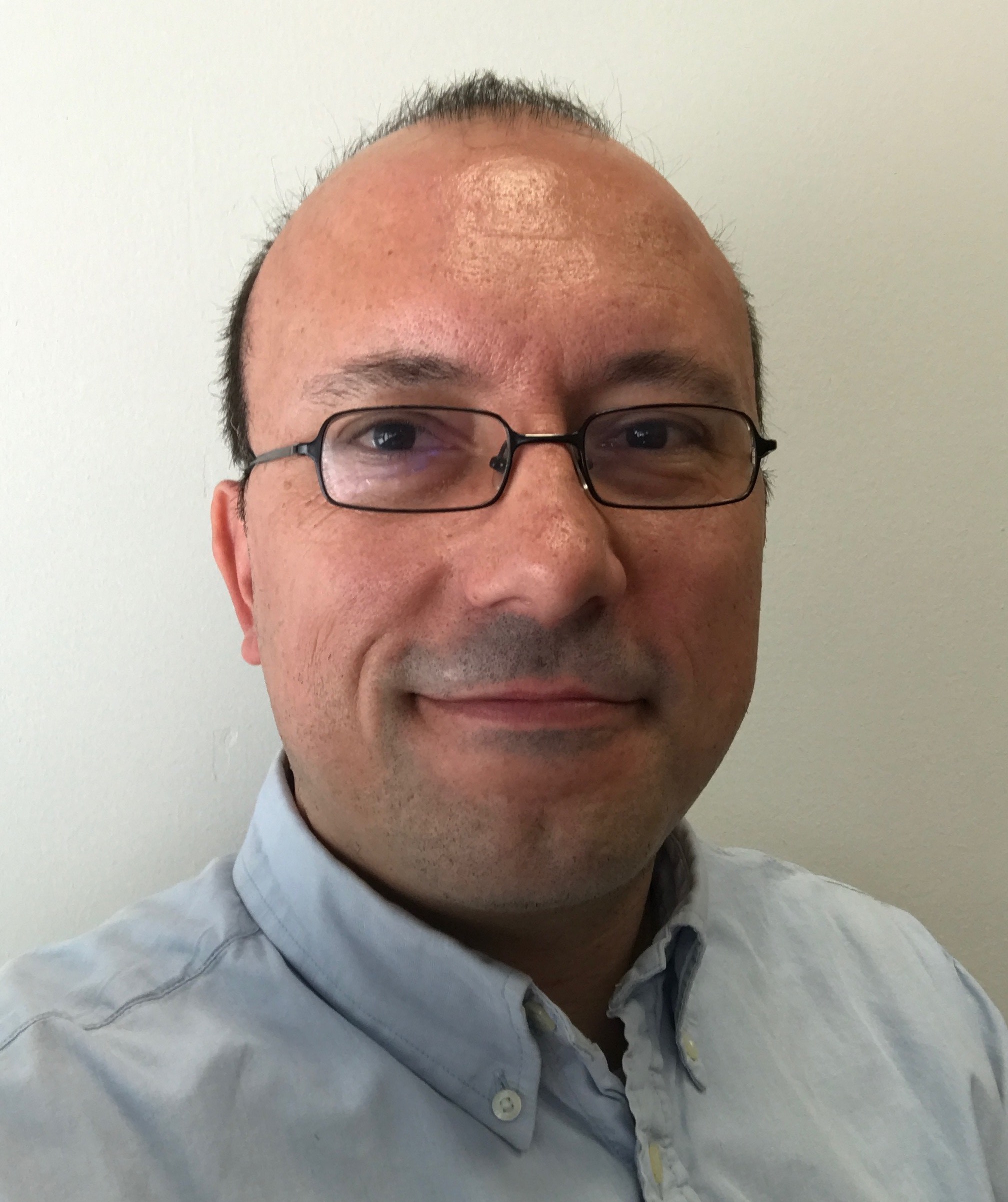}}]{Jorge
    Cort\'{e}s}
  (M'02, SM'06, F'14) received the Licenciatura degree in mathematics
  from Universidad de Zaragoza, Zaragoza, Spain, in 1997, and the
  Ph.D. degree in engineering mathematics from Universidad Carlos III
  de Madrid, Madrid, Spain, in 2001. He held postdoctoral positions
  with the University of Twente, Twente, The Netherlands, and the
  University of Illinois at Urbana-Champaign, Urbana, IL, USA. He was
  an Assistant Professor with the Department of Applied Mathematics
  and Statistics, University of California, Santa Cruz, CA, USA, from
  2004 to 2007. He is currently a Professor in the Department of
  Mechanical and Aerospace Engineering, University of California, San
  Diego, CA, USA.  He is the author of Geometric, Control and
  Numerical Aspects of Nonholonomic Systems (Springer-Verlag, 2002)
  and co-author (together with F. Bullo and S.  Mart{\'\i}nez) of
  Distributed Control of Robotic Networks (Princeton University Press,
  2009).  He is a Fellow of IEEE and SIAM.
  % At the IEEE Control Systems Society, he has been a Distinguished
  % Lecturer (2010-2014), and is currently its Director of Operations
  % and an elected member (2018-2020) of its Board of Governors.
  His current research interests include distributed control and
  optimization, network science, nonsmooth analysis, reasoning and
  decision making under uncertainty, network neuroscience, and
  multi-agent coordination in robotic, power, and transportation
  networks.
\end{IEEEbiography}

\end{document}